\newtheorem{theorem}{Theorem}[section]
\newtheorem{definition}[theorem]{Definition}
\newtheorem{lemma}[theorem]{Lemma}
\newtheorem{fact}[theorem]{Fact}
\newtheorem{observation}[theorem]{Observation}
\newcommand{\R}{\mathbb{R}}
\newcommand{\bigOsymb}{\ensuremath{\mathcal{O}}}
\newcommand{\norm}[2]{\lVert #1 \rVert_{#2}}
\newcommand{\abs}[1]{\lvert #1 \rvert}
\newcommand{\set}[2]{\left\{  #1 \: \middle\vert \: #2 \right\}}
\newcommand{\FS}{\mathit{FS}_{\delta}}
\newcommand{\DP}{\textup{DP}}
\newcommand{\frp}{\mathit{frp}}
\newcommand{\FOV}{\ensuremath{\forall \forall \exists \textup{-OV}}\xspace}
\newcommand{\FOVH}{\forall \forall \exists \mathit{OVH}}
\newcommand{\as}[1]{\mathbf{A}_{#1}}
\newcommand{\bs}[1]{\mathbf{B}_{#1}}
\newcommand{\cs}[1]{\mathbf{C}_{#1}}
\newcommand{\hor}{\mathit{hor}}
\newcommand{\Fr}{Fr{\'e}chet\xspace}
\newcommand{\poly}{\textup{poly}}
\newcommand{\epss}{\varepsilon}
\newcommand{\dO}{\hat{\bigOsymb}}
\newcommand{\dH}{\delta_H}
\newcommand{\dFr}{\delta_F}
\title{Polyline Simplification has Cubic Complexity
}
\date{}
\author{Karl Bringmann\thanks{Max Planck Institute for Informatics, Saarland Informatics Campus} \and Bhaskar Ray Chaudhury\footnotemark[1] \thanks{ Saarbruecken Graduate School of Computer Science}}
\begin{document}

\maketitle

\begin{abstract}
In the classic polyline simplification problem we want to replace a given polygonal curve~$P$, consisting of $n$ vertices, by a subsequence $P'$ of $k$ vertices from $P$ such that the polygonal curves $P$ and $P'$ are as close as possible. Closeness is usually measured using the Hausdorff or Fr\'{e}chet distance. These distance measures can be applied \emph{globally}, i.e., to the whole curves $P$ and $P'$, or \emph{locally}, i.e., to each simplified subcurve and the line segment that it was replaced with separately (and then taking the maximum). This gives rise to four problem variants: Global-Hausdorff (known to be NP-hard), Local-Hausdorff (in time $O(n^3)$), Global-Fr\'{e}chet (in time $O(k n^5)$), and Local-Fr\'{e}chet (in time $O(n^3)$). 

Our contribution is as follows.
\begin{itemize}
\item \emph{Cubic time for all variants:} For Global-Fr\'{e}chet we design an algorithm running in time $O(n^3)$. This shows that all three problems (Local-Hausdorff, Local-Fr\'{e}chet, and Global-Fr\'{e}chet) can be solved in cubic time. All these algorithms work over a general metric space such as $(\mathbb{R}^d,L_p)$, but the hidden constant depends on $p$ and (linearly) on~$d$.

\item \emph{Cubic conditional lower bound:} We provide evidence that in high dimensions cubic time is essentially optimal for all three problems (Local-Hausdorff, Local-Fr\'{e}chet, and Global-Fr\'{e}chet). Specifically, improving the cubic time to $O(n^{3-\epsilon} \poly(d))$ for polyline simplification over $(\mathbb{R}^d,L_p)$ for $p = 1$ would violate plausible conjectures. We obtain similar results for all $p \in [1,\infty), p \ne 2$.
\end{itemize}

In total, in high dimensions and over general $L_p$-norms we resolve the complexity of polyline simplification with respect to Local-Hausdorff, Local-Fr\'{e}chet, and Global-Fr\'{e}chet, by providing new algorithms and conditional lower bounds.
\end{abstract}

\thispagestyle{empty}
\clearpage
\setcounter{page}{1}


\section{Introduction}

We revisit the classic problem of polygonal line simplification, which is fundamental to computational geometry, computer graphics, and geographic information systems. The most frequently implemented and cited algorithms for curve simplification go back to the 70s (Douglas and Peucker~\cite{douglas1973algorithms}) and 80s (Imai and Iri~\cite{ImaiIri1988}).  
These algorithms use the following standard\footnote{The problem was also studied without the restriction that vertices of the simplification belong to the original curve~\cite{Guibas91}. The choice whether the start- and endpoints of $P$ and $P'$ must coincide or not is typically irrelevant in this area; in this paper we assert that they coincide, but all results could also be proved without this assumption. Moreover, sometimes $\delta$ is given and $k$ should be minimized, sometimes $k$ is given and $\delta$ should be minimized; we focus on the former variant in this paper.} formalization of curve simplification.
A \emph{polygonal curve} or \emph{polyline} is given by a sequence $P = \langle v_0,v_1,\ldots,v_n \rangle$ of points $v_i \in \R^d$, and represents the continuous curve walking along the line sequences $\overline{v_i v_{i+1}}$ in order.
Given a polyline $P = \langle v_0,v_1,\ldots,v_n \rangle$ and a number $\delta > 0$, we want to compute a subsequence $P' = \langle P_{i_0},\ldots,P_{i_{k-1}}\rangle$, with $0 = i_0 < \ldots < i_{k-1} = n$, of minimal length $k$ such that $P$ and $P'$ have ``distance'' at most~$\delta$. 

Several distance measures have been used for the curve simplification problem. The most generic distance measure on \emph{point sets} $A,B$ is the \emph{Hausdorff distance}. The (directed) Hausdorff distance from $A$ to $B$ is the maximum over all $a \in A$ of the distance from $a$ to its closest point in $B$. 
This is used on curves $P,Q$ by applying it to the images of the curves in the ambient space, i.e., to the union of all line segments $\overline{v_i v_{i+1}}$.

However, the most popular distance measure for curves in computational geometry is the \emph{\Fr distance} $\dFr$. This is the minimal length of a leash connecting a dog to its owner as they continuously walk along the two polylines without backtracking. In comparison to Hausdorff distance, it takes the ordering of the vertices along the curves into account, and thus better captures an intuitive notion of distance among curves.

For both of these distance measures $\delta_{*} \in \{\dH, \dFr\}$, we can apply them \emph{locally} or \emph{globally} in order to measure the distance between the original curve $P$ and its simplification $P'$. In the global variant, we simply consider the distance $\delta_{*}(P,P')$, i.e., we use the Hausdorff or \Fr distance of $P$ and $P'$. In the local variant, we consider the distance $\max_{1 \le \ell< k} \delta_{*}(P[i_{\ell-1} \ldots i_{\ell}], \overline{v_{i_{\ell-1}} v_{i_{\ell}}})$, i.e., for each simplified subcurve $P[i_{\ell-1} \ldots i_{\ell}]$ of $P$ we compute the distance to the line segment $\overline{v_{i_{\ell-1}} v_{i_{\ell}}}$ that we simplified the subcurve to, and we take the maximum over these distances.
This gives rise to four problem variants, depending on the distance measure: Local-Hausdorff, Local-\Fr, Global-Hausdorff, and Global-\Fr. See Section~\ref{preliminaries} for details.

Among these variants, Global-Hausdorff is unreasonable in that it essentially does not take the ordering of vertices along the curve into account. Moreover, it was recently shown that curve simplification under Global-Hausdorff is NP-hard~\cite{kreveld2018}. For these reasons, we do not consider this measure in this paper.

The classic algorithm by Imai and Iri~\cite{ImaiIri1988} was designed for Local-Hausdorff simplification and solves this problem in time\footnote{In $\dO$-notation we \emph{hide any polynomial factors in $d$}, but we make exponential factors in $d$ explicit.} $\dO(n^3)$. By exchanging the distance computation in this algorithm for the \Fr distance, one can obtain an $\dO(n^3)$-time algorithm for Local-\Fr~\cite{Godausimplification15}. Several papers obtained improvements for Local-Hausdorff simplification in small dimension~$d$ \cite{melkman1988, chanchin1996, Barequet2002}; the fastest known running times are $2^{O(d)} n^2$ for $L_1$-norm, $\dO(n^2)$ for $L_\infty$-norm, and $\dO(n^{3 - \Omega(1/d)})$ for $L_2$-norm~\cite{Barequet2002}.

The remaining variant, Global-\Fr, has only been studied very recently~\cite{kreveld2018}, although it is a reasonable measure: The Local constraints (i.e., matching each $v_{i_\ell}$ to itself) are not necessary to enforce ordering along the curve, since \Fr distance already takes the ordering of the vertices into account -- in contrast to Hausdorff distance, for which the Local constraints are necessary to enforce any ordering.
More generally, Global-\Fr simplification is very well motivated as \Fr distance is a popular distance measure in computational geometry, and Global-\Fr simplification exactly formalizes curve simplification with respect to the \Fr distance. 
Van Kreveld et al.~\cite{kreveld2018} presented an algorithm for Global-\Fr simplification in time $\dO(k^* \cdot n^5)$, where $k^*$ is the output size, i.e., the size of the optimal simplification. 

\subsection{Contribution 1: Algorithm for Global-\Fr}

From the state of the art, one could get the impression that Global-\Fr is a well-motivated, but computationally expensive curve simplification problem, in comparison to Local-Hausdorff and Local-\Fr.
We show that the latter intuition is wrong, by designing an $\dO(n^3)$-time algorithm for Global-\Fr simplification. This is an improvement by a factor $\Theta(k^* \cdot n^2)$ over the previously best algorithm by van Kreveld et al.~\cite{kreveld2018}.

\begin{theorem}[Section~\ref{sec:algo}] \label{mainAlgo}
  Global-\Fr simplification can be solved in time $\dO(n^3)$.
\end{theorem}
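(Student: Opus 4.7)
We prove Theorem~\ref{mainAlgo} by reducing Global-\Fr simplification to shortest paths in a sparse DAG on $O(n^2)$ vertices and $O(n^3)$ edges, obtained from the \Fr free-space diagram.

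The starting point is a reachability view. A simplification $P' = \langle v_{i_0}, \ldots, v_{i_{k-1}}\rangle$ (with $i_0 = 0$ and $i_{k-1} = n$) satisfies $\dFr(P, P') \le \delta$ iff there exist positions $p_0 = v_0, p_1, \ldots, p_{k-1} = v_n$ on $P$, in monotone order, such that for each $\ell \ge 1$ the column of the free-space diagram of segment $\overline{v_{i_{\ell-1}} v_{i_\ell}}$ against $P$ admits a monotone path from the left-boundary entry at height $p_{\ell-1}$ to the right-boundary exit at height $p_\ell$. We define states $(j, p)$ and transitions $(i, p) \to (j, q)$ whenever the column of $\overline{v_i v_j}$ admits such a monotone path. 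The minimum $k$ is then one plus the BFS distance from $(0, v_0)$ to $(n, v_n)$ in this reachability graph.

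To obtain a sparse DAG, we discretize continuous positions. For each $j$, let $C_j$ be the set of points where the $\delta$-sphere around $v_j$ crosses the edges of $P$; then $|C_j| = O(n)$. We prove a structural lemma: any valid matching can be locally adjusted so that $p_\ell \in C_{i_\ell} \cup \{v_0, v_n\}$ for every $\ell$. The argument is a forward shift of $p_\ell$ within the reachable exit interval of column $(i_{\ell-1}, i_\ell)$ to its extremal feasible position, which is either a breakpoint of the propagation function (forced by a vertex of $P$) or a point of $C_{i_\ell}$ on the boundary of the $\delta$-ball around $v_{i_\ell}$. Advancing $p_\ell$ only relaxes feasibility for later columns, so this is safe. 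This restricts attention to a DAG with $O(n^2)$ vertices.

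For each pair $(i, j)$ with $i < j$, the column-propagation function $\phi_{ij}$, mapping a left-boundary entry to the interval of reachable right-boundary exits, is a monotone piecewise-linear map with $O(n)$ pieces, precomputable in $O(n)$ time per column by the standard free-space recursion. Total precomputation is $O(n^3)$. We then process $j = 0, 1, \ldots, n$ in order, maintaining $T(j, p)$, the minimum number of edges of $P'$ needed to reach $(j, p)$, for $p \in C_j$. For each $i < j$ and each $p \in C_i$ we evaluate $\phi_{ij}(p) = [L, R]$ in $O(\log n)$ time and perform a range-min update $T(j, q) \leftarrow \min(T(j, q), T(i, p) + 1)$ over $q \in C_j \cap [L, R]$ using a segment tree over $C_j$. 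Each of the $O(n^3)$ updates costs $O(\log n)$, for a total of $\dO(n^3)$ time. The answer is $T(n, v_n) + 1$.

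The principal obstacle is the structural lemma justifying the discretization: showing that restricting each $p_\ell$ to a canonical set of size $O(n)$ does not change the optimum. The proof relies on the convexity of free-space cells (ellipses in $L_2$, and convex sublevel sets for general $L_p$) and on a careful exchange argument for \Fr reparameterizations that preserves feasibility of all subsequent columns. A secondary, mostly mechanical, challenge is implementing the column propagation so that $\phi_{ij}$ is $O(n)$-time preprocessable and $O(\log n)$-time queryable per canonical entry.
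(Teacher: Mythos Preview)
Your discretization lemma is the crux, and as stated it does not hold. The claim ``advancing $p_\ell$ only relaxes feasibility for later columns'' has the monotonicity backwards: increasing $p_\ell$ means entering the next free-space column $\overline{v_{i_\ell} v_{i_{\ell+1}}}$ versus $P$ at a \emph{higher} point on the left boundary, which can only \emph{shrink} the interval of reachable exits on the right boundary (the lowest reachable exit is monotone non-decreasing in the entry height). So a forward shift is not safe. The safe direction is to shift $p_\ell$ \emph{backward} to the minimum reachable exit, but that minimum need not lie in $C_{i_\ell}$. Concretely, if $p_{\ell-1}$ and $p_\ell$ lie on the same edge $\overline{v_j v_{j+1}}$ of $P$ and $p_{\ell-1} > t_{i_\ell,j}$, then the minimum feasible $p_\ell$ equals $p_{\ell-1}$, which by induction is $t_{i_{\ell'},j}$ for some earlier $\ell' < \ell$; this point is in $C_{i_{\ell'}}$, not $C_{i_\ell}$. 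Enlarging the canonical set to $\bigcup_i C_i$ fixes correctness but blows the state space up to $\Theta(n^3)$ and the edge count to $\Theta(n^4)$, killing the running time. Advancing $p_\ell$ to $s_{i_\ell,j}$ instead can genuinely destroy feasibility of $p_{\ell+1}$ when $s_{i_\ell,j} > s_{i_{\ell+1},j}$ and $p_{\ell+1}$ must remain on the same edge, so that escape does not work either.

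For comparison, the paper avoids discretization entirely. It keeps the continuous table $\DP(k,i,j)$ (earliest reachable point on edge $j$ of $P$ by a size-$k$ simplification ending at $v_i$) and observes that the recurrence splits into two cases: the last transition stays on the same edge $j$ (this is handled in $O(1)$ per cell by a running minimum over $i'$), or it crosses into edge $j$ from below. In the second case the exact value of the previous entry is irrelevant, only its finiteness matters, so the problem reduces to what the paper calls \emph{Cell Reachability}: given a vertical stack of cells with passage intervals and integral entry-costs, compute for each cell the minimum entry-cost from which it is monotonically reachable. They solve this in amortized $O(1)$ per cell using a doubly linked list of breakpoints, giving $O(n)$ per fixed $(i,i')$ and $O(n^3)$ overall. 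The essential idea you are missing is precisely this split: the ``same edge'' case is where your discretization breaks, and the paper handles it separately and trivially rather than trying to force a finite canonical set.
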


This shows that all three problem variants (Local-Hausdorff, Local-\Fr, and Global-\Fr) can be solved in time $\dO(n^3)$, and thus the choice of which problem variant to apply should not be made for computational reasons, at least in high dimensions.

Our algorithm (as well as the algorithms for Local-Hausdorff and Local-\Fr~\cite{ImaiIri1988, Godausimplification15}) works over a general metric space such as $\mathbb{R}^d$ with $L_p$-norm. The hidden constant depends on $p$, and has linear dependence on $d$ (recall that in $\dO$-notation we hide polynomial factors in $d$). We assume the Real RAM model of computation, which allows us to perform exact distance computations, and to exactly solve equations of the form $\norm{x - a}{p} = r$ for given $a \in \mathbb{R}^d$, $r \ge 0$. See Section~\ref{sec:techoverview} for an overview of the algorithm.

\subsection{Contribution 2: Conditional Lower Bound}

Since all three variants can be solved in time $\dO(n^3)$, the question arises whether any of them can be solved in time $\dO(n^{3-\epss})$. Tools to (conditionally) rule out such algorithms have been developed in recent years in the area of \emph{fine-grained complexity}, see, e.g., the survey~\cite{vwilliamsSurvey}. One of the most widely used fine-grained hypotheses is the following.

\medskip \noindent
\textbf{\boldmath$k$-OV Hypothesis:} \emph{Problem:} Given sets $A_1,\ldots,A_k \subseteq \{0,1\}^d$ of size $n$, determine whether there exist vectors $a_1 \in A_1,\ldots,a_k \in A_k$ that are orthogonal, i.e., for each dimension $j \in [d]$ there is a vector $i \in [k]$ with $a_i[j] = 0$. \\ \emph{Hypothesis:} For any $k \ge 2$ and $\varepsilon > 0$ the problem is not in time $\dO(n^{k-\varepsilon})$.
\medskip

Naively, $k$-OV can be solved in time $\dO(n^k)$, and the hypothesis asserts that no polynomial improvement is possible, at least not with polynomial dependence on $d$. See~\cite{AmirWilliams15} for the fastest known algorithms for $k$-OV.

Buchin et al.~\cite{buchin2016fine} used the 2-OV hypothesis to rule out $\dO(n^{2-\varepsilon})$-time algorithms for Local-Hausdorff\footnote{Their proof can be adapted to also work for Local-\Fr and Global-\Fr.} in the $L_1$, $L_2$, and $L_\infty$ norm. This yields a tight bound for $L_\infty$, since an $\dO(n^2)$-time algorithm is known~\cite{Barequet2002}. 
However, for all other $L_p$-norms ($p \in [1,\infty)$), the question remained open whether $\dO(n^{3-\epss})$-time algorithms exist. 
To answer this question, one could try to generalize the conditional lower bound by Buchin et al.~\cite{buchin2016fine} to start from 3-OV. However, curve simplification problems seem to have the wrong ``quantifier structure'' for such a reduction. See Section~\ref{sec:techoverview} below for more intuition. 
For similar reasons, Abboud et al.~\cite{Amirhittingset16} introduced the Hitting Set hypothesis, in which they essentially consider a variant of 2-OV where we have a universal quantifier over the first set of vectors and an existential quantifier over the second one ($\forall \exists$-OV). From their hypothesis, however, it is not known how to prove higher lower bounds than quadratic. We therefore consider the following natural extension of their hypothesis. This problem was studied in a more general context by Gao et al.~\cite{GaoImpagliazzo17}.

\medskip \noindent
\textbf{\boldmath$\forall \forall \exists$-OV Hypothesis:} \emph{Problem:} Given sets $A, B, C \subseteq \{0,1\}^d$ of size $n$, determine whether for all $a \in A, b \in B$ there exists $c \in C$ such that $a,b,c,$ are orthogonal. \\ \emph{Hypothesis:} For any $\varepsilon > 0$ the problem is not in time $\dO(n^{3-\varepsilon})$.
\medskip

No algorithm violating this hypothesis is known, and even for much stronger hypotheses on variants of $k$-OV and Satisfiability no such algorithms are known, see Section~\ref{sec:discussion} for details. This shows that the hypothesis is plausible, in addition to being a natural generalization of the hypothesis of Abboud et al.~\cite{Amirhittingset16}.

We establish a \FOV-based lower bound for curve simplification.

\begin{theorem}[Section~\ref{sec:lowerbound}] \label{mainLower}
  Over $(\mathbb{R}^d,L_p)$ for any $p \in [1,\infty)$ with $p \ne 2$, Local-Hausdorff, Local-\Fr, and Global-\Fr simplification have no $\dO(n^{3-\epss})$-time algorithm for any $\epss > 0$, unless the \FOV hypothesis fails.
\end{theorem}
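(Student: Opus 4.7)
The plan is to reduce \FOV to each of Local-Hausdorff, Local-\Fr, and Global-\Fr through a single geometric construction. Given a \FOV instance $(A, B, C)$ with $|A|=|B|=|C|=n$, I would produce a polyline $P$ of $\tOh(n)$ vertices in $\R^{\poly(d)}$ together with a target length $k$ and threshold $\delta$ such that $P$ admits a simplification of length $\le k$ and error $\le \delta$ iff the \FOV instance is a ``yes.'' Because the reduction itself runs in time $\tOh(n)$, a $\dO(n^{3-\epss})$-time algorithm for any of the three variants would solve \FOV in time $\dO(n^{3-\epss})$, contradicting the hypothesis.

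The curve is designed to mirror the $\forall a\, \forall b\, \exists c$ quantifier structure. For the outer $\forall a$ I would partition $P$ into $|A|$ consecutive ``outer blocks'' separated by distinguished checkpoint vertices placed in general position, so that with $k := |A|+1$ every simplification of error $\le \delta$ is forced to use exactly the checkpoints. The global problem then reduces to: does every outer block admit a single-edge approximation within $\delta$, where the $a$-th block should admit such an approximation iff $\forall b\, \exists c: (a,b,c)$ is orthogonal? Inside the $a$-th block the $\forall b$ is encoded by $|B|$ small ``$b$-detours'' that the simplification edge must all approximate, and the $\exists c$ is encoded by equipping each detour with $|C|$ alternative branches, so that the edge lies within $\delta$ of the detour iff at least one $c \in C$ witnesses orthogonality of $(a,b,c)$. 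The witness check itself is a coordinate-wise $L_p$-distance encoding of the Boolean inner product.

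The main technical obstacle is size. Naively placing independent inner $B$- and $C$-structure inside each of the $|A|$ outer blocks yields $\Theta(n^2)$ vertices and only a quadratic lower bound. To reach a cubic bound we must \emph{share} the inner structure across all outer blocks while still letting it interact with the particular $a$ of the current block. Concretely, I would introduce a parameter coordinate that varies monotonically along the curve, so that the straight-line simplification edge of the $a$-th outer block sweeps through the shared inner structure at a position encoding $a$, and design the shared gadgets so that the effective $L_p$-distance at sweep position $a$ is exactly the $\forall b\, \exists c$-check for $(a, \cdot, \cdot)$. The restriction to $p \ne 2$ enters precisely here: realizing a coordinate-wise Boolean $\lor$ via $L_p$-distance requires non-smoothness of the unit ball (flat faces for $p=\infty$, vertices for $p=1$, anisotropic curvature in between), whereas the Euclidean sphere is smooth and rotationally symmetric and does not support such an encoding.

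The same $P$, $k$, $\delta$ serve all three variants uniformly: the checkpoint-forced structure collapses every outer block to a single simplification edge, so Local-Hausdorff and Local-\Fr agree on each block up to constants, and because the checkpoints lie in general position the Global-\Fr leash decomposes block by block, matching the local measures. Consequently, all three problems inherit the same $n^{3-\epss}$ hardness from \FOV.
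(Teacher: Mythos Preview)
Your proposal has the quantifier structure backwards, and this is a genuine gap rather than a detail to be filled in. In curve simplification the only natural existential quantifier is the \emph{choice} of simplification vertices; once the simplification $P'$ is fixed, the distance constraint is a maximum over all points of $P$, hence purely universal over the curve. Your construction removes the existential entirely: with $k=|A|+1$ and forced checkpoints the simplification is essentially unique, so every quantifier you can extract from the constraints is universal. In particular, ``equipping each detour with $|C|$ alternative branches'' cannot encode $\exists c$: the polyline must visit all $|C|$ branches, and Hausdorff/\Fr distance requires \emph{every} branch to be within $\delta$ of the edge, yielding $\forall c$, not $\exists c$. There is no mechanism by which a single close branch compensates for the others. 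Consequently your construction can at best encode a $\forall\forall\forall$ predicate, and the ``sharing'' paragraph does not address this because it only concerns curve size, not the logical shape of the constraint.

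The paper's reduction resolves this by encoding the \emph{negation} $\exists a\,\exists b\,\forall c\colon a,b,c$ are not orthogonal. The curve is simply $\langle s,\tilde a_1,\ldots,\tilde a_n,\tilde c_1,\ldots,\tilde c_n,\tilde b_1,\ldots,\tilde b_n,s\rangle$ of length $O(n)$ in dimension $O(d)$, and one asks whether a simplification of size $4$ exists. Separation properties force any size-$4$ simplification to be $\langle s,\tilde a,\tilde b,s\rangle$ for some $\tilde a\in\tilde A$, $\tilde b\in\tilde B$ --- this supplies $\exists a\,\exists b$ --- and the distance constraint is that every $\tilde c$ lies within $\delta$ of the segment $\overline{\tilde a\tilde b}$ --- this supplies $\forall c$. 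No sharing trick is needed because the three gadget clusters are merely concatenated. The restriction $p\neq 2$ enters not through non-smoothness of the unit ball as you suggest, but through an algebraic obstruction: for $p=2$ the quantity $\norm{(\tilde a+\tilde b)/2-\tilde c}{2}^2$ decomposes as $f_1(a,b)+f_2(b,c)+f_3(a,c)$, and no such pairwise decomposition can represent the three-way product $a[\ell]\cdot b[\ell]\cdot c[\ell]$ needed to detect orthogonality.
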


In particular, this rules out improving the $2^{O(d)} n^2$-time algorithm for Local-Hausdorff over $L_1$~\cite{Barequet2002} to a polynomial dependence on $d$. Note that the theorem statement excludes two interesting values for $p$, namely $\infty$ and 2. For $p = \infty$, an $\dO(n^2)$-time algorithm is known for Local-Hausdorff~\cite{Barequet2002}, so proving the above theorem also for $p = \infty$ would immediately yield an algorithm breaking the \FOV hypothesis. For $p=2$, we do not have such a strong reason why it is excluded, however, we argue in Section~\ref{sec:techoverview} that at least a significantly different proof would be necessary in this case. This leaves open the possibility of a faster curve simplification algorithm for $L_2$, but such a result would need to exploit the Euclidean norm very heavily.

\subsection{Technical Overview} \label{sec:techoverview}

\paragraph{Algorithm} 
We first sketch the algorithm by Imai and Iri~\cite{ImaiIri1988} for Local-Hausdorff. Given a polyline $P = \langle v_0,\ldots,v_n \rangle$ and a distance threshold $\delta$, for all $i < i'$ we compute the Hausdorff distance $\delta_{i,i'}$ from the subpolyline $P[i\ldots i']$ to the line segment $\overline{v_i v_{i'}}$. This takes total time $O(n^3)$, since Hausdorff distance between a polyline and a line segment can be computed in linear time. We build a directed graph on vertices $\{0,1,\ldots,n\}$, with a directed edge from $i$ to $i'$ if and only if $\delta_{i,i'} \le \delta$. We then determine the shortest path from 0 to $n$ in this graph. This yields the simplification $P'$ of smallest size, with Local-Hausdorff distance at most $\delta$. The running time is dominated by the first step, and is thus $O(n^3)$. Replacing Hausdorff by \Fr distance yields an $O(n^3)$-time algorithm for Local-\Fr. 

Note that these algorithms are simple dynamic programming solutions. For Global-\Fr, our cubic time algorithm also uses dynamic programming, but is significantly more complicated.

In our algorithm, we compute the same dynamic programming table as the previously best algorithm~\cite{kreveld2018}. This is a table of size $O(k^* \cdot n^2)$, where $k^*$ is the output size. Table entry $\DP(k,i,j)$ stores the earliest reachable point on the line segment $\overline{v_j v_{j+1}}$ with a size-$k$ simplification of $P[0\ldots i]$. 
More precisely, $\DP(k,i,j)$ is the minimal $t$, with $j \le t \le j+1$, such that there is a size-$k$ simplification $P'$ of $P[0\ldots i]$ with $\delta_F(P', P[0\ldots t]) \le \delta$. If such a point does not exist, we set $\DP(k,i,j) = \infty$.

A simple algorithm computes a table entry in time $O(n^3)$: We iterate over all possible second-to-last points $v_{i'}$ of the simplification $P'$, and over all possible previous line segments $\overline{v_{j'} v_{j'+1}}$, and check whether from $i'$ on $P'$ and $\DP(k-1,i',j')$ on $P$ we can ``walk'' to $i$ on $P'$ and some $j \le t \le j+1$ and $P$, always staying within the required distance. Moreover, we compute the earliest such $t$. This can be done in time $O(n^3)$, which in total yields time $O(k^* n^5)$. This is the algorithm from~\cite{kreveld2018}. 

In order to obtain a speedup, we split the above procedure into two types: $j' = j$, i.e., the walks ``coming from the left'', and $j' < j$, i.e., the walk ``coming from the bottom''. For the first type, it can be seen that the simple algorithm computes their contribution to the output in time $O(n)$. Moreover, it is easy to bring down this running time to $O(1)$ per table entry, by maintaining a certain minimum. 

We show how to handle the second type in total time $O(n^3)$. This is the bulk of effort going into our new algorithm. Here, the main observation is that the particular values of $\DP(k-1,i',j')$ are irrelevant, and in particular we only need to store for each $i',j'$ the smallest $k'$ such that $\DP(k',i',j') \ne \infty$. 
Using this observation, and further massaging the problem, we arrive at the following subproblem that we call \emph{Cell Reachability}: 
We are given $n$ squares (or \emph{cells}) numbered $1,\ldots,n$ and stacked on top of each other. Between cell $j$ and cell $j+1$ there is a \emph{passage}, which is an interval on their common boundary through which we can pass from $j$ to $j+1$. Finally, we are given an integral \emph{entry-cost} $\lambda_j$ for each cell $j$. The goal is to compute, for each cell $j$, its \emph{exit-cost} $\mu_j$, defined as the minimal entry-cost $\lambda_{j'}$, $j' < j$, such that we can walk from cell $j'$ to cell $j$ through the contiguous passages in a monotone fashion (i.e., the points at which we cross a passage are monotonically non-decreasing). See Figure~\ref{Cellreachability1} for an illustration of this problem.

To solve Cell Reachability, we determine for each cell $j$ and cost $k$ the leftmost point $t_j(k)$ on the passage from cell $j-1$ to cell $j$ at which we can arrive from some cell $j' < j$ with entry-cost at most $k$ (using a monotone path). Among the sequence $t_j(1),t_j(2),\ldots$ we only need to store the break-points, with $t_j(k) < t_j(k-1)$, and we design an algorithm to maintain these break-points in amortized time $O(1)$ per cell $j$. This yields an $O(n)$-time solution to Cell Reachability, which translates to an $O(n^3)$-time solution to Global-\Fr simplification.

\paragraph{Conditional lower bound}
Let us first briefly sketch the previous conditional lower bound by Buchin et al.~\cite{buchin2016fine}. Given a 2-OV instance on vectors $A,B \subseteq \{0,1\}^d$, they construct corresponding point sets $\tilde A, \tilde B \subset \mathbb{R}^{d'}$ (for some $d' = O(d)$), forming two clusters that are very far apart from each other. They also add a start- and an endpoint, which can be chosen far away from these clusters (in a new direction). Near the midpoint between $\tilde A$ and $\tilde B$, another set of points $\tilde C$ is constructed. The final curve then starts in the startpoint, walks through all points in $\tilde A$, then through all points in $\tilde C$, then through all points in $\tilde B$, and ends in the endpoint.
This setup ensures that any reasonable size-4 simplification must consist of the startpoint, one point $\tilde a \in \tilde A$, one point $\tilde b \in \tilde B$, and the endpoint. All points in $\tilde A$ are close to $\tilde a$, so they are immediately close to the simplification, similarly for $\tilde B$. Thus, the constraints are in the points $\tilde C$. Buchin et al.~\cite{buchin2016fine} construct $\tilde C$ such that it contains one point for each dimension $\ell \in [d]$, which ``checks'' that the vectors corresponding to the chosen points $\tilde a, \tilde b$ are orthogonal in dimension $\ell$, i.e., one of $a$ or $b$ has a 0 in dimension $\ell$.

We instead want to reduce from \FOV, so we are given an instance $A,B,C$ and want to know whether for all $a \in A, b \in B$ there exists $c \in C$ such that $a,b,c$ are orthogonal. 
In our adapted setup, the set $\tilde C$ is in one-to-one correspondence to the set of vectors $C$. That is, choosing a size-4 simplification implements an existential quantifier over $a \in A, b \in B$, and the contraints that all $\tilde c \in \tilde C$ are close to the line segment from $\tilde a$ to $\tilde b$ implements a universal quantifier over $c \in C$. Naturally, we want the distance from $\tilde c$ to the line segment $\overline{\tilde a \tilde b}$ to be large if $a,b,c$ are orthogonal, and to be small otherwise. This simulates the negation of \FOV, so any curve simplification algorithm can be turned into an algorithm for \FOV.

The restriction $p \in [1,\infty)$ with $p \ne 2$ in Theorem~\ref{mainLower} already is a hint that the specific construction of points is subtle. Indeed, let us sketch one critical issue in the following. We want the points $\tilde C$ to lie in the middle between $\tilde A$ and $\tilde B$, which essentially means that we want to consider the distance from $(\tilde a+\tilde b)/2$ to $\tilde c$. Now consider just a single dimension. Then our task boils down to constructing points $a_0,a_1$ and $b_0,b_1$ and $c_0,c_1$, corresponding to the bits in this dimension, such that $\norm{(a_i+b_j)/2 - c_k}{p} = \beta_1$ if $i=j=k=1$ and $\beta_0$ otherwise, with $\beta_1 < \beta_0$. 
Writing $a'_i = a_i/2$ and $b'_j = b_j/2$ for simplicity, in the case $p=2$ we can simplify 
\begin{align}
&\norm{a'_i+b'_j - c_k}{2}^2
= \sum_{\ell=1}^{d'} (a'_i[\ell] + b'_j[\ell] - c_k[\ell])^2 \notag \\
&= \sum_{\ell=1}^{d'} \Big( (a'_i[\ell] + b'_j[\ell])^2 + (a'_i[\ell] - c_k[\ell])^2 + (b'_j[\ell] - c_k[\ell])^2  - a'_i[\ell]^2 - b'_j[\ell]^2 - c_k[\ell]^2 \Big) \notag \\
&= \norm{a'_i + b'_j}{2}^2 + \norm{a'_i - c_k}{2}^2 + \norm{b'_j - c_k}{2}^2 - \norm{a'_i}{2}^2 - \norm{b'_j}{2}^2 - \norm{c_k}{2}^2 \notag  \\
&= f_1(i,j) + f_2(j,k) + f_3(i,k), \label{eq:fiRepresentation}
\end{align}
for some functions\footnote{This holds for $f_1(i,j) := \norm{a'_i + b'_j}{2}^2 - \norm{a'_i}{2}^2$, $f_2(j,k) := \norm{b'_j - c_k}{2}^2 - \norm{b'_j}{2}^2$, and $f_3(i,k) := \norm{a'_i - c_k}{2}^2 - \norm{c_k}{2}^2$.} $f_1,f_2,f_3 \colon \{0,1\} \times \{0,1\} \to \mathbb{R}$. Note that by assumption this is equal to $\beta_1^2$ if $i=j=k=1$ and $\beta_0^2$ otherwise, with $\beta_1 < \beta_0$. After a linear transformation, we thus obtain a representation of the form (\ref{eq:fiRepresentation}) for the function $f(i,j,k) = i \cdot j \cdot k$ for $i,j,k \in \{0,1\}$.
However, it can be checked that such a representation is impossible\footnote{For instance, we can express this situation by a linear system of equations in 12 variables (the 4 image values for each function $f_i$) and 8 equations (for the values of $f$ on $i,j,k \in \{0,1\}$) and verify that it has no solution.}. Therefore, for $p=2$ \emph{our outlined reduction cannot work - provably!} 

We nevertheless make this reduction work in the cases $p \in [1,\infty)$, $p \ne 2$. The above argument shows that the construction is necessarily subtle. Indeed, constructing the right points requires some technical effort, see Section~\ref{sec:lowerbound}.

\subsection{Further Related Work}
Curve simplification has been studied in a variety of different formulations and settings, and it is well beyond the scope of this paper to give an overview. To list some examples, it was shown that the classic heuristic algorithm by Douglas and Peucker~\cite{douglas1973algorithms} can be implemented in time $O(n \log n)$~\cite{Hershberger1994}, and that the classic $O(n^3)$-time algorithm for Local-Hausdorff simplification by Imai and Iri~\cite{ImaiIri1988} can be implemented in time $O(n^2)$ in two dimensions~\cite{chanchin1996,melkman1988}. Further topics include curve simplification without self-intersections~\cite{Schirraberg1998}, Local-Hausdorff simplification with additional constraints on the angles between consecutive line segments~\cite{chenhershberger2005}, approximation algorithms~\cite{agarwal2005near}, streaming algorithms~\cite{Abam2010}, and the use of curve simplification in subdivision algorithms~\cite{Guibas91,estkowski2001simplifying,funke17}.

\subsection{Organization}

In Section~\ref{preliminaries} we formally define the problems studied in this paper. In Section~\ref{sec:algo} we present our new algorithm for Global-\Fr simplification, and in Section~\ref{sec:lowerbound} we show our conditional lower bounds. We further discuss the used hypothesis in Section~\ref{sec:discussion}.

\section{Preliminaries} \label{preliminaries}
Our ambient space is the metric space $(\mathbb{R}^d,L_p)$, where the distance between points $x,y \in \mathbb{R}^d$ is the $L_p$-norm of their difference, i.e., $\norm{x - y}{p} = \big(\sum_{i=1}^d (x[i] - y[i])^p \big)^{1/p}$. 
A \emph{polyline} $P$ of size $n$ is given by a sequence of points $\langle v_0,v_1,\ldots ,v_n \rangle$, where each $v_i$ lies in the ambient space. We associate with $P$ the continuous curve that starts in $v_0$, walks along the line segments $\overline{v_i v_{i+1}}$ for $i=0,\ldots,n-1$ in order, and ends in $v_n$.
We also interpret $P$ as a function $P \colon [0,n] \rightarrow \mathbb{R}^d$ where $P[i + \lambda] = (1-\lambda)v_i + \lambda v_{i+1}$ for any $\lambda \in [0,1]$ and $i \in \{0,\ldots,n-1\}$. 
We use the notation $P[t_1 \ldots t_2]$ to represent the sub-polyline of $P$ between $P[t_1]$ and $P[t_2]$. Formally for any integers $0 \leq i \leq j \leq n$ and reals $\lambda_1 \in [0,1)$ and $\lambda_2 \in (0,1]$, 
\begin{align*}
  P[i+ \lambda_1 \ldots j+\lambda_2] &= \langle (1-\lambda_1)v_i + \lambda_1v_{i+1}, v_{i+1}, \ldots , v_j, (1- \lambda_2)v_j + \lambda_2 v_{j+1} \rangle
\end{align*}

A \emph{simplification} of $P$ is a curve $Q = \langle v_{i_0},v_{i_1} \ldots ,v_{i_m} \rangle$ with $0 = i_0 < i_1 < \ldots < i_m = n$. The size of the simplification $Q$ is $m+1$. Our goal is to determine a simplification of given size $k$ that \say{very closely} represents $P$. To this end we define two popular measures of similarity between the curves, namely the Fr\'echet and Hausdorff distances. 

\begin{definition}[Fr\'echet distance]
 \label{frechetdefinition}
  The (continuous) Fr\'echet distance $\delta_F(P_1,P_2)$ between two curves $P_1$ and $P_2$ of size $n$ and $m$ respectively is 
   \begin{align*}
     \delta_F(P_1,P_2) &= \inf_{f} \max \limits_{t \in [0,n]} \norm{P_1[t] - P_2[f(t)]}{p}
   \end{align*} 
  where $f \colon [0,n] \rightarrow [0,m]$ is monotone with $f(0) = 0$ and $f(n) = m$.
\end{definition} 

Alt and Godau~\cite{AltG95} gave the characterization of the Fr\'echet distance in terms of the so-called free-space diagram.

\begin{definition}[Free-Space]
 Given two curves $P_1$, $P_2$ and $\delta \geq  0$, the free-space $\FS(P_1,P_2) \subseteq \mathbb{R}^2$ is the set $\set{(x,y) \in ([0,n] \times [0,m])}{\norm{P_1[x]-P_2[y]}{p} \leq \delta}$.
\end{definition}

Consider the following decision problem. Given two curves $P_1$, $P_2$ of size $n$ and $m$, respectively, and given $\delta \geq 0$, decide whether $\delta_F(P_1,P_2) \leq \delta$. The answer to this question is yes if and only if $(n,m)$ is reachable from $(0,0)$ by a monotone path through $\FS(P_1,P_2)$. This \say{reachability} problem is known to be solvable by a dynamic programming algorithm in time $\mathcal{O}(nm)$, and the standard algorithm for computing the Fr\'echet distance is an adaptation of this decision algorithm~\cite{AltG95}. In particular, if either $P_1$ or $P_2$ is a line segment, then the decision problem can be solved in linear time. 

 The Hausdorff distance between curves ignores the ordering of the points along the curve. Intuitively, if we remove the monotonicity condition from function $f$ in Definition \ref{frechetdefinition} we obtain the directed Hausdorff distance between the curves. Formally, it is defined as follows.

\begin{definition}[Hausdorff distance]
 \label{hausdorffdefinition}
   The (directed) \emph{ Hausdorff} distance $\delta_{H}(P_1,P_2)$ between curves $P_1$ and $P_2$ of size $n$ and $m$, respectively, is 
   \begin{align*}
     \delta_{H}(P_1,P_2) &= \max \limits_{t_1 \in [0,n]} \min \limits_{t_2 \in [0,m]} \norm{P_1[t_1] - P_2[t_2]}{p}.
   \end{align*}
\end{definition} 
 
In order to measure the \say{closeness} between a curve and its simplification, these above similarity measures can be applied either \emph{globally} to the whole curve and its simplification, or \emph{locally} to each simplified subcurve $P[i_\ell \ldots i_{\ell+1}]$ and the segment $\overline{v_{i_{\ell}},v_{i_{\ell+1}}}$ to which it was simplified (taking the maximum over all $\ell$). This gives rise to the following measures for curve simplification.
\begin{definition}[Similarity for Curve Simplification]
  \label{local-distances}
   Given a curve $P = \langle v_0,v_1,\ldots ,v_n \rangle$ and a simplification $Q = \langle v_{i_0},v_{i_1} \ldots ,v_{i_m} \rangle$ of $P$, we define their
   \begin{itemize}
     \item Global-Hausdorff distance as $\delta_H(P,Q)$,
     \item Global-Fr\'echet distance as $\delta_F(P,Q)$,  
     \item Local-Hausdorff distance\footnote{It can be checked that in this expression directed and undirected Hausdorff distance have the same value, and so for Local-Hausdorff we can without loss of generality use the directed Hausdorff distance. For Global-Hausdorff this choice makes a difference, but we do not consider this problem in this paper.} as $\max \limits_{0 \leq \ell \leq m-1} \delta_H(P[i_{\ell} \ldots i_{\ell+1}], \overline{v_{i_\ell}v_{i_{\ell+1}}})$, and
     \item Local-Fr\'echet distance as $\max \limits_{0 \leq \ell \leq m-1} \delta_F(P[i_{\ell} \ldots i_{\ell+1}], \overline{v_{i_\ell}v_{i_{\ell+1}}})$.   
   \end{itemize}
\end{definition}

\section{Algorithms for Global-Fr\'echet simplification} \label{sec:algo}
In this section we present an $\mathcal{O}(n^3)$ time algorithm for curve simplification under Global-Fr\'echet distance, i.e., we prove Theorem~\ref{mainAlgo}.
\subsection{An $O(kn^5)$ algorithm for Global Fr\'echet simplification}
We start by describing the previously best algorithm by ~\cite{kreveld2018}. Let $P$ be the polyline $\langle v_0,v_1, \ldots v_n \rangle$. Let $\DP(k,i,j)$ represent the earliest reachable point on $\overline{v_jv_{j+1}}$ with a length $k$ simplification of the polyline $P[0 \ldots i]$ i.e.  $\DP(k,i,j)$ represents the smallest $t$ such that $P[t]$ lies on the line-segment $\overline{v_{j}v_{j+1}}$ (i.e. $j \leq t \leq j+1$) and there is a simplification $\tilde{Q}$ of the polyline $P[0 \ldots i]$ of size at most $k$ such that $\delta_F(\tilde{Q},P[0 \ldots t]) \leq \delta$. If such a point does not exist then we set $\DP(k,i,j) = \infty$. To solve Global-Fr\'echet simplification,  we need to return the minimum $k$ such that $\DP(k,n,n-1) \neq \infty$. Let $P[t_{i,j}]$ and $P[s_{i,j}]$ be the first point and the last point respectively on the line segment $\overline{v_jv_{j+1}}$ such that $\norm{v_i -P[t_{i,j}]}{p} \leq \delta$ and $\norm{v_i - P[s_{i,j}]}{p} \leq \delta$. Observe that if $\DP(k,i,j) \neq \infty$ then $t_{i,j} \leq \DP(k,i,j) \leq s_{i,j}$ for all $k$. 
Before moving onto the algorithm we make some simple observations,

\begin{observation}
 \label{dp-property1}
  If $\DP(k,i,j) = \infty$ then $\DP(k',i,j) = \infty$ for all $k'< k$. If $\DP(k,i,j) = t_{i,j}$ then $\DP(k',i,j) = t_{i,j}$ for all $k'\geq k$.
\end{observation}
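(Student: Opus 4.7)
The plan is to exploit two straightforward structural features of the definition of $\DP$: monotonicity of the feasible set in $k$ arising from the "size at most $k$" quantifier, and the lower bound $\DP(k,i,j) \geq t_{i,j}$ stated in the excerpt as a consequence of the endpoint matching required by the \Fr distance (namely, if $\delta_F(\tilde Q, P[0\ldots t]) \leq \delta$ and $v_i$ is the last vertex of $\tilde Q$, then $\|v_i - P[t]\|_p \leq \delta$, forcing $t \geq t_{i,j}$ on the segment $\overline{v_j v_{j+1}}$).

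For the first implication, I would observe that every simplification of $P[0\ldots i]$ of size at most $k'$ is in particular a simplification of size at most $k$ whenever $k' \leq k$. Hence the set of pairs $(t,\tilde Q)$ witnessing $\DP(k',i,j)$ is a subset of the corresponding set for $\DP(k,i,j)$. If the larger set is empty (i.e., $\DP(k,i,j) = \infty$), so is the smaller one, giving $\DP(k',i,j) = \infty$ for all $k' < k$.

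For the second implication, the same containment, read in the opposite direction, yields monotonicity $\DP(k',i,j) \leq \DP(k,i,j)$ for $k' \geq k$: any witness achieving $t = t_{i,j}$ with size at most $k$ is also a witness of size at most $k'$. Because $\DP(k,i,j) = t_{i,j}$ is finite, this witness establishes that $\DP(k',i,j)$ is finite as well, so the excerpt's lower bound applies and gives $\DP(k',i,j) \geq t_{i,j}$. Combining the two inequalities yields equality $\DP(k',i,j) = t_{i,j}$.

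I do not anticipate any genuine obstacle: both implications reduce to a subset relationship between feasible sets plus the already-recorded lower bound. The only point I would take care to state cleanly is the transfer of finiteness from $\DP(k,i,j)$ to $\DP(k',i,j)$, since the lower bound $t_{i,j}$ is only available under the finiteness assumption.
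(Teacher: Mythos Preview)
Your proposal is correct and matches the paper's own proof essentially line for line: both argue via the superset relation between feasible sets as $k$ varies, deducing $\DP(k',i,j)\ge\DP(k,i,j)$ for $k'<k$ and $\DP(k',i,j)\le\DP(k,i,j)$ for $k'\ge k$, and then combine the latter with the standing lower bound $\DP(\cdot,i,j)\ge t_{i,j}$ in the finite case. Your extra remark about transferring finiteness before invoking the lower bound is a nice bit of care, but it does not constitute a different approach.
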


\begin{proof}
  If $k' < k$, then the minimization in $\DP(k,i,j)$ is over a superset compared to $\DP(k',i,j)$. Thus $\DP(k',i,j) \geq \DP(k,i,j) = \infty$. Thus $\DP(k,i,j) = \infty$. Similarly when $k' \geq k$ then the minimization in $\DP(k',i,j)$ is over a superset compared to $\DP(k,i,j)$. Thus we have $t_{i,j} \leq \DP(k',i,j) \leq \DP(k,i,j)$. Thus $\DP(k,i,j) = t_{i,j}$ implies $\DP(k',i,j) = t_{i,j}$ for all $k' \geq k$. 
\end{proof}
We will crucially make use of the following characterization of the $\DP$ table entries,
\begin{lemma}
\label{recurrence}
   $\DP(k,i,j)$ is the minimal $t \in [t_{i,j},s_{i,j}]$, such that for some $i' < i$ and $j' \leq j$, we have $\DP(k-1,i',j') \neq \infty$ and $\delta_F(P[\DP(k-1,i',j') \ldots t], \overline{v_{i'}v_{i}}) \leq \delta$. If no such $t$ exists then $\DP(k,i,j) = \infty$.
\end{lemma}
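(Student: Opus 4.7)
The plan is to prove the claimed characterization by establishing both inequalities between $\DP(k,i,j)$ and the minimum on the right-hand side; the $\infty$ case then follows by contrapositive. The forward inequality is a straightforward concatenation argument, while the backward inequality hides the one real technical point, namely reconciling the Fréchet witness of $\DP(k-1,i',j')$ with the Fréchet matching of an optimal size-$k$ simplification.

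For the forward direction ($\DP(k,i,j) \le$ RHS), I would take any triple $(i',j',t)$ satisfying the recurrence. Appending the vertex $v_i$ to a size-$(k-1)$ simplification $\tilde Q'$ of $P[0\ldots i']$ witnessing $\DP(k-1,i',j')$ produces a size-$k$ simplification $\tilde Q$ of $P[0\ldots i]$ (using $i' < i$). The two Fréchet matchings that are in hand---of $\tilde Q'$ with $P[0\ldots \DP(k-1,i',j')]$, and of $\overline{v_{i'} v_i}$ with $P[\DP(k-1,i',j')\ldots t]$---agree at the splitting pair $(v_{i'}, P[\DP(k-1,i',j')])$, so gluing them yields a monotone parametrization certifying $\delta_F(\tilde Q, P[0\ldots t]) \le \delta$. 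Since $t \in [t_{i,j}, s_{i,j}]$ by hypothesis, this witnesses $\DP(k,i,j) \le t$. This step is routine bookkeeping about Fréchet parametrizations.

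For the backward direction I would start from an optimal size-$k$ simplification $\tilde Q = \langle v_{i_0},\ldots,v_{i_{k-1}}\rangle$ realizing $\DP(k,i,j) = t < \infty$ together with a parametrization $f : [0,k-1] \to [0,t]$ witnessing $\delta_F(\tilde Q, P[0\ldots t]) \le \delta$. Setting $i' := i_{k-2}$ and $t^* := f(k-2)$, and letting $j^*$ be the integer such that $P[t^*] \in \overline{v_{j^*} v_{j^*+1}}$ with $j^* \le j$ (obtained after breaking integer ties appropriately), the restriction $f|_{[0,k-2]}$ certifies that $\tilde Q[0\ldots k-2]$ is a size-$(k-1)$ simplification of $P[0\ldots i']$ realizing $\DP(k-1,i',j^*) \le t^*$, while $f|_{[k-2,k-1]}$ gives $\delta_F(\overline{v_{i'} v_i}, P[t^*\ldots t]) \le \delta$.

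The main obstacle is that the recurrence asks for a Fréchet bound starting not at $t^*$ but at $s := \DP(k-1,i',j^*)$, which may be strictly smaller. I would close this gap via convexity of $L_p$-balls: both $P[s]$ and $P[t^*]$ lie on the common segment $\overline{v_{j^*} v_{j^*+1}}$, and both are within distance $\delta$ of $v_{i'}$ (the former by definition of $\DP(k-1,i',j^*)$, the latter from the restricted matching just built), so the $\delta$-ball around $v_{i'}$ being convex implies that the entire sub-segment $P[s\ldots t^*]$ stays within $\delta$ of $v_{i'}$. Prepending to the $[t^*,t]$-matching of $\overline{v_{i'} v_i}$ the part that keeps $\overline{v_{i'} v_i}$ at $v_{i'}$ while traversing $P$ from $s$ to $t^*$ yields a monotone parametrization certifying $\delta_F(\overline{v_{i'} v_i}, P[s\ldots t]) \le \delta$. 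Hence $(i',j^*,t)$ satisfies the recurrence and the backward inequality follows.
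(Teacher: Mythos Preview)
Your proof is correct and follows essentially the same approach as the paper's. The paper phrases the key backward-direction step in terms of the free-space diagram $\FS(P,\overline{v_{i'}v_i})$ (walking along the left boundary from $(0,\DP(k-1,i',j^*))$ up to $(0,t^*)$ and then following the existing monotone path), whereas you argue directly via convexity of the $L_p$-ball around $v_{i'}$; these are the same argument in different language, since the free-space interval on a cell boundary is precisely the intersection of a segment with that ball.
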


  \begin{proof}
  
   \begin{figure}
    \begin{subfigure}{0.5\textwidth}
     \includegraphics[scale = 0.5]{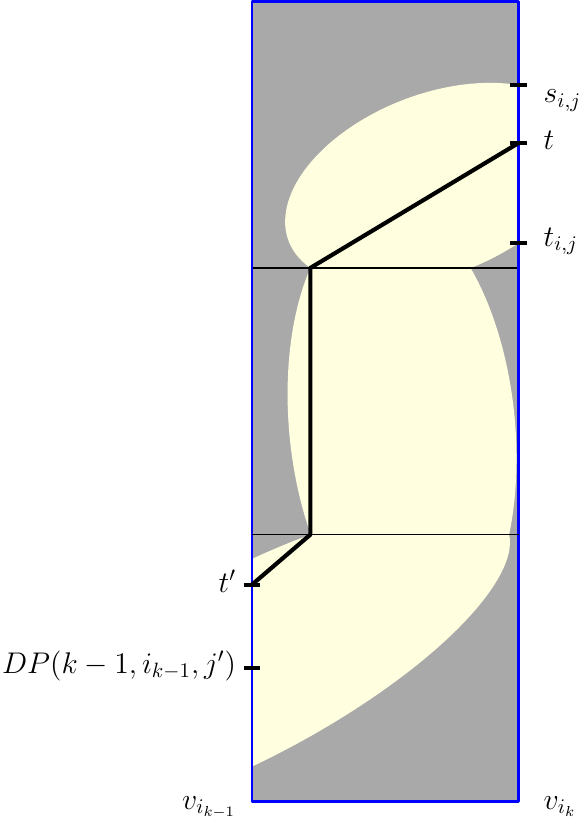}
    \end{subfigure}
     \begin{subfigure}{0.5\textwidth}
     \includegraphics[scale = 0.5]{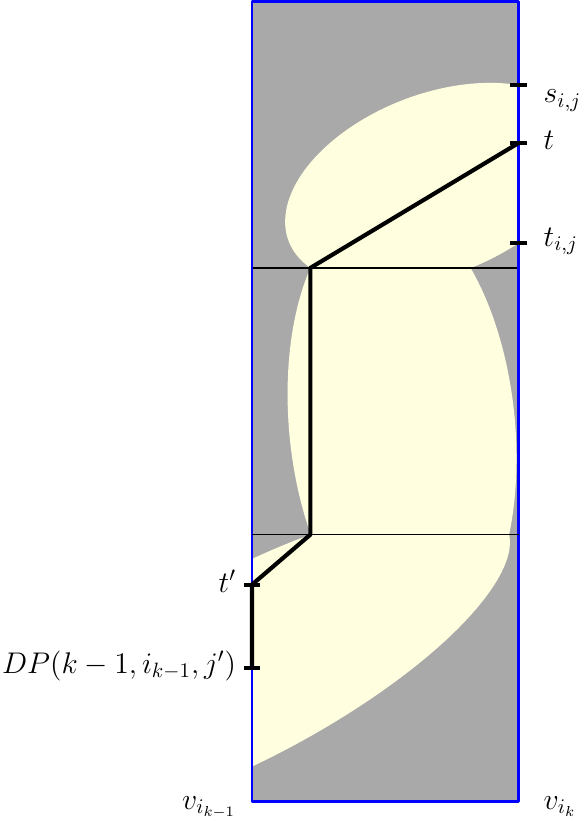}
  \end{subfigure}
   \caption{Illustration of the proof of Lemma \ref{recurrence}. There exists a monotone path from $(0,t')$ to $(1,t)$ in $\FS(P,\overline{v_{i_{k-1}}v_{i_k}})$ (left). Since $\DP(k-1,i_{k-1},j') \leq t' \leq t$, there is a monotone path in $\FS(P, \overline{v_{i_{k-1}}v_{i_k}})$ from $(0,\DP(k-1,i_{k-1},j'))$ (right) to $(1,t)$ by moving from $(0,\DP(k-1,i_{k-1},j'))$ to $(0,t')$ and then following the existing monotone path from $(0,t')$ to $(1,t)$.}
   \label{reachability}
\end{figure}

   Let $t$ be minimal in $[t_{i,j},s_{i,j}]$ such that $\DP(k-1,i',j') \neq \infty$ and $\delta_F(P[\DP(k-1,i',j') \ldots t],$ $\overline{v_{i'}v_{i}}) \leq \delta$ for some $i' < i$ and $j' \leq j$. Since in particular $\DP(k-1,i',j') \neq \infty$, for one direction we note that there exists a simplification $\hat{Q}$ of the polyline $P[0 \ldots i']$  of size $k-1$ such that $\delta_F(\hat{Q},P[ 0 \ldots \DP(k-1,i',j')]) \leq \delta$. By appending $v_j$ to $\hat{Q}$ we obtain a simplification $\tilde{Q}$ of the polyline $P[0 \ldots i]$ such that $\delta_F(\tilde{Q},P[0 \ldots \hat{t}]) \leq \max(\delta_F(\hat{Q},P[0 \ldots \DP(k-1,i',j')]),\delta_F(P[\DP(k-1,i',j') \ldots t], \overline{v_{i'}v_{i}})) \leq \delta$. It follows that $\DP(k,i,j) \leq t$. In particular if $\DP(k,i,j) = \infty$ then no such $t$ exists.
   
   For the other direction, let $t'$ be such that $\DP(k,i,j) = t'$. Assume $t' \neq \infty$. Then there exists a simplification $\tilde{Q} = \langle v_{i_0}, v_{i_1}, \ldots v_{i_{k-1}},v_{i_k} \rangle$ of the Polyline $P[0 \ldots i]$ such that $\delta_F(\tilde{Q},P[0 \ldots t']) \leq \infty$. Such a $\tilde{Q}$ exists if and only if there is a simplification $\hat{Q}$ of size $k-1$ of the polyline $\hat{P} = P[0 \ldots i_{k-1}]$ and a value $\hat{t} \leq t'$ such that, 
   
   \begin{enumerate}[label={(\arabic*)}]
       \item $\delta_{F}(\hat{Q},P[0 \ldots \hat{t}]) \leq \delta$ and 
       \item $\delta_{F}(P[\hat{t} \ldots t'], \overline{v_{i_{k-1}}v_{i_k}}) \leq \delta$.
   \end{enumerate}
   
    Let $\hat{j} \leq \hat{t} \leq \hat{j}+1$. Observe that (1) implies that $\DP(k-1,i_{k-1},\hat{j}) \neq \infty$.  Also $t_{i_{k-1},\hat{j}} \leq  \DP(k-1,i_{k-1},\hat{j}) \leq \hat{t} \leq s_{i_{k-1},\hat{j}}$. Now we show that $\delta_{F}(P[\hat{t}\ldots t'], \overline{v_{i_{k-1}}v_{i_k}}) \leq \delta$ implies that $\delta_{F}(P[ \DP(k-1,i_{k-1},\hat{j})\ldots t'], \overline{v_{i_{k-1}}v_{i_k}}) \leq \delta$. This is obvious from inspecting $\FS(P, \overline{v_{i_{k-1}}v_{i_k}})$ (see Figure \ref{reachability}). There exists a monotone path in $\FS(P,\overline{v_{i_{k-1}}v_{i_k}})$ that starts from  $(0,\DP(k-1,i_{k-1},\hat{j}))$ , moves to $(0,\hat{t})$ and then follows the monotone path from $(0,\hat{t})$ to $(1,t')$ that exists. Therefore $t \leq t' = \DP(k,i,j)$. Combining the two inequalities we have that $\DP(k,i,j) = t$.
\end{proof}

A dynamic programming algorithm follows more or less directly from Lemma \ref{recurrence}. Note that for a fixed $i' <i$ and $j' \leq j$ such that $\DP(k-1,i',j') \neq \infty$ we can determine the minimal $t$ such that $(1,t)$ is reachable from $(0,\DP(k-1,i',j'))$ by a monotone path in $\FS(P,\overline{v_{i'}v_{i}})$ in $\mathcal{O}(n)$ time. This follows from the standard algorithm for the decision version of the Fr\'echet distance  between two polygonal curves of length at most $n$ (in particular here one of the curves is of length 1). To determine $\DP(k,i,j)$ we enumerate over all $i' < i$ and $j' \leq j$ such that $\DP(k-1,i',j') \neq \infty$ and determine the minimum $t$ that is reachable. The running time to determine $\DP(k,i,j)$ is thus $\mathcal{O}(n^3)$ by the loops for $i'$, $j'$ and the Fr\'echet distance check. Since there are $\mathcal{O}(kn^2)$ $\DP$-cells to fill, the algorithm runs in total time $\mathcal{O}(kn^5)$ and uses space $\mathcal{O}(kn^2)$. 

\subsection{An $\mathcal{O}(n^3)$ algorithm for Global-Fr\'echet simplification}

Now we improve the running time by a more careful understanding of the monotone paths through $\FS(P,\overline{v_{i'}v_i})$ to $(1,\DP(k,i,j))$ for fixed $i,j$ and $i'$. Let $\mathit{fbox}_j$ denote the intersection of the free-space $\FS(P,\overline{v_{i'}v_i})$ with the square with corner vertices $(0,j)$ and $(1,j+1)$. The following fact will be useful later.

\begin{fact}
 \label{convexity}
 $\mathit{fbox}_j$ is convex for all $j \in [n-1]$.
\end{fact}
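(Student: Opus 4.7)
The plan is to exhibit $\mathit{fbox}_j$ as the intersection of two convex sets, namely the unit square $[0,1]\times[j,j+1]$ and the $\delta$-sublevel set of a convex function.

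First I would spell out the two relevant affine parametrizations. On the line segment, $\overline{v_{i'}v_i}[x] = (1-x)v_{i'} + x v_i$ depends affinely on $x \in [0,1]$, while on the subcurve $P$ restricted to $[j,j+1]$, we have $P[y] = (j+1-y)v_j + (y-j)v_{j+1}$, which depends affinely on $y \in [j,j+1]$. Consequently the map
\[
D\colon [0,1]\times[j,j+1] \to \mathbb{R}^d, \qquad D(x,y) \;:=\; \overline{v_{i'}v_i}[x] \;-\; P[y],
\]
is the restriction of an affine map from $\mathbb{R}^2$ to $\mathbb{R}^d$.

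Next I would use that $\norm{\cdot}{p}$ is a convex function on $\mathbb{R}^d$ for every $p \in [1,\infty]$ (as a norm it satisfies the triangle inequality and positive homogeneity, hence convexity). Thus $(x,y)\mapsto \norm{D(x,y)}{p}$ is the composition of a convex function with an affine map and is therefore convex on $\mathbb{R}^2$, so its $\delta$-sublevel set $\{(x,y): \norm{D(x,y)}{p}\le \delta\}$ is convex. Intersecting this convex set with the convex square $[0,1]\times[j,j+1]$ yields $\mathit{fbox}_j$ as a convex subset of $\mathbb{R}^2$, as desired. There is essentially no obstacle here; this is the familiar ``cell convexity'' observation underlying the Alt--Godau free-space diagram, and the only role played by the ambient metric is convexity of $\norm{\cdot}{p}$, which holds throughout the regime considered in this paper.
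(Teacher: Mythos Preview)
Your proof is correct and is essentially the same argument the paper uses: the paper simply cites Alt and Godau for the fact that the free-space cell is (before clipping to the square) an affine preimage of the $L_p$ unit ball, hence convex. You have unpacked exactly this observation---the map $D$ is affine, so $\{\norm{D(\cdot)}{p}\le\delta\}$ is an affine preimage of a ball---and added the intersection with the square explicitly; there is no substantive difference in approach.
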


\begin{proof}
 Alt and Godau ~\cite{AltG95} showed that $\mathit{fbox}_j$ is an affine transformation of the unit ball, and this is convex for any $L_p$ norm.
\end{proof}

Furthermore let $\mathit{ver}_j$ denote the free space along vertical line segment with endpoints $(0,j)$ and $(0,j+1)$ and let $\mathit{hor}_j$ denote the free space along the horizontal line segment $(0,j)$ to $(1,j)$ in the free space $\FS(P,\overline{v_{i'}v_i})$. We consider the point $(0,j)$ to belong to $\mathit{ver}_j$, but not $\mathit{hor}_j$, to avoid certain corner cases.
We split the monotone paths from $(0,\DP(k-1,i',j'))$ for $i' <i$ and $j' \leq j$ to $(1,\DP(k,i,j))$ in $\FS(P,\overline{v_{i'}v_i})$ into two categories: ones that intersect $\mathit{ver}_j$ and the ones that intersect $\mathit{hor}_j$. We first look at the monotone paths that intersect $\mathit{ver}_j$. Observe that if the monotone path intersects $\mathit{ver}_j$ then $j' =j$.  Let $\overline{\DP}_1(k,i,j) = \min \limits_{i' < i}\DP(k-1,i,j)$. We now define,

$$\DP_1(k,i,j) =    \left\{
\begin{array}{ll}
      \max(\overline{\DP}_1(k,i,j),t_{i,j}) & \text{if }  \overline{\DP}_1(k,i,j) \leq s_{i,j}\\
      \infty &\text{otherwise}
\end{array} 
\right. $$ 

We show a characterization of $\DP_1$ similar to the characterization of $\DP$ in Lemma \ref{recurrence} and thus establishing that $\DP_1$ correctly handles all paths intersecting $\mathit{ver}_j$.

 \begin{observation}
  \label{recurrence-1}
  $\DP_1(k,i,j)$ is the minimal $t \in [t_{i,j},s_{i,j}]$ such that $\DP(k-1,i',j) \neq \infty$ and $\delta_F(P[\DP(k-1,i',j) \ldots t],\overline{v_{i'}v_{i}}) \leq \delta$ for some $i'<i$. If no such $t$ exists then $\DP_1(k,i,j) = \infty$. 
 \end{observation}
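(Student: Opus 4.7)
The plan is to verify the two directions of the claimed characterization: that $\DP_1(k,i,j)$ is a feasible value of $t$ whenever it is finite, and that any feasible $t'$ satisfies $t' \ge \DP_1(k,i,j)$. The key observation underlying everything is the following: whenever $\DP(k-1,i',j)$ is finite, its witnessing simplification $\tilde Q$ of $P[0\ldots i']$ ends at $v_{i'}$, so the Fr\'echet matching must pair $v_{i'}$ with $P[\DP(k-1,i',j)]$; hence $\norm{v_{i'}-P[\DP(k-1,i',j)]}{p}\le\delta$, i.e., the point $(0,\DP(k-1,i',j))$ lies inside $\mathit{fbox}_j$ of the diagram $\FS(P,\overline{v_{i'}v_i})$.

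For the forward direction, suppose $\DP_1(k,i,j)=t<\infty$. Then $\overline{\DP}_1(k,i,j)\le s_{i,j}$, so I pick $i^*<i$ attaining the minimum, set $y_1=\DP(k-1,i^*,j)$ and $y_2=\max(y_1,t_{i,j})=t$. By the observation above, $(0,y_1)\in\mathit{fbox}_j$ in $\FS(P,\overline{v_{i^*}v_i})$. Moreover $y_2\in[t_{i,j},s_{i,j}]$, since $y_1\le s_{i,j}$ and $t_{i,j}\le s_{i,j}$ together give $y_2\le s_{i,j}$; hence $(1,y_2)\in\mathit{fbox}_j$. Since $y_1\le y_2$, the straight segment from $(0,y_1)$ to $(1,y_2)$ is monotone, and Fact~\ref{convexity} (convexity of $\mathit{fbox}_j$) places it entirely in $\mathit{fbox}_j$. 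This monotone path witnesses $\delta_F(P[y_1\ldots y_2],\overline{v_{i^*}v_i})\le\delta$, completing the feasibility claim.

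For the minimality direction, suppose some $t'\in[t_{i,j},s_{i,j}]$ and $i'<i$ satisfy $\DP(k-1,i',j)<\infty$ and $\delta_F(P[\DP(k-1,i',j)\ldots t'],\overline{v_{i'}v_i})\le\delta$. The Fr\'echet condition yields a monotone free-space path from $(0,\DP(k-1,i',j))$ to $(1,t')$, so in particular $\DP(k-1,i',j)\le t'\le s_{i,j}$. Therefore $\overline{\DP}_1(k,i,j)\le\DP(k-1,i',j)\le s_{i,j}$, which forces $\DP_1(k,i,j)$ to be finite, and
\[
\DP_1(k,i,j)=\max\bigl(\overline{\DP}_1(k,i,j),t_{i,j}\bigr)\le\max\bigl(\DP(k-1,i',j),t_{i,j}\bigr)\le\max(t',t_{i,j})=t',
\]
using $t'\ge t_{i,j}$ in the final equality. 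The case in which no feasible $t'$ exists follows by contraposition from the forward direction: were $\DP_1(k,i,j)$ finite, we just produced a witness. No step is really the bottleneck; the only subtle point is recognizing that the Fr\'echet pairing of endpoints automatically places $(0,\DP(k-1,i',j))$ inside $\mathit{fbox}_j$, which is what lets convexity reduce the existence of a monotone path to the purely one-dimensional condition $\DP(k-1,i',j)\le t$.
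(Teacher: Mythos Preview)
Your proof is correct and follows essentially the same approach as the paper: both arguments reduce the Fr\'echet condition to the existence of a monotone segment inside the convex region $\mathit{fbox}_j$, and both exploit that any monotone free-space path from $(0,\DP(k-1,i',j))$ to $(1,t)$ forces $\DP(k-1,i',j)\le t$. The only cosmetic difference is organization---the paper performs a three-case analysis for each fixed $\hat{\imath}$ (depending on where $\DP(k-1,\hat{\imath},j)$ falls relative to $[t_{i,j},s_{i,j}]$) and then minimizes over $\hat{\imath}$, whereas you argue feasibility and minimality directly; the underlying geometric content is identical.
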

 
 \begin{proof}
  \begin{figure}
  \includegraphics[scale = 0.5]{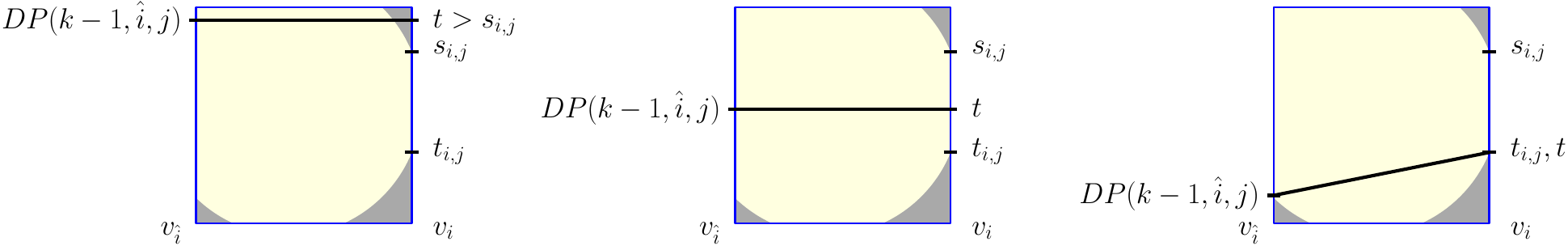}
 \caption{Illustration of the proof of Observation \ref{recurrence-1}. For some $\hat{i} < i$, $t \in [t_{i,j},s_{i,j}]$ is minimal such that $(1,t)$ is reachable from $(0,\DP(k-1,\hat{i},j))$ by a monotone path in $\mathit{fbox}_j$. If $\DP(k-1,\hat{i},j) > s_{i,j}$ (left) then no such $t$ exists. If $t_{i,j} \leq \DP(k-1,\hat{i},j) \leq s_{i,j}$ (middle) then $t = \DP(k-1,\hat{i},j)$. If $\DP(k-1,\hat{i},j) < t_{i,j}$ (right) then $t = t_{i,j}$.}
 \label{reachability-1}
\end{figure}
  
  Fix $\hat{i} < i$. First note that if there is a monotone path connecting $(0,\DP(k-1,\hat{i},j))$ to $(1,t)$ then $t \geq \DP(k-1,\hat{i},j)$. Now consider $\mathit{fbox}_j$ in the free-space $\FS(P,\overline{v_{\hat{i}}v_i})$. As illustrated in Figure \ref{reachability-1} there are three cases,
  \begin{itemize}
    \item If $\DP(k-1,\hat{i},j) > s_{i,j}$ then there is no monotone path from $(0,\DP(k-1,\hat{i},j))$  to $(1,t)$ for all $t \in [t_{i,j},s_{i,j}]$.
        
    \item If $ t_{i,j} \leq \DP(k-1,\hat{i},j) \leq s_{i,j}$. As mentioned at the beginning of the proof, $t \geq \DP(k-1,\hat{i},j)$. Since $\mathit{fbox}_j$ is convex the line segment connecting $(0,\DP(k-1,\hat{i},j)$ and $(1,\DP(k-1,\hat{i},j))$ lies inside $\mathit{fbox}_j$ and hence lies inside $\FS(P,\overline{v_{\hat{i}}v_i})$. Thus the smallest $t \in [t_{i,j},s_{i,j}]$ such that there is a monotone path from $(0,\DP(k-1,\hat{i},j))$  to $(1,t)$ in $\FS(P,\overline{v_{\hat{i}}v_i})$ is $\DP(k-1,\hat{i},j)$. 
     
    \item If $\DP(k-1,\hat{i},j) \leq t_{i,j}$. Again since $\mathit{fbox}_j$ is convex the line segment connecting $(0,\DP(k-1,\hat{i},j))$ and $(1,t_{i,j})$ lies inside $\mathit{fbox}_j$ and thus lies inside $\FS(P,\overline{v_{\hat{i}}v_i})$. Thus the smallest $t \in [t_{i,j},s_{i,j}]$ such that there is a monotone path from $(0,\DP(k-1,\hat{i},j))$  to $(1,t)$ in $\FS(P,\overline{v_{\hat{i}}v_i})$ is $t_{i,j}$.
  \end{itemize}

Therefore, for any $\hat{i} < i$ if $\DP(k-1,\hat{i},j) > s_{i,j}$ then there exists no $t \in [t_{i,j},s_{i,j}]$ such that $\delta_F(P[\DP(k-1,\hat{i},j) \ldots t],\overline{v_{\hat{i}}v_i}) \leq \delta$. Similarly if $\DP(k-1,\hat{i},j) \leq s_{i,j}$  then the minimal $t \in [t_{i,j},s_{i,j}]$ such that $\delta_F(P[\DP(k-1,\hat{i},j) \ldots t],\overline{v_{\hat{i}}v_i}) \leq \delta$ is $\max(\DP(k-1,\hat{i},j),t_{i,j})$.\\
Now let $t \in [t_{i,j},s_{i,j}]$ be minimal such that $\DP(k-1,i',j) \neq \infty$ and $\delta_F(P[\DP(k-1,i',j) \ldots t],$ $\overline{v_{i'}v_{i}}) \leq \delta$ for some $i'<i$. It follows that if $\min \limits_{i' < i} \DP(k-1,i',j) \leq s_{i,j}$, then $t = \max( \min \limits_{i' < i} \DP(k-1,i',j),t_{i,j})$ and if $\min \limits_{i' < i} \DP(k-1,i',j) > s_{i,j}$, then no such $t$ exists. Since $\min \limits_{i' < i} \DP(k-1,i',j)  = \overline{\DP}_1(k,i,j)$ and $\DP_1(k,i,j) = \max(\overline{\DP}_1(k,i,j),t_{i,j})$ when $\overline{DP}_1(k,i,j) \leq s_{i,j}$ (by definition), we have that when $\overline{\DP}_1(k,i,j) \leq s_{i,j}$,  $\DP_1(k,i,j) = \max(\overline{\DP_1}(k,i,j),t_{i,j}) = \max(\min \limits_{i' < i} \DP(k-1,i',j),t_{i,j}) = t$. Similarly when $\overline{\DP}_1(k,i,j) > s_{i,j}$, then $DP(k,i,j)= \infty$ and $t$ does not exist.
 \end{proof}
 
  We now look at the monotone paths that intersect $\mathit{hor}_j$. Observe that if the monotone path intersects $\mathit{hor}_j$ then $j' < j$. Along this line, we define $\overline{\DP}_2(k,i,j) = 1$ if there exists some $i' <i $ and $j' < j$, such that $ \DP(k-1,i',j') \neq \infty$  and there exists a monotone path from  $(0,\DP(k-1,i',j'))$ to $(1,t_{i,j})$ in  the free-space $\FS(P,\overline{v_{i'}v_{i}})$ and otherwise we set $\overline{\DP}_2(k,i,j)= 0$. Hereafter we define,

$$\DP_2(k,i,j) =    \left\{
\begin{array}{ll}
      t_{i,j} & \text{if }  \overline{\DP}_2(k,i,j) =1\\
      \infty &\text{otherwise}
\end{array} 
\right. $$ 

We show a characterization of $\DP_2$ similar our characterization of $\DP$ in Lemma \ref{recurrence}, and thus establishing that $\DP_2$ correctly handles all paths intersecting $\mathit{hor}_j$. 
\begin{observation}
 \label{recurrence-2}
 $\DP_2(k,i,j)$ is the minimal $t \in [t_{i,j},s_{i,j}]$ such that $\DP(k-1,i',j') \neq \infty$ and $\delta_F(P[\DP(k-1,i',j') \ldots t], \overline{v_{i'}v_i}) \leq \delta$ for some $i'<i$ and $j' < j$. If no such $t$ exists then $\DP_2(k,i,j) = \infty$.
\end{observation}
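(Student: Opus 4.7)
The plan is to mirror the structure of the proof of Observation~\ref{recurrence-1}, but adapted to monotone paths entering $\mathit{fbox}_j$ from below (through $\hor_j$) rather than from the left (through $\mathit{ver}_j$). The main lever is Fact~\ref{convexity}: since $\mathit{fbox}_j$ is convex and $(1, t_{i,j})$ is its lowest point on the right edge, any valid monotone path from below can be rerouted so as to terminate at $(1, t_{i,j})$. This explains why $\DP_2(k,i,j) \in \{t_{i,j}, \infty\}$ and why $t_{i,j}$ is automatically the minimum valid $t \in [t_{i,j}, s_{i,j}]$ whenever such a path exists.

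For the easy direction I would assume $\overline{\DP}_2(k,i,j) = 1$, so by definition there exist $i' < i$ and $j' < j$ with $\DP(k-1, i', j') \ne \infty$ and a monotone path in $\FS(P, \overline{v_{i'}v_i})$ from $(0, \DP(k-1, i', j'))$ to $(1, t_{i,j})$. The Alt--Godau characterization then yields $\delta_F(P[\DP(k-1,i',j') \ldots t_{i,j}], \overline{v_{i'}v_i}) \le \delta$, so $t = t_{i,j}$ satisfies the condition of the statement. Being the left endpoint of $[t_{i,j}, s_{i,j}]$, it is automatically the minimum, matching $\DP_2(k,i,j) = t_{i,j}$.

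For the converse, I would suppose there exist $i' < i$, $j' < j$, and $t \in [t_{i,j}, s_{i,j}]$ with $\DP(k-1, i', j') \ne \infty$ and $\delta_F(P[\DP(k-1,i',j') \ldots t], \overline{v_{i'}v_i}) \le \delta$. Translating the Fr\'echet bound back to the free space gives a monotone path $\pi$ from $(0, \DP(k-1, i', j'))$ to $(1, t)$. From $j' < j$ we get $\DP(k-1, i', j') \le j' + 1 \le j$, and from $t \ge t_{i,j} \ge j$, so by monotonicity $\pi$ must cross the line $y = j$ at some point $(x_0, j) \in \mathit{fbox}_j$. By convexity of $\mathit{fbox}_j$ the segment from $(x_0, j)$ to $(1, t_{i,j})$ is contained in $\mathit{fbox}_j$, and it is monotone since $t_{i,j} \ge j$. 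Concatenating the prefix of $\pi$ up to $(x_0, j)$ with this segment yields a monotone path from $(0, \DP(k-1, i', j'))$ to $(1, t_{i,j})$, witnessing $\overline{\DP}_2(k,i,j) = 1$ and hence $\DP_2(k,i,j) = t_{i,j} \le t$. Together with the easy direction this identifies the minimum $t$ in the statement with $\DP_2(k,i,j)$, and also shows that whenever no such $t$ exists we have $\overline{\DP}_2(k,i,j) = 0$ and $\DP_2(k,i,j) = \infty$.

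The main subtlety I expect is the boundary case $\DP(k-1, i', j') = j$, i.e., when $\pi$ already starts on the line $y = j$. The paper's convention assigns $(0,j)$ to $\mathit{ver}_j$ rather than $\hor_j$, but the convexity rerouting still works by taking $x_0 = 0$, so this does not break the argument for Observation~\ref{recurrence-2}; one merely has to refrain from asserting that $\pi$ enters $\hor_j$ at a strictly interior crossing point, and instead argue only that $\pi$ visits some point of $\mathit{fbox}_j$ with $y$-coordinate exactly $j$.
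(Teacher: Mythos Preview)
Your proposal is correct and follows essentially the same approach as the paper: both arguments translate the Fr\'echet bound into a monotone free-space path, observe that for $j'<j$ this path must meet the horizontal boundary at level $j$, and then use convexity of $\mathit{fbox}_j$ (Fact~\ref{convexity}) to reroute the path to terminate at $(1,t_{i,j})$. Your write-up is slightly more explicit about the two directions and about the degenerate case $\DP(k-1,i',j')=j$, which the paper glosses over, but the underlying idea is identical.
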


\begin{proof}
 
  \begin{figure}
 \begin{subfigure}{0.5\textwidth}
   \includegraphics[scale=0.5]{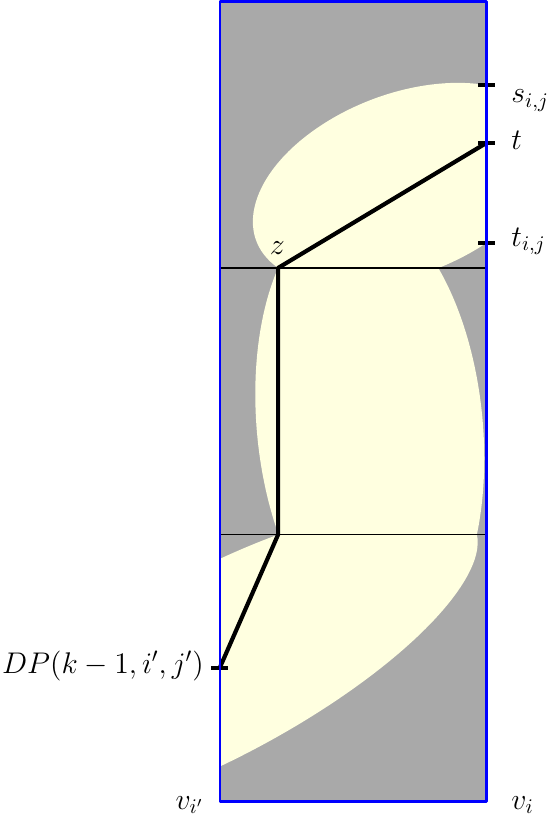}
 \end{subfigure}
 \begin{subfigure}{0.5\textwidth}
   \includegraphics[scale=0.5]{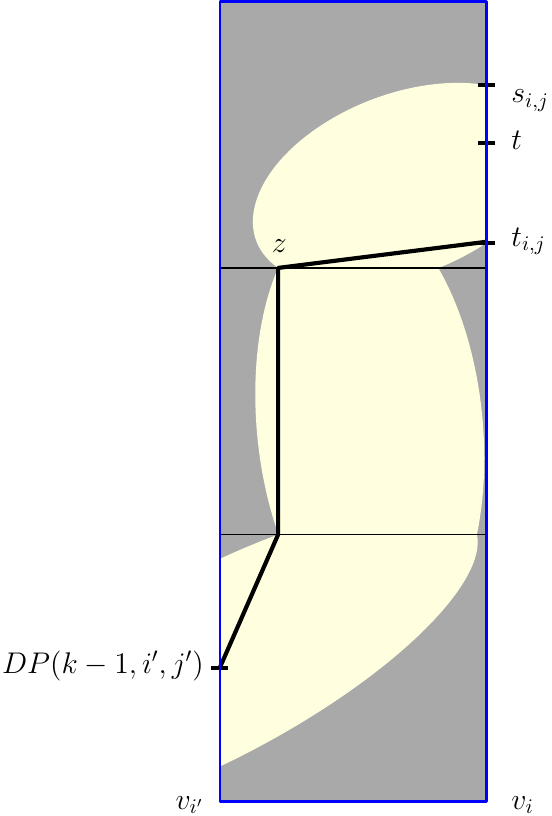}
 \end{subfigure}
 \caption{Illustration of the proof of Observation \ref{recurrence-2}. For $t_{i,j} \leq t \leq s_{i,j}$, there is a monotone path from $(0,\DP(k-1,i',j'))$ to $(1,t)$ in the free-space $\FS(P,\overline{v_{i'}v_i})$ (left) for some $i' < i$ and $j' < j$ that intersect $\hor_j$ at $z$. Then there is also a monotone path from $(0,\DP(k-1,i',j'))$ to $(1,t_{i,j})$ (right) in the free-space $\FS(P,\overline{v_{i'}v_i})$  following the same monotone path from $(0,\DP(k-1,i',j')$ to $z$ and then from $z$ to $(1,t_{i,j})$.}
 \label{reachability-2}
\end{figure}

 Let $t \in [t_{i,j},s_{i,j}]$ be minimal such that $\DP(k-1,i',j') \neq \infty$ and $\delta_F(P[\DP(k-1,i',j') \ldots t],$ $ \overline{v_{i'}v_i}) \leq \delta$ for some $i'<i$ and $j' < j$. If such a $t$ exists then $\overline{DP}_2(k,i,j)=1$. Observe that for any $i' < i$ and $j' < j$, if there is a monotone path from $(0,\DP(k-1,i',j'))$ to $(1,t)$ in $\FS(P,\overline{v_{i'}v_i})$, then the path intersects $\mathit{hor}_j$ (at say $z$). Since $\mathit{fbox}_j$ is convex, the line segment connecting $z$ and $(1,t_{i,j})$ lies inside $\mathit{fbox}_j$ and hence inside $\FS(P,\overline{v_{i'}v_i})$. Thus there is a monotone path from $(0,\DP(k-1,i',j'))$ to $(1,t_{i,j})$ in $\FS(P,\overline{v_{i'}v_i})$ following the monotone path from $(0,\DP(k-1,i',j'))$ to $z$ and then from $z$ to $(1,t_{i,j})$ (see Figure \ref{reachability-2}). Since $t \geq t_{i,j}$ and is minimal, we have $t = t_{i,j} = \DP_2(k,i,j)$. Similarly if such a $t$ does not exist then $\overline{\DP}_2(k,i,j) =0$ and $\DP_2(k,i,j)=\infty$.   
\end{proof}
 
\begin{lemma}
 \label{First-formulation}
 $\DP(k,i,j) = \min (\DP_1(k,i,j), \DP_2(k,i,j))$.
\end{lemma}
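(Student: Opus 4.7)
The plan is to combine the three characterizations already established. Lemma \ref{recurrence} expresses $\DP(k,i,j)$ as the minimal $t \in [t_{i,j},s_{i,j}]$ such that there exists a witness pair $(i',j')$ with $i' < i$, $j' \leq j$, $\DP(k-1,i',j') \neq \infty$, and $\delta_F(P[\DP(k-1,i',j') \ldots t], \overline{v_{i'}v_i}) \leq \delta$. Observations \ref{recurrence-1} and \ref{recurrence-2} give identical characterizations for $\DP_1(k,i,j)$ and $\DP_2(k,i,j)$, but restricted to $j' = j$ and $j' < j$, respectively. The key point is that all the geometric heavy lifting, namely analyzing how a monotone path from $(0,\DP(k-1,i',j'))$ enters $\mathit{fbox}_j$ via either $\mathit{ver}_j$ (forcing $j' = j$) or $\mathit{hor}_j$ (forcing $j' < j$), and invoking the convexity of $\mathit{fbox}_j$ (Fact \ref{convexity}), has already been absorbed into the proofs of those two observations.

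With this in hand, the remaining argument is the elementary set-theoretic fact that for any function $g$ on a finite set $S = S_1 \sqcup S_2$, one has $\min_{s \in S} g(s) = \min\bigl(\min_{s \in S_1} g(s),\, \min_{s \in S_2} g(s)\bigr)$, under the convention that the minimum over an empty set is $+\infty$. I would apply this with $S = \{(i',j') : i' < i,\, j' \leq j\}$, $S_1 = \{(i',j') \in S : j' = j\}$, $S_2 = \{(i',j') \in S : j' < j\}$, and $g(i',j')$ equal to the minimal admissible $t \in [t_{i,j}, s_{i,j}]$ induced by the witness $(i',j')$ via the Fr\'echet condition, or $+\infty$ if no such $t$ exists.

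The only subtlety is the $\infty$-bookkeeping. If $\DP_1(k,i,j) = \DP_2(k,i,j) = \infty$, then no admissible witness exists in $S_1$ or $S_2$, hence none in $S$, so by Lemma \ref{recurrence} also $\DP(k,i,j) = \infty$, matching the right-hand side. Otherwise, at least one of $\DP_1, \DP_2$ is finite and realized by a concrete witness; by Lemma \ref{recurrence} this witness is a candidate for $\DP(k,i,j)$, and no candidate outside $S_1 \cup S_2$ can give a smaller value because the quantifier in Lemma \ref{recurrence} ranges exactly over $S$. I do not foresee any real obstacle here; this lemma is essentially a packaging step that recombines the two case analyses, and the proof can be written in a few lines once the disjoint-union structure is stated explicitly.
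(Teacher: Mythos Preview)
Your proposal is correct and matches the paper's own proof, which simply states that the lemma follows directly from Lemma~\ref{recurrence} and Observations~\ref{recurrence-1} and~\ref{recurrence-2}. You have merely spelled out explicitly the disjoint-union-of-witnesses argument that the paper leaves implicit.
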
 

\begin{proof}
 Follows directly from Observations \ref{recurrence}, \ref{recurrence-1}, and \ref{recurrence-2}.
\end{proof}
In particular this yields a dynamic programming formulation for $\DP(k,i,j)$, since both $\DP_1(k,i,j)$ and $\DP_2(k,i,j)$ depends on values of $\DP(k',i',j')$ with $k' < k$, $i' < i$ and $j' \leq j$.

We define $\kappa(i,j)$ as the minimal $k$ such that $\DP(k,i,j) \neq\infty$. Similarly we define $\kappa_1(i,j)$ and  $\kappa_2(i,j)$ as the minimal $k$ such that ${\DP}_1(k,i,j) \neq \infty$ and $\DP_2(k,i,j) \neq \infty$ respectively. Note that $\kappa(i,j) = \min(\kappa_1(i,j),\kappa_2(i,j))$ (by Lemma \ref{First-formulation}). Also note that both $\kappa_1(i,j)$ and $\kappa_2(i,j)$ depends only on the values of $\DP(k',i',j')$ with $k' < k$, $i' < i$ and $j' \leq j$.     

With these preparations can now present our dynamic programming algorithm, except for one subroutine \emph{$\kappa_2$-subroutine$(i)$} that we describe in Section \ref{kappa2subroutine}. In particular, for any $i$, $\kappa_2$-subroutine$(i)$ determines $\kappa_2(i,j)$ for all $j \in [n]$ in time $T(n)$ only using the values of $\kappa(i',j)$ for all $i' < i$ and all $0 \leq j \leq n-1$. Now we show how to update $\DP_1(k,i,j)$. Observe that for any $i$, $j$ and $k$ we can update $\overline{\DP}_1(k,i,j)$ from $\overline{\DP}_1(k,i-1,j)$ and $\DP(k-1,i-1,j)$ as $\overline{\DP}_1(k,i,j) = \min(\overline{\DP}_1(k,i-1,j),\DP(k-1,i-1,j))$. Thereafter we can update $\DP_1(k,i,j)$ by using the formulation in Lemma \ref{recurrence-1} and update $\kappa_1(i,j)$ to the minimal $k$ such that $\DP_1(k,i,j) \neq \infty$. This shows that we determine $\overline{\DP}_1(k,i,j)$ and $\kappa_1(i,j)$\ in $\mathcal{O}(1)$ and $\mathcal{O}(n)$ time respectively. Now we show how to update $\DP_2(k,i,j)$. Notice that $\DP_2(k,i,j) =t_{i,j}$ if and only if $k \geq \kappa_2(i,j)$ and $\DP_2(k,i,j) = \infty$ otherwise. Also, we can set $\kappa(i,j)$ as $\min(\kappa_1(i,j),\kappa_2(i,j))$. Hence, we can determine $\DP_2(k,i,j)$ and  $\kappa(i,j)$ in $\mathcal{O}(1)$ time. Henceforth we can also update $\DP(k,i,j)$ by the formulation in Lemma \ref{First-formulation} in $\mathcal{O}(1)$ time.

\begin{algorithm}[H]
 \begin{algorithmic}[1]
  \State \textbf{Determine} $t_{i,j}$ and $s_{i,j}$ for all $0 \leq i \leq n$ and $0 \leq j \leq n-1$
  \State \textbf{Determine} the largest $j_0$ such that $\norm{v_0-v_j}{p} \leq \delta$ for all $j\leq j_0$
  \State \textbf{Set} $\overline{\DP}_1(k,0,j),\DP(k,0,j)$ to $0$ for all $j \leq j_0$ and to $\infty$ otherwise (for all  $k \in [n+1]$)
  \State \textbf{Set} $\kappa(0,j)$ to $1$ for all $j \leq j_0$ and to $\infty$ otherwise
  \State \textbf{Set} $\DP(0,i,j)$ to $\infty$ for all $i,j \in [n]$
  \For{$i=1$ to $n$}
      \State \emph{Determine $\kappa_2(i,j)$ for all $0 \leq j \leq n-1$ using $\kappa_2$-subroutine$(i)$}
      \For{$j=0$ to $n-1$}
        \For{$k= 1$ to $n+1$}
           \State \textbf{Set} $\overline{\DP}_1(k,i,j)$ to   $\min(\overline{\DP}_1(k,i-1,j),\DP(k-1,i-1,j))$
           \State \textbf{Set} ${\DP}_1(k,i,j)$ to  $\max(\overline{\DP}_1(k,i,j),t_{i,j})$ if $\overline{\DP}_1(k,i,j) \leq s_{i,j}$ and to $\infty$ otherwise 
        \EndFor
        \State \textbf{Set} $\kappa_1(i,j)$ to the smallest $k$ such that $\DP_1(k,i,j) \neq \infty$
        \State \textbf{Set} $\kappa(i,j) = \min(\kappa_1(i,j),\kappa_2(i,j))$
        
        \For{$k= 1$ to $n+1$}
            \State \textbf{Set} $\DP_2(k,i,j)$ to $t_{i,j}$ if $k \geq \kappa_2(i,j)$ and to $\infty$ otherwise
            \State \textbf{Set} ${\DP}(k,i,j)$ to $\min({\DP}_1(k,i,j),\DP_2(k,i,j))$
         \EndFor
     \EndFor
   \EndFor
   \State \textbf{Return} $\kappa(n,n-1)$
 \end{algorithmic}
 \caption {Solving curve simplification under Global-Fr\'echet distance}
 \label{algorithm-1}
\end{algorithm}

Algorithm \ref{algorithm-1} takes $\mathcal{O}(n \cdot T(n))$ time for  determining $\kappa_2(i,j)$ for all $i,j \in [n]$. The time taken to update $\kappa_1(i,j)$ and $\kappa(i,j)$ is $\mathcal{O}(n)$ and $\mathcal{O}(1)$ respectively. All the $\DP$ cells are updated in $\mathcal{O}(1)$ time. Since there are $\mathcal{O}(n^2)$ $\kappa$ cells and $\mathcal{O}(n^3)$ $\DP$ cells, the total running time of our algorithm is $\mathcal{O}(n^3 + n\cdot T(n))$.


\subsection{Implementing $\kappa_2$-subroutine$(i)$}
\label{kappa2subroutine}
In this subsection we show how to implement step 7 of Algorithm \ref{algorithm-1} in time $T(n) = \mathcal{O}(n^2)$. Then in total we have $\mathcal{O}(n^3)$ for solving Global-Fr\'echet simplification. 
\subsubsection{Cell Reachability}
We introduce an auxiliary problem that we call \emph{Cell Reachabilty}. We shall see later that an $\mathcal{O}(n)$ time solution to this problem ensures that the $\kappa_2$-subroutine$(i)$ can be implemented in  time $T(n) = \mathcal{O}(n^2)$.

\begin{definition}
In an instance of the {Cell Reachabilty} problem, we are given 
\begin{itemize}
    \item A set of $n$ \emph{cells}. Each cell $j$ with $1\leq j \leq n$ is a unit square with corner points $(0,j)$ and $(1,j+1)$. We say that cells $j$ and ${j+1}$ are \emph{consecutive}. 
    \item An integral \emph{entry-cost} $\lambda_j > 0$ for every cell $j$.
    \item A set of $n-1$ \emph{passages} between consecutive cells. The passage $p_j$ is the horizontal line segment with endpoints $(j,a_j)$ and $(j,b_j)$ where $b_j > a_j$.
\end{itemize}
 
  We say that a cell  $j$ is \emph{reachable} from a cell ${j'}$ with $j' < j$ if and only if there exists $ x_{j'+1} \leq x_{j'+2}  \ldots \leq x_{j}$ such that $x_k \in [a_k,b_k]$ for every $j' < k \leq j$. Intuitively cell $j$ is reachable from cell ${j'}$ if and only if there is a monotone path through the passages from cell ${j'}$ to cell $j$. We define the \emph{exit-cost} $\mu_j$ of a cell $j$ as the minimal $\lambda_{j'}$ such that $j$ is reachable from cell ${j'}$, $j' < j$. {The goal of the problem is to determine the sequence $\langle \mu_1,\mu_2, \ldots ,\mu_n \rangle$.} See Figure \ref{Cellreachability1} for an illustration. 
\end{definition}

\begin{figure}
  \centering
 \begin{subfigure}{0.3\textwidth}

\begin{tikzpicture}

\draw[blue, very thick, fill = green!5] (0,0) rectangle (2,8);

\draw (0,2) -- (2,2);
\draw (0,4) -- (2,4);
\draw (0,6) -- (2,6);

\node[above =0pt of {(0.4,7)}]{$ \scriptstyle \mathit{Cell} \text{ } 4$};
\node[above =0pt of {(0.4,5)}]{$ \scriptstyle \mathit{Cell} \text{ } 3$};
\node[above =0pt of {(0.4,3)}]{$ \scriptstyle \mathit{Cell} \text{ } 2$};
\node[above =0pt of {(0.4,0.5)}]{$ \scriptstyle \mathit{Cell} \text{ } 1$};


\draw[red, ultra thick, ->] (0-0.2,1)--(0,1);
 \node[left = 0pt of {(0-0.2,1)}]{$\scriptstyle 1 = \lambda_{1}$};

\draw[red, ultra thick, ->] (0-0.2,3)--(0,3);
 \node[left = 0pt of {(0-0.2,3)}]{$\scriptstyle 4 = \lambda_{2}$};
 
\draw[red, ultra thick, ->] (0-0.2,5)--(0,5);
 \node[left = 0pt of {(0-0.2,5)}]{$\scriptstyle 8 = \lambda_{3}$};


\draw[dashed] (0,1) -- (1.2,2) -- (1.2,4) -- (1.2,6);
\draw[black, ultra thick] (1.2-0.1,6-0.1) -- (1.2+0.1,6+0.1);
\draw[black, ultra thick] (1.2-0.1,6+0.1) -- (1.2+0.1,6-0.1);

\draw[red, ultra thick] (1.2,2)--(1.7,2);
\draw[red, ultra thick] (0.6,4)--(1.3,4);
\draw[red, ultra thick] (0.2,6)--(0.8,6);

\end{tikzpicture}
 \end{subfigure}
 \begin{subfigure}{0.3\textwidth}

\begin{tikzpicture}

\draw[blue, very thick, fill = green!5] (0,0) rectangle (2,8);

\draw (0,2) -- (2,2);
\draw (0,4) -- (2,4);
\draw (0,6) -- (2,6);

\node[above =0pt of {(0.4,7)}]{$ \scriptstyle \mathit{Cell} \text{ } 4$};
\node[above =0pt of {(1.4,5)}]{$ \scriptstyle \mathit{Cell} \text{ } 3$};
\node[above =0pt of {(1.4,3)}]{$ \scriptstyle \mathit{Cell} \text{ } 2$};
\node[above =0pt of {(0.4,0.5)}]{$ \scriptstyle \mathit{Cell} \text{ } 1$};


\draw[red, ultra thick, ->] (0-0.2,1)--(0,1);
 \node[left = 0pt of {(0-0.2,1)}]{$\scriptstyle 1 = \lambda_{1}$};

\draw[red, ultra thick, ->] (0-0.2,3)--(0,3);
 \node[left = 0pt of {(0-0.2,3)}]{$\scriptstyle 4 = \lambda_{2}$};
 
\draw[red, ultra thick, ->] (0-0.2,5)--(0,5);
 \node[left = 0pt of {(0-0.2,5)}]{$\scriptstyle 8 = \lambda_{3}$};


\draw[dashed] (0,3) -- (0.7,4) -- (0.7,6);

\draw[red, ultra thick] (1.2,2)--(1.7,2);
\draw[red, ultra thick] (0.6,4)--(1.3,4);
\draw[red, ultra thick] (0.2,6)--(0.8,6);

\end{tikzpicture}
 \end{subfigure}
 \begin{subfigure}{0.3\textwidth}

\begin{tikzpicture}

\draw[blue, very thick, fill = green!5] (0,0) rectangle (2,8);

\draw (0,2) -- (2,2);
\draw (0,4) -- (2,4);
\draw (0,6) -- (2,6);

\node[above =0pt of {(0.4,7)}]{$ \scriptstyle \mathit{Cell} \text{ } 4$};
\node[above =0pt of {(0.6,5)}]{$ \scriptstyle \mathit{Cell} \text{ } 3$};
\node[above =0pt of {(0.4,3)}]{$ \scriptstyle \mathit{Cell} \text{ } 2$};
\node[above =0pt of {(0.4,0.5)}]{$ \scriptstyle \mathit{Cell} \text{ } 1$};


\draw[red, ultra thick, ->] (0-0.2,1)--(0,1);
 \node[left = 0pt of {(0-0.2,1)}]{$\scriptstyle 1 = \lambda_{1}$};

\draw[red, ultra thick, ->] (0-0.2,3)--(0,3);
 \node[left = 0pt of {(0-0.2,3)}]{$\scriptstyle 4 = \lambda_{2}$};
 
\draw[red, ultra thick, ->] (0-0.2,5)--(0,5);
 \node[left = 0pt of {(0-0.2,5)}]{$\scriptstyle 8 = \lambda_{3}$};


\draw[dashed] (0,5) -- (0.2,6);

\draw[red, ultra thick] (1.2,2)--(1.7,2);
\draw[red, ultra thick] (0.6,4)--(1.3,4);
\draw[red, ultra thick] (0.2,6)--(0.8,6);

\end{tikzpicture}
 \end{subfigure} 
 \caption{Illustrating an instance of Cell Reachability. The red horizontal line segments between the cells indicate the passages. Note that cell 4 is only reachable from cells 2 and 3. therefore $\mu_4 = \min(\lambda_2,\lambda_3) = \min(4,8) = 4$.}
 \label{Cellreachability1}
\end{figure}
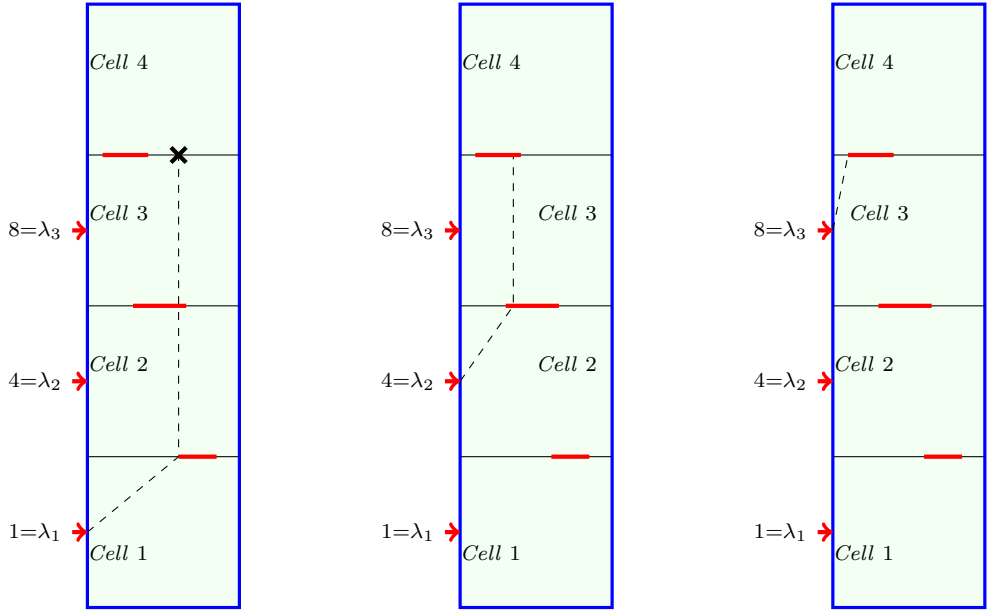

  We make a more refined notion of reachability. For any cells $j$ and $j' < j$ we define the \emph{first reachable point} $\frp(j,j')$ on cell $j$ from cell $j'$ as the minimal $t$ such that there exists $x_{j'+1} \leq x_{j'+2} \leq  \ldots \leq x_j$ such that each $x_k \in [a_k,b_k]$ for every $j' < k \leq j$ and $x_j = t$ and we set $\frp(j,j') = \infty$ if there exists no such $t$. Let $t_j(k)$ be the first reachable point on cell $j$ from any cell $j'$ with entry-cost at most $k$ i.e. $t_j(k) = \min \limits_{j' < j, \lambda_{j'} \leq k} \frp(j,j')$. In particular we have $t_j(0) = \infty$, since $\lambda_{j'} > 0$ for all $j' <j$. We now make some simple observations about $t_j(k)$.

\begin{observation}
  \label{sufficiency}
    $\mu_j$ is the minimal $k$ such that $t_j(k) \neq \infty$.
  \end{observation}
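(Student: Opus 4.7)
The plan is to unpack the definitions of $\mu_j$ and $t_j(k)$ and verify the equivalence via a two-sided argument. First I will observe the key reformulation: by the definition of $t_j(k) = \min_{j' < j,\, \lambda_{j'} \le k} \frp(j,j')$ together with the fact that $\frp(j,j') < \infty$ iff cell $j$ is reachable from cell $j'$, the condition $t_j(k) \ne \infty$ is equivalent to the existence of some $j' < j$ with $\lambda_{j'} \le k$ from which $j$ is reachable. This makes the statement essentially a tautology once the definitions are rewritten, but I will still spell out both directions to be safe.

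For the ``upper bound'' direction, I will let $k^* = \mu_j$ and choose a cell $j^* < j$ witnessing $\mu_j$, i.e., one from which $j$ is reachable and with $\lambda_{j^*} = \mu_j = k^*$. Since $\lambda_{j^*} \le k^*$ and $\frp(j, j^*) < \infty$, the minimization defining $t_j(k^*)$ picks up at least this finite term, so $t_j(\mu_j) \ne \infty$.

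For the ``lower bound'' direction, I will argue that no smaller $k$ works. Since all entry-costs are positive integers, it suffices to show $t_j(k) = \infty$ for every integer $k < \mu_j$. For such $k$, any index $j' < j$ with $\lambda_{j'} \le k$ satisfies $\lambda_{j'} < \mu_j$, and by the very definition of $\mu_j$ (as the minimum of $\lambda_{j'}$ over cells $j'$ from which $j$ is reachable) cell $j$ cannot be reachable from $j'$. Hence $\frp(j,j') = \infty$ for every $j'$ in the index set of the minimum defining $t_j(k)$, so $t_j(k) = \infty$. Combining both directions, $\mu_j$ is precisely the minimal $k$ with $t_j(k) \ne \infty$.

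I do not anticipate any real obstacle here; the observation is a definitional reshuffle, and the only subtlety is being explicit that the minimum over an empty set (or over indices all yielding $\infty$) is $\infty$ and that the integrality of $\lambda_{j'}$ makes the ``minimal $k$'' well-defined as an integer.
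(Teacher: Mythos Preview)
Your proof is correct and follows essentially the same approach as the paper: both hinge on the equivalence that $t_j(k) \neq \infty$ iff cell $j$ is reachable from some cell $j' < j$ with $\lambda_{j'} \le k$, from which the claim is immediate. Your version is just more explicit about spelling out the two directions.
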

  \begin{proof}
    We have $t_j(k) \neq \infty$ if and only if cell $j$ is reachable from some cell $j' < j$ with entry-cost $\lambda_{j'} \leq k$. Therefore the minimal such $\lambda_{j'}$ is the minimal $k$ at which $t_j(k) \neq \infty$. 
  \end{proof}
  
  \begin{observation}
    \label{monotonicity}
    We have $t_j(k+1) \leq t_j(k)$ for any $j \in [n]$ and $k \geq 0$. 
  \end{observation}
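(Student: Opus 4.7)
The plan is to prove this monotonicity statement directly from the definition of $t_j(k)$ as a minimum of $\frp(j,j')$ over a constrained set of indices $j'$. The key observation is simply that the constraint set grows as $k$ increases, so the minimum can only decrease or stay the same.

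More concretely, I would first write out the definition $t_j(k) = \min_{j' < j,\, \lambda_{j'} \leq k} \frp(j,j')$ and $t_j(k+1) = \min_{j' < j,\, \lambda_{j'} \leq k+1} \frp(j,j')$. Then I would observe that
\[
S_k := \{j' < j : \lambda_{j'} \leq k\} \;\subseteq\; \{j' < j : \lambda_{j'} \leq k+1\} =: S_{k+1},
\]
since any $j'$ satisfying $\lambda_{j'} \leq k$ also satisfies $\lambda_{j'} \leq k+1$. Minimizing a fixed function over a superset can only produce a smaller-or-equal value, so $t_j(k+1) \leq t_j(k)$. I would also handle the degenerate cases: if $S_k = \emptyset$ then $t_j(k) = \infty$ by convention and the inequality holds trivially; if $S_{k+1} = \emptyset$ as well, then both sides equal $\infty$.

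There is essentially no obstacle here — the statement is an immediate consequence of the fact that the minimum of a set-valued function is monotone non-increasing with respect to enlarging the domain of minimization. No properties of $\frp$, reachability, or the geometry of cells and passages are needed beyond the definition. The proof is a one-liner once the set-containment is pointed out.
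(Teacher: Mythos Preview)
Your proof is correct and takes exactly the same approach as the paper: the paper's proof is the single sentence ``The minimum in the definition of $t_j(k+1)$ is taken over a superset compared to $t_j(k)$,'' which is precisely your set-containment argument $S_k \subseteq S_{k+1}$ spelled out in more detail.
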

  \begin{proof}
    The minimum in the definition of $t_j(k+1)$ is taken over a superset compared to $t_j(k)$.
  \end{proof}

  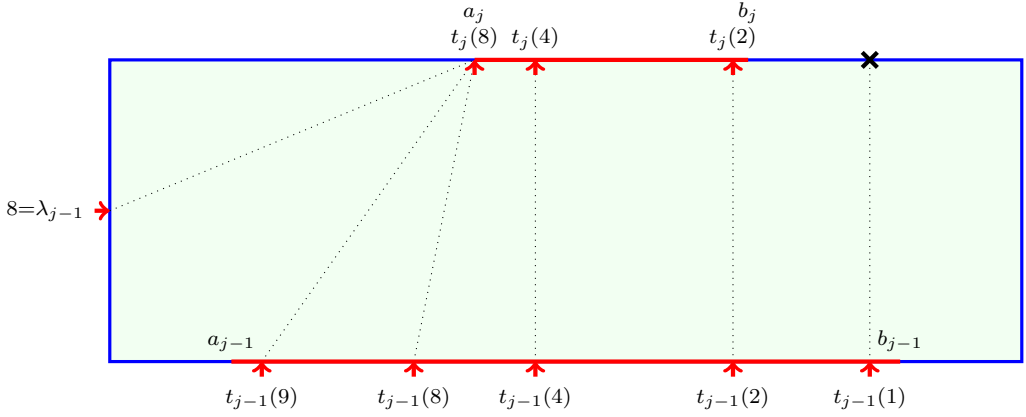
\begin{figure}
    \centering

\begin{tikzpicture}

\draw[blue, very thick, fill = green!5] (0,0) rectangle (12,4);

\draw[red, ultra thick, ->] (0-0.2,2)--(0,2);
 \node[left = 0pt of {(0-0.2,2)}]{$\scriptstyle 8 = \lambda_{j-1}$};


\draw[dotted] (0,2) -- (2*2.4,4);
\draw[dotted] (2,0) -- (2*2.4,4);
\draw[dotted] (4,0) -- (2*2.4,4);

\draw[dotted] (2*2.8,0) -- (2*2.8,4);
\draw[dotted] (2*4.1,0) -- (2*4.1,4);
 
\draw[dotted] (10,0) -- (10,4);

\draw[red, ultra thick] (1.6,0)--(10.4,0);
 \node[above = 0pt of {(2*0.8,0)}]{$\scriptstyle a_{j-1}$};
  \node[above = 0pt of {(2*5.2,0)}]{$\scriptstyle b_{j-1}$};

\draw[red, ultra thick] (2*2.4,4)--(2*4.2,4);
 \node[above = 0pt of {(2*2.4,4+0.35)}]{$\scriptstyle a_{j}$};
  \node[above = 0pt of {(2*4.2,4+0.35)}]{$\scriptstyle b_{j}$};


\draw[red, ultra thick, ->] (2*1,0-0.2) -- (2*1,0);
 \node[below = 0pt of {(2*1,0-0.2)}]{$\scriptstyle t_{j-1}(9)$};

\draw[red, ultra thick, ->] (2*2,0-0.2) -- (2*2,0);
 \node[below = 0pt of {(2*2,0-0.2)}]{$\scriptstyle t_{j-1}(8)$};

\draw[red, ultra thick, ->] (2*2.8,0-0.2) -- (2*2.8,0);
 \node[below = 0pt of {(2*2.8,0-0.2)}]{$\scriptstyle t_{j-1}(4)$};

\draw[red, ultra thick, ->] (2*4.1,0-0.2) -- (2*4.1,0);
  \node[below = 0pt of {(2*4.1,0-0.2)}]{$\scriptstyle t_{j-1}(2)$};

\draw[red, ultra thick, ->] (2*5,0-0.2) -- (2*5,0);
  \node[below = 0pt of {(2*5,0-0.2)}]{$\scriptstyle t_{j-1}(1)$};

\draw[black, ultra thick] (2*5-0.1,4-0.1) -- (2*5+0.1,4+0.1);
\draw[black, ultra thick] (2*5-0.1,4+0.1) -- (2*5+0.1,4-0.1);


\draw[red, ultra thick, ->] (2*2.4,4-0.2) -- (2*2.4,4);
 \node[above = 0pt of {(2*2.4,4)}]{$\scriptstyle t_{j}(8)$};
  
\draw[red, ultra thick, ->] (2*2.8,4-0.2) -- (2*2.8,4);
  \node[above = 0pt of {(2*2.8,4)}]{$\scriptstyle t_{j}(4)$}; 
 
\draw[red, ultra thick, ->] (2*4.1,4-0.2) -- (2*4.1,4);
   \node[above = 0pt of {(2*4.1,4)}]{$\scriptstyle t_{j}(2)$};

\end{tikzpicture}
    \caption{Illustration of the proof of Lemma \ref{recurrencefort}. Determining the function $t_j(\cdot)$ from $a_j$,$b_j$,$t_{j-1}(\cdot)$, and $\lambda_{j-1}$. For all $k \geq \lambda_{j-1} = 8$ we have $t_j(k) = a_j$. For $k =2$ and $k=4$ we have $k< \lambda_{j-1}$ and $t_{j-1}(k) \leq b_j$, implying $t_j(k) = t_j(k-1)$. Lastly for $k=1$ we have $t_{j-1}(k) > b_j$, implying $t_j(k) = \infty$.}
    \label{Cellreachability2}
  \end{figure}

 \begin{lemma}
 \label{recurrencefort}
  For any $j \in [n]$ and $k \geq 0$ we have 
 \begin{align*}
  t_j(k) &= \left\{ \begin{array}{ll}
          				a_j & \text{if }  k \geq \lambda_{j-1}\\
          				a_j & \text{if }   k < \lambda_{j-1} \text{ and } t_{j-1}(k) \leq a_j\\
          				t_{j-1}(k) & \text{if } k < \lambda_{j-1} \text{ and } t_{j-1}(k) \in (a_j,b_j]\\
          				\infty & \text{if } k < \lambda_{j-1} \text{ and } t_{j-1}(k) > b_j
          			\end{array}  
          			\right. 
  \end{align*}       
\end{lemma}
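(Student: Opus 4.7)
The plan is to unfold the definition $t_j(k) = \min_{j' < j,\, \lambda_{j'} \leq k} \frp(j, j')$ by separating the contribution of the starting cell $j' = j-1$ from the contributions of earlier starting cells $j' \leq j - 2$, and then perform a case analysis on $k$ versus $\lambda_{j-1}$ together with the position of $t_{j-1}(k)$ relative to the passage $[a_j, b_j]$ into cell~$j$.

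For the \emph{direct} contribution: if $\lambda_{j-1} \leq k$ then $j'=j-1$ is admissible, and since reaching cell $j$ from cell $j-1$ only requires crossing the single passage with $x$-range $[a_j, b_j]$ and we are free to choose any $x_j \in [a_j, b_j]$, we have $\frp(j, j-1) = a_j$. For the \emph{indirect} contribution with $j' \leq j-2$: every monotone path from $j'$ to $j$ must enter cell $j-1$ at some $x_{j-1} \in [a_{j-1}, b_{j-1}]$ and subsequently cross into cell $j$ at $x_j \in [a_j, b_j]$ with $x_j \geq x_{j-1}$. Since $\max(\cdot, a_j)$ is non-decreasing, the smallest feasible $x_j$ is obtained by using the smallest admissible $x_{j-1}$, yielding $\frp(j, j') = \max(\frp(j-1, j'), a_j)$ if this is $\leq b_j$ and $\infty$ otherwise. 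Taking the minimum over admissible $j' \leq j-2$ and invoking the definition of $t_{j-1}(k)$, the best indirect contribution evaluates to $\max(t_{j-1}(k), a_j)$ when this value is $\leq b_j$, and $\infty$ otherwise.

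Combining the two contributions via a case split produces the four cases of the lemma. If $k \geq \lambda_{j-1}$, the direct contribution $a_j$ is a lower bound on any indirect contribution, so $t_j(k) = a_j$. If $k < \lambda_{j-1}$, only the indirect contribution survives, and the three sub-cases $t_{j-1}(k) \leq a_j$, $t_{j-1}(k) \in (a_j, b_j]$, and $t_{j-1}(k) > b_j$ evaluate $\max(t_{j-1}(k), a_j)$ against the feasibility bound $b_j$ to give $a_j$, $t_{j-1}(k)$, and $\infty$ respectively.

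The main subtlety lies in the bookkeeping of the indirect case: one must argue that taking $\min_{j'}$ of $\max(\frp(j-1, j'), a_j)$ commutes with the $\max$ (using monotonicity in the first argument), and that the feasibility check $\leq b_j$ remains correct when applied only at the overall minimizer $t_{j-1}(k)$ rather than at each $\frp(j-1, j')$ separately. A small trap to watch for is the case $k \geq \lambda_{j-1}$ with $t_{j-1}(k) = \infty$: the indirect contribution is then $\infty$, but the direct contribution still yields $a_j$, and this must not be confused with the $k < \lambda_{j-1},\, t_{j-1}(k) > b_j$ sub-case that correctly evaluates to $\infty$.
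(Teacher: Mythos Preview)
Your proposal is correct and follows essentially the same approach as the paper's proof: both isolate the contribution of $j'=j-1$ (giving $\frp(j,j-1)=a_j$ and hence $t_j(k)=a_j$ whenever $k\ge\lambda_{j-1}$), and for $k<\lambda_{j-1}$ both reduce to showing $t_j(k)=\max(t_{j-1}(k),a_j)$ when $t_{j-1}(k)\le b_j$ and $t_j(k)=\infty$ otherwise. The only organizational difference is that you first establish the per-$j'$ identity $\frp(j,j')=\max(\frp(j-1,j'),a_j)$ (or $\infty$) and then minimize, whereas the paper directly proves the two inequalities $t_j(k)\ge\max(t_{j-1}(k),a_j)$ and $t_j(k)\le\max(t_{j-1}(k),a_j)$ by picking a witness path for each direction; the monotonicity/commutation point you flag is exactly what makes these two presentations equivalent.
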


\begin{proof}
  See Figure \ref{Cellreachability2} for an illustration. Note that $\frp(j,j-1) = a_j$.  Therefore if $\lambda_{j-1} \leq k$, then $t_j(k) = \min \limits_{j' < j, \lambda_{j'} \leq k} \frp(j,j') \leq \frp(j,j-1) = a_j $. Since $t_j(k) \geq a_j$, we conclude that $t_j(k)=a_j$. Now we discuss the cases when $k < \lambda_{j-1}$. Let $t_j(k) = \frp(j,j')$. Since $\lambda_{j-1} > k$ we have $j' < j-1$. Therefore there exist $x_{j'+1} \leq x_{j'+2} \leq \ldots \leq  x_{j-1} \leq x_j$ such that $x_k \in [a_k,b_k]$ for every $j' < k \leq j$ with $x_j = t_j(k)$. Note that $t_{j-1}(k) \leq x_{j-1} \leq x_j = t_j(k)$. Thus $t_j(k) \geq \max(t_{j-1}(k),a_j)$. In particular, if $t_{j-1}(k) > b_j$, then $t_j(k) = \infty$. Now we look into the case when $t_{j-1}(k) \leq b_j$. Observe that if $t_{j-1}(k) \leq b_j$ then there exists $\hat{j} < j-1$ and there exists $ x_{\hat{j}+1} \leq x_{\hat{j}+2} \leq  \ldots \leq x_{j-1} = t_{j-1}(k)$ such that $x_k \in [a_k,b_k]$ for every $\hat{j} < k \leq j-1$. Setting $x_j = \max(a_j,t_{j-1}(k))$ and there exists $ x_{\hat{j}+1} \leq x_{\hat{j}+2} \leq  \ldots \leq x_{j-1} \leq x_j$ such that $x_k \in [a_k,b_k]$ for every $\hat{j} \leq k \leq j$ and hence $t_j(k) \leq \max(a_j,t_{j-1}(k))$. Combining the two inequalities we get that $t_j(k) = \max(a_j,t_{j-1}(k))$ when $t_{j-1}(k) \leq b_j$.
\end{proof}

Lemma \ref{recurrencefort} yields a recursive definition for $t_j( \cdot )$. To ensure that we can solve an instance of \emph{cell reachability} in $\mathcal{O}(n)$ time, if suffices to determine $t_{j}(\cdot)$ from $t_{j-1}(\cdot)$ and $\mu_j$ from $t_j(\cdot)$ in $\mathcal{O}(1)$ amortized time. 
To this end, let $S_j = \set{k \geq 0}{t_j(k) < t_j(k-1)}$ and let $L_j$ be the \emph{doubly linked list} storing the pairs $(k,t_j(k))$ for every $k \in S_j$, sorted in descending order of $k$ (or equivalently in increasing order of $t_j(k)$). To develop some intuition note that for any $k$ and $j$ if we have $t_j(k) = t_j(k-1)$, then this means that every cell $j' \geq j$ that is reachable from a cell $\hat{j} \leq j$ with entry-cost at most $k$ is also reachable from some cell $\tilde{j} \leq j$ with entry-cost at most $k-1$. Since we are only interested in reachability from a cell of minimum entry-cost, we can ignore reachability from all cells below cell $j$ with entry costs $k$. Therefore it suffices to focus on the set $S_j$ and the corresponding $\mu_j$. In particular we can determine $\mu_j$ from $S_j$ as following,    
 
 \begin{lemma}
   The minimal positive $k$ in $S_j$ is equal to $\mu_j$.
 \end{lemma}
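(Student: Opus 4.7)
The plan is to use Observations~\ref{sufficiency} and~\ref{monotonicity} together with the initial condition $t_j(0) = \infty$. Recall that $\mu_j$ is characterized (Observation~\ref{sufficiency}) as the minimal $k$ with $t_j(k) \neq \infty$, and that $t_j(\cdot)$ is non-increasing in $k$ (Observation~\ref{monotonicity}). Also, since every entry-cost $\lambda_{j'}$ is a positive integer, the set $\{j' < j : \lambda_{j'} \leq 0\}$ is empty, so $t_j(0) = \infty$.

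First I would show that no positive integer $k < \mu_j$ belongs to $S_j$. Indeed, for such $k$ we have $k - 1 \geq 0$ and $k, k-1 < \mu_j$, so by the characterization of $\mu_j$ both $t_j(k)$ and $t_j(k-1)$ are equal to $\infty$. Therefore the strict inequality $t_j(k) < t_j(k-1)$ fails and $k \notin S_j$.

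Next I would show $\mu_j \in S_j$. Since $\mu_j \geq 1$ (we just noted $t_j(0) = \infty$, so $\mu_j > 0$), the value $t_j(\mu_j - 1) = \infty$ by minimality of $\mu_j$, while $t_j(\mu_j) \neq \infty$ by definition of $\mu_j$. Hence $t_j(\mu_j) < t_j(\mu_j - 1)$, so $\mu_j \in S_j$. Combining the two claims, $\mu_j$ is the minimal positive element of $S_j$. There is no real obstacle here; the statement is essentially a bookkeeping consequence of monotonicity, the convention $t_j(0) = \infty$, and the characterization of $\mu_j$ in Observation~\ref{sufficiency}.
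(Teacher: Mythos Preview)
Your proof is correct and follows essentially the same approach as the paper's own proof. The paper argues in a single sentence that, since $t_j(0)=\infty$, the minimal positive $k$ in $S_j$ is precisely the minimal $k$ with $t_j(k)\neq\infty$, and then invokes Observation~\ref{sufficiency}; you have simply unpacked this step into the two explicit claims (no $k<\mu_j$ lies in $S_j$, and $\mu_j\in S_j$), which is exactly the underlying reasoning.
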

 \begin{proof}
  Since $t_j(0) = \infty$, the minimal positive $k$ in $S_j$ is the minimal $k$ such that $t_j(k) \neq \infty$. By Observation \ref{sufficiency} this is equal to $\mu_j$. 
 \end{proof}
 
 We now outline a simple algorithm to determine $L_{j}$ from $L_{j-1}$. Again see Figure \ref{Cellreachability2} for illustration. The algorithm first determines $k_{\mathit{left}}$, the minimal $k$ such that $t_j(k) = a_j$, by moving the head of the list $L_{j-1}$ to the right as long as $k \geq \lambda_{j-1}$ or $t_{j-1}(k) \leq a_j$ (correctness follows directly from Lemma \ref{recurrencefort}). Observe that $t_j(k) = t_j(k_{\mathit{left}}) = a_j$ for all $k \geq k_{\mathit{left}}$.  Next it determines $k_{\mathit{right}}$, the minimal $k$ such that $t_j(k) \leq b_j$ by moving the tail of $L_{j-1}$ to the minimal $k$ such that $t_{j-1}(k) \leq b_j$. Note that at this point we have already inserted $(k_{\mathit{left}},a_j)$ so $k_{\mathit{right}}$ is guaranteed to exits.(Again correctness follows from Lemma \ref{recurrencefort}). Observe that $t_j(k) = t_j(0) = \infty$ for all $k < k_{\mathit{right}}$. Thus we have $\mu_j = k_{\mathit{right}}$ . Now we are left with updating $L_j$ for pairs with $k \in (k_{\mathit{left}}, k_{\mathit{right}})$. Note that for $k \in (k_{\mathit{left}}, k_{\mathit{right}})$, we have $t_j(k) = t_{j-1}(k)$ (by Lemma \ref{recurrencefort}) and therefore $t_j(k) = t_j(k-1)$ if and only if $t_{j-1}(k) = t_{j-1}(k-1)$. Thus the sublist of $L_{j}$ corresponding to the values of $k \in (k_{\mathit{left}},k_{\mathit{right}})$ is same as the sublist of $L_{j-1}$ corresponding to the values of $k \in (k_{\mathit{left}},k_{\mathit{right}})$. Finally the algorithm appends a new node to $L_j$ storing $(0,\infty)$ (since $t_j(0)= \infty$).          
 
 \begin{algorithm}[H]
 \begin{algorithmic}[1]
    \State $L \gets L_{j-1}$
    \State $k_{\mathit{left}} \gets \lambda_j$
    \While{$k \geq \lambda_j$ or $t \leq a_j$, where $(k,t)= L.\mathit{front}()$}
      \State $k_\mathit{left} \gets \min(k_{\mathit{left}},k)$
      \State $L.\mathit{popfront}()$
     \EndWhile
     \State $L.\mathit{pushfront}((k_\mathit{left},a_j)$
     \While{$t > b_j$, where $(k,t) = L.\mathit{back}()$}
       \State $L.\mathit{popback}()$
     \EndWhile
     \State \textbf{Set} $\mu_j = k$, where $(k,t) = L.\mathit{back}()$.
     \State $L.\mathit{pushback}((0,\infty))$
     \State $L_j \gets L$
 \end{algorithmic}
 \caption {Determining $L_{j}$ from $L_{j-1}$}
\end{algorithm}
 
The number of operations performed to determine $L_j$ from $L_{j-1}$ and determining $\mu_j$ from $L_j$ is $\mathcal{O}(1 + d)$ where $d$ is the number of pairs deleted from $L_{j-1}$. Since every deleted pair was previously inserted, we can pay for the deletions by paying an extra token per insertion. Note that there are two insertions per update. Hence the total time taken to determine $L_j$ and $\mu_j$ for all $j \in [n]$ is $\mathcal{O}(n)$.

\begin{theorem}
 {Cell Reachability} can be solved in $\mathcal{O}(n)$ time.
\end{theorem}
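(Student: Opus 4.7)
The plan is to combine the already-developed subroutine for updating $L_{j-1}$ to $L_j$ with an amortized accounting argument. Two things need to be established: correctness of the produced lists (so that reading $\mu_j$ off the tail yields the right exit-cost) and an overall running time of $\mathcal{O}(n)$.

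For correctness, I would induct on $j$ and verify, using Lemma~\ref{recurrencefort} case-by-case, that after the update procedure $L_j$ contains exactly the breakpoints of $t_j(\cdot)$ in the correct order. The front loop pops precisely those pairs $(k,t)$ of $L_{j-1}$ for which either $k \geq \lambda_{j-1}$ or $t_{j-1}(k) \leq a_j$, both cases in which Lemma~\ref{recurrencefort} collapses $t_j(k)$ to $a_j$; inserting $(k_\mathit{left}, a_j)$ correctly records the single remaining breakpoint in that range. The back loop removes pairs with $t_{j-1}(k) > b_j$, where Lemma~\ref{recurrencefort} gives $t_j(k) = \infty$. The crucial observation, which is what buys the amortized time, is that on the intermediate range $k \in (k_\mathit{left}, k_\mathit{right})$ Lemma~\ref{recurrencefort} gives $t_j(k) = t_{j-1}(k)$, so the breakpoints of $t_j(\cdot)$ in this range are exactly those of $t_{j-1}(\cdot)$, and the surviving sublist of $L_{j-1}$ can simply be reused untouched inside $L_j$. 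Appending $(0,\infty)$ records $t_j(0) = \infty$, and the minimal positive $k$ in $L_j$ (the back entry before that append) is $\mu_j$ by the lemma just preceding the algorithm.

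For the running time, I would use a standard token/potential argument. Each update performs $\mathcal{O}(1)$ work outside the while loops and exactly two insertions (one \textit{pushfront}, one \textit{pushback}); each while-loop iteration performs a \textit{popfront} or \textit{popback} on a previously inserted element. Assigning two credits to each insertion (one paying for the insertion itself, one prepaid for its eventual deletion) makes the amortized cost of one update $\mathcal{O}(1)$. Since there are $n-1$ updates plus the initial list, the total work spent on all list operations is $\mathcal{O}(n)$; extracting each $\mu_j$ from the tail of $L_j$ adds $\mathcal{O}(1)$ per cell, for $\mathcal{O}(n)$ overall.

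The only slightly delicate step is the case analysis establishing that the surviving middle segment of $L_{j-1}$ is exactly the middle segment of $L_j$, since this is what makes the time bound go through; once this invariant is stated and checked against Lemma~\ref{recurrencefort}, everything else is routine bookkeeping.
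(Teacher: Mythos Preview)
Your proposal is correct and follows essentially the same approach as the paper: the paper's proof is precisely the amortized token argument you describe (two insertions per update, each deletion charged to a prior insertion, $\mathcal{O}(1+d)$ work per update where $d$ is the number of deletions), and the correctness justification via Lemma~\ref{recurrencefort} and the middle-segment invariant is exactly what the paper sketches in the paragraph preceding Algorithm~2. If anything, your write-up spells out the correctness invariant for the surviving middle sublist more explicitly than the paper does.
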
 
 
\subsubsection{Implementing $\kappa_2$-subroutine$(i)$ using Cell Reachability} 
Recall the definition of $\kappa_2(\cdot ,\cdot )$ and what our goal is now. For a fixed $i' < i$, let  $\kappa(i,j,i')$ be the minimal $k$ such that for some $j' <j$, we have $\DP(k-1,i',j') \neq \infty$ and $\delta_{F}(P[\DP(k-1,i',j') \ldots t_{i,j}],\overline{v_{i'}v_{i}}) \leq \delta$. Note that $\kappa_2(i,j) = \min \limits_{i' < i} \kappa(i,j,i')$. To show that the $\kappa_2$-subroutine$(i)$ can be implemented in $\mathcal{O}(n^2)$, it suffices to show that for a fixed $i' < i$ we can determine $\kappa(i,j,i')$ for all $j \in [n-1]$  in $\mathcal{O}(n)$ time.

\begin{observation}
 \label{monotonicity-reachability}
 Let the line segment with endpoints $(a_j,j)$ and $(b_j,j)$ denote the free-space on $\mathit{hor}_j$ in $\FS(P,\overline{v_{i'}v_i})$ where $i' < i$. Then for any $j' < j$ there is a monotone path from $(0,\DP(\kappa(i',j'),i',j'))$ to $(1,t_{i,j})$ in the free-space $\FS(P,\overline{v_{i'}v_i})$ if and only if there exist $x_{j'+1} \leq x_{j'+2} \leq \ldots x_j$ with each $x_k \in [a_k,b_k]$ for all $j' < k \leq j$.  
\end{observation}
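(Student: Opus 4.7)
The plan is to prove the two directions of the equivalence separately. The backward direction leans on the convexity of each $\mathit{fbox}_k$ (Fact~\ref{convexity}), while the forward direction is a direct consequence of $y$-monotonicity together with the fact that the free space along the line $y = k$ is precisely $\hor_k = [a_k, b_k] \times \{k\}$.

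For the forward direction, suppose a monotone path $\pi$ from $(0, \DP(\kappa(i', j'), i', j'))$ to $(1, t_{i,j})$ exists in $\FS(P, \overline{v_{i'} v_i})$. Since $\DP(\kappa(i', j'), i', j') \in [j', j'+1]$ and $t_{i,j} \in [j, j+1]$, $y$-monotonicity forces $\pi$ to meet every horizontal line $y = k$ with $j' < k \leq j$. Pick $x_k$ to be the $x$-coordinate of any such meeting point. Then $(x_k, k) \in \pi \subseteq \FS(P, \overline{v_{i'} v_i})$ implies $x_k \in [a_k, b_k]$, and $x$-monotonicity of $\pi$ yields $x_{j'+1} \leq \ldots \leq x_j$.

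For the backward direction, given a valid sequence $x_{j'+1} \leq \ldots \leq x_j$ with each $x_k \in [a_k, b_k]$, we construct a monotone piecewise-linear path by concatenating the segment from $(0, \DP(\kappa(i', j'), i', j'))$ to $(x_{j'+1}, j'+1)$, the segments from $(x_k, k)$ to $(x_{k+1}, k+1)$ for $j' < k < j$, and the segment from $(x_j, j)$ to $(1, t_{i,j})$. Each such segment has both endpoints lying in a common cell $\mathit{fbox}_{k^\ast}$ (with $k^\ast \in \{j', k, j\}$, respectively). The boundary endpoints lie in the free space: $(0, \DP(\kappa(i', j'), i', j'))$ does because $\DP(\kappa(i', j'), i', j') \neq \infty$ yields a simplification $\tilde{Q}$ of $P[0 \ldots i']$ with $\delta_F(\tilde{Q}, P[0 \ldots \DP(\kappa(i', j'), i', j')]) \leq \delta$, forcing $\norm{v_{i'} - P[\DP(\kappa(i', j'), i', j')]}{p} \leq \delta$ at their common last parameter; $(1, t_{i,j})$ does by the definition of $t_{i,j}$; and each intermediate $(x_k, k)$ lies in $\hor_k$ by hypothesis. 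Convexity of $\mathit{fbox}_{k^\ast}$ then guarantees that the entire segment lies in $\FS(P, \overline{v_{i'} v_i})$. Finally, the concatenated path is monotone because its breakpoints have strictly increasing $y$-coordinates and non-decreasing $x$-coordinates.

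The bulk of the argument is routine geometric reasoning about monotone paths in free-space diagrams. The only subtle point is confirming that the boundary endpoints $(0, \DP(\kappa(i', j'), i', j'))$ and $(1, t_{i,j})$ actually lie in $\FS(P, \overline{v_{i'} v_i})$, which requires unpacking the definitions of $\DP$ and $t_{i,j}$ to extract the corresponding distance bounds.
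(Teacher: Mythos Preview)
Your proof is correct and follows essentially the same approach as the paper's: both directions match the paper's argument, using $y$-monotonicity to extract the $x_k$'s in the forward direction and convexity of each $\mathit{fbox}_k$ to stitch together the piecewise-linear path in the backward direction. You are slightly more explicit than the paper in verifying that the boundary endpoints $(0,\DP(\kappa(i',j'),i',j'))$ and $(1,t_{i,j})$ lie in the free space, which the paper leaves implicit; this extra care is appropriate and does not change the structure of the argument.
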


\begin{proof}
 The \say{only if} direction is straightforward. Note that the monotone path from $(0,\DP$ $(\kappa(i',j'),i',j'))$ to $(1,t_{i,j})$ in the free-space $\FS(P,\overline{v_{i'}v_i})$ intersects $\hor_k$ for all $j' < k \leq j$. Let $x_k$ be the intersection of the path with $\hor_k$ for $ j' < k \leq j$. Since the path lies inside the free-space $\FS(P,\overline{v_{i'}v_i})$ we have $x_k \in [a_k,b_k]$ for every $j' < k \leq j$. Since the path is monotone we have $x_{j'+1} \leq x_{j'+2} \leq \ldots \leq x_{j}$.\\
Now we show the \say{if} direction. Assume there exist $x_{j'+1} \leq x_{j'+2} \leq \ldots \leq x_{j}$  and $x_k \in [a_k,b_k]$ for every $ j' < k \leq j$. Since every $\mathit{fbox}_k$ is convex for every $j' < k < j$, the line segment with endpoints as $(x_k,k)$ and $(x_{k+1},k+1)$ lies inside $\mathit{fbox}_k$. By the same convexity argument it follows that the line segment with endpoints $(0,\DP(\kappa(i',j'),i',j')$ and $(x_{j'+1},j'+1)$ lies inside $\mathit{fbox}_{j'}$ and the line segment with endpoints $(x_j,j)$ and $(1,t_{i,j})$ also lies inside $\mathit{fbox}_j$. Therefore we have a monotone path namely $\langle (0,\DP(\kappa(i',j'),i',j'), (x_{j'+1},j'+1), (x_{j'+2},j'+2) \ldots (x_j,j) (1,t_{i,j}) \rangle$ inside the free-space $\FS(P,\overline{v_{i'}v_i})$ from $(0,\DP(\kappa(i',j'),i',j')$ to $(1,t_{i,j})$. 
\end{proof}

\begin{observation}
 \label{minimal-k}
 For any $i' < i$ if there is a monotone path from $(0,\DP(k,i',j'))$ to $(1,t_{i,j})$ in the free-space $\FS(P,\overline{v_{i'}v_{i}})$ intersecting $\mathit{hor}_j$, then there is also a monotone path from $(0,\DP(\kappa(i',j'),i',j'))$ to $(1,t_{i,j})$ in the free-space $\FS(P,\overline{v_{i'}v_{i}})$ intersecting $\mathit{hor}_j$. 
\end{observation}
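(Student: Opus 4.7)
The plan is to reroute the initial portion of the given monotone path, replacing its start by the (possibly higher) point $(0, \DP(\kappa(i',j'), i', j'))$ and then rejoining the original path. First, since $\kappa(i',j')$ is by definition the smallest $k$ with $\DP(k, i', j') \neq \infty$, we have $\kappa(i',j') \leq k$, so Observation~\ref{dp-property1} yields $\DP(\kappa(i',j'), i', j') \geq \DP(k, i', j')$. Both points $(0, \DP(k, i', j'))$ and $(0, \DP(\kappa(i',j'), i', j'))$ therefore lie on the vertical segment at $x=0$ inside $\mathit{fbox}_{j'}$; in particular, the latter is in the free space, because the size-$\kappa(i',j')$ simplification witnessing $\DP(\kappa(i',j'), i', j')$ ends at $v_{i'}$, forcing $\norm{v_{i'} - P[\DP(\kappa(i',j'), i', j')]}{p} \leq \delta$ by the Fr\'echet condition.

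Next, I would pick a convenient rerouting point on the upper boundary of $\mathit{fbox}_{j'}$. Since the given path starts at $y$-coordinate in $[j', j'+1]$, ends at $y$-coordinate $t_{i,j} \in [j,j+1]$ with $j' < j$, and is monotone in $y$, it must cross the line $y = j'+1$ at some point $z = (z_x, j'+1)$ lying in the free space.

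Finally, invoking Fact~\ref{convexity}, the region $\mathit{fbox}_{j'}$ is convex, so the straight line segment from $(0, \DP(\kappa(i',j'), i', j'))$ to $z$ is contained in $\mathit{fbox}_{j'}$, hence in $\FS(P, \overline{v_{i'}v_i})$. This segment is monotone since both its $x$- and $y$-coordinates are non-decreasing ($0 \le z_x$ and $\DP(\kappa(i',j'), i', j') \le j'+1$). Concatenating this segment with the portion of the original path from $z$ to $(1, t_{i,j})$ produces a monotone path from $(0, \DP(\kappa(i',j'), i', j'))$ to $(1, t_{i,j})$ in $\FS(P,\overline{v_{i'}v_i})$; it still intersects $\mathit{hor}_j$, since the concatenated tail is a subpath of the original path and, regardless of whether $j'+1 = j$ or $j'+1 < j$, the tail crosses $\mathit{hor}_j$. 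The only delicate point is verifying that the higher starting point belongs to the free space so that convexity can be applied; this is handled by the Fr\'echet observation in the first step, after which the rest is routine.
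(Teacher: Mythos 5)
Your proposal is correct and follows essentially the same argument as the paper: both proofs observe that the two candidate starting points lie in the free interval $[t_{i',j'},s_{i',j'}]$ on the left boundary of $\mathit{fbox}_{j'}$, take the crossing point $z$ of the original path with $\mathit{hor}_{j'+1}$, and use convexity of $\mathit{fbox}_{j'}$ (Fact~\ref{convexity}) to reroute from $(0,\DP(\kappa(i',j'),i',j'))$ to $z$ before following the original path. Your additional checks (monotonicity of the new initial segment and that the tail still crosses $\mathit{hor}_j$) are minor elaborations of what the paper leaves implicit.
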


\begin{proof}
 
  \begin{figure}
  \centering
 \label{consider-only-kappa(i',j')}
 \begin{subfigure}{0.4\textwidth}

\begin{tikzpicture}

\draw[blue, very thick, fill = green!5] (0,0) rectangle (2,6);
\node[left = 0pt of {(0,0-0.1)}] {$v_{i'}$};
\node[right = 0pt of {(2,0-0.1)}] {$v_{i}$};


\draw (0,2) -- (2,2);
\draw (0,4) -- (2,4);
\draw (0,6) -- (2,6);




\draw[black, ultra thick] (0-0.1,0.75)--(0+0.1,0.75); 
\node[left = 0pt of {(0-0.1,0.75)}]{$\scriptstyle t_{i',j'}$};

\draw[black, ultra thick] (0-0.1,1.75)--(0+0.1,1.75); 
\node[left = 0pt of {(0-0.1,1.75)}]{$\scriptstyle s_{i',j'}$};

\draw[black, ultra thick] (0-0.1,1.05)--(0+0.1,1.05); 
\node[left = 0pt of {(0-0.1,1.05)}]{$\scriptstyle \DP(k,i',j')$};

\draw[black, ultra thick] (0-0.1,1.45)--(0+0.1,1.45); 
\node[left = 0pt of {(0-0.1,1.45)}]{$\scriptstyle \DP(\kappa(i',j'),i',j') $};

\draw[black, ultra thick] (2-0.1,4.15)--(2+0.1,4.15); 
\node[right = 0pt of {(2+0.1,4.15)}]{$\scriptstyle t_{i,j}$};

\draw[black, ultra thick] (2-0.1,5.15)--(2+0.1,5.15); 
\node[right = 0pt of {(2+0.1,5.15)}]{$\scriptstyle s_{i,j}$};

\node[below = 0pt of {(0.45,2)}]{$z$};

\draw[red, ultra thick] (0.4,2)--(1.4,2);
\draw[red, ultra thick] (0.5,4)--(1.6,4);

\draw[black, ultra thick] (0,1.05) -- (0.4, 2) -- (0.5,4) -- (2,4.15);

\end{tikzpicture}
 \end{subfigure}
 \begin{subfigure}{0.4\textwidth}

\begin{tikzpicture}

\draw[blue, very thick, fill = green!5] (0,0) rectangle (2,6);
\node[left = 0pt of {(0,0-0.1)}] {$v_{i'}$};
\node[right = 0pt of {(2,0-0.1)}] {$v_{i}$};


\draw (0,2) -- (2,2);
\draw (0,4) -- (2,4);
\draw (0,6) -- (2,6);




\draw[black, ultra thick] (0-0.1,0.75)--(0+0.1,0.75); 
\node[left = 0pt of {(0-0.1,0.75)}]{$\scriptstyle t_{i',j'}$};

\draw[black, ultra thick] (0-0.1,1.75)--(0+0.1,1.75); 
\node[left = 0pt of {(0-0.1,1.75)}]{$\scriptstyle s_{i',j'}$};

\draw[black, ultra thick] (0-0.1,1.05)--(0+0.1,1.05); 
\node[left = 0pt of {(0-0.1,1.05)}]{$\scriptstyle \DP(k,i',j')$};

\draw[black, ultra thick] (0-0.1,1.45)--(0+0.1,1.45); 
\node[left = 0pt of {(0-0.1,1.45)}]{$\scriptstyle \DP(\kappa(i',j'),i',j')$};

\draw[black, ultra thick] (2-0.1,4.15)--(2+0.1,4.15); 
\node[right = 0pt of {(2+0.1,4.15)}]{$\scriptstyle t_{i,j}$};

\draw[black, ultra thick] (2-0.1,5.15)--(2+0.1,5.15); 
\node[right = 0pt of {(2+0.1,5.15)}]{$\scriptstyle s_{i,j}$};

\node[below = 0pt of {(0.45,2)}]{$z$};

\draw[red, ultra thick] (0.4,2)--(1.4,2);
\draw[red, ultra thick] (0.5,4)--(1.6,4);

\draw[black, ultra thick] (0,1.45) -- (0.4, 2) -- (0.5,4) -- (2,4.15);

\end{tikzpicture}
 \end{subfigure}
 \caption{Illustration of the proof of Observation \ref{minimal-k}. For any $i' < i$, $j' < j$ and any $k$, there is a monotone path from $(0,\DP(k,i',j'))$ to $(1,t_{i,j})$ in $\FS(P,\overline{v_{i'}v_{i}})$ (left) that intersects $\hor_j$ at $z$. Then there is a monotone path from $(0,\DP(\kappa(i',j'),i',j'))$ to $(1,t_{i,j})$ in $\FS(P,\overline{v_{i'}v_{i}})$ (right) by walking from $(0,\DP(\kappa(i',j'),i',j')$ to $z$ and then following the existing monotone path from $z$ to $(1,t_{i,j})$.}
\end{figure}
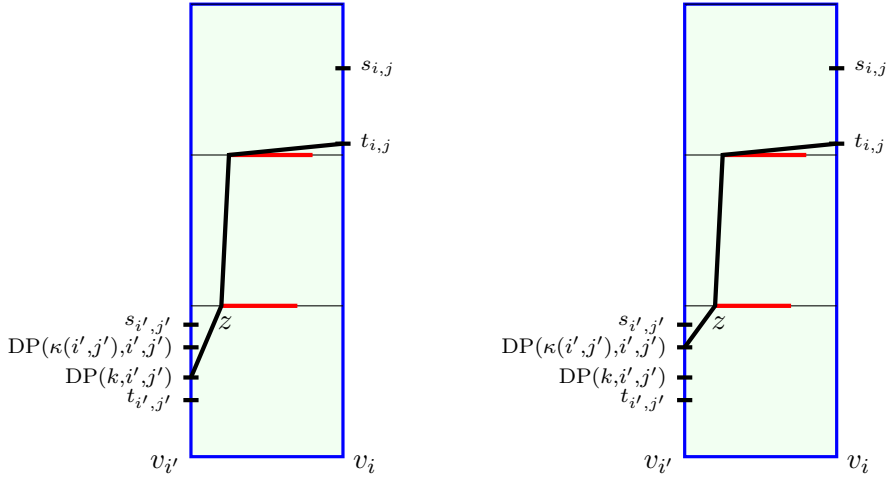
 
 This is obvious by inspecting the free-space $\FS(P,\overline{v_{i'}v_i})$ as follows. Since the monotone path intersects $\mathit{hor}_j$, we have $j' < j$. Observe that both $\DP(k,i',j')$ and $\DP(\kappa(i',j'),i',j')$ lie in the interval $[t_{i',j'},s_{i',j'}]$. Also let $z$ be the point at which the monotone path intersects $\mathit{hor}_{j'+1}$. Then there is a monotone path in $\FS(P,\overline{v_{i'},v_{i}})$ from $z$ to $(1,t_{i,j})$. Since $\mathit{fbox}_{j'}$ is convex (By Fact \ref{convexity}) the line segment joining $(0,\DP(\kappa(i',j'),i',j'))$ and $z$ is contained in $\mathit{fbox}_{j'}$. Therefore there is a monotone path from $(0,\DP(\kappa(i',j'),i',j'))$ to $(1,t_{i,j})$ by walking from $(0,\DP(\kappa(i',j'),i',j'))$ to $z$ and then follow the monotone path from $z$ to $(1,t_{i,j})$.   
\end{proof}

Observations \ref{monotonicity-reachability} and \ref{minimal-k} imply that $\kappa(i,j,i')$ is the minimal value of  $1 + \kappa(i',j')$ over all $j' < j$ such that there exist $x_{j'+1} \leq x_{j'+2} \leq \ldots \leq x_j$ with every $x_k \in [a_k,b_k]$ for all $j' < k \leq j$.

Note that now we are \say{almost} in an instance of Cell Reachability problem where the passage $p_j$ corresponds to the free space on $\mathit{hor}_j$ and each $\lambda_j = 1 + \kappa(i',j)$. The only problem is that the free space on some $\hor_j$ could be empty (while in Cell Reachability section we never had empty passages). However if the free space on any $\hor_j$ is empty then there exists no monotone path in the free-space $\FS(P,\overline{v_{i'}v_i})$ from any any point below $\hor_j$ to any point above $\hor_j$. Thus we can split the instance into two disjoint instances of Cell Reachability. 
Thus for any fixed $i'$ we can determine $\kappa(i,j,i')$ in $\mathcal{O}(n)$ time and therefore we can implement $\kappa_2$-subroutine$(i)$ for any $i \in [n]$ in $T(n) = \mathcal{O}(n^2)$.

\section{Conditional Lower Bound for Curve Simplification}
\label{sec:lowerbound}

In this section we show that an $\mathcal{O}(n^{3-\varepsilon} \textup{poly}(d))$ time algorithm for Global-Fr\'echet, Local-Fr\'echet or Local-Hausdorff simplification over $(\mathbb{R}^d,\norm{}{p})$ for any $p \in [1, \infty)$, $p \neq 2$, would yield an $\mathcal{O}(n^{3-\varepsilon} \mathit{poly}(d))$ algorithm for $\FOV$. 

\subsection{Overview of the Reduction}

We first give an overview of the reduction. Consider any instance $(A,B,C)$ of $\FOV$  where $A$,$B$,$C \subseteq \left\{0,1\right\}^d$ have size $n$. We write $A = \left\{a_1,a_2, \ldots a_n \right\}$, $B = \left\{b_1,b_2, \ldots b_n \right\}$ and $C = \left\{c_1,c_2, \ldots c_n \right\}$. We will construct efficiently a total of $3n+1$ points in $\mathbb{R}^{D}$ with $D \in \mathcal{O}(d)$ namely the sets of points $\tilde{A} = \left\{\tilde{a}_1,\tilde{a}_2, \ldots \tilde{a}_n \right\}$, $\tilde{B} = \left\{\tilde{b}_1,\tilde{b}_2, \ldots \tilde{b}_n \right\}$ and $\tilde{C} = \left\{\tilde{c}_1,\tilde{c}_2, \ldots \tilde{c}_n \right\}$  and one more point $s$. We also determine $\delta \geq 0$ such that the following properties are satisfied.  

\begin{enumerate}
    
    \item[($\mathbf{P}_1$)] For any $\tilde{a} \in \tilde{A}, \tilde{b} \in \tilde{B}, \tilde{c} \in \tilde{C}$, there is a point $x$ on the line segment $\overline{\tilde{a}\tilde{b}}$ with $\norm{x-\tilde{c}}{p} \leq \delta$ if and only if $\norm{\frac{\tilde{a}+\tilde{b}}{2}-\tilde{c}}{p} \leq \delta$. 
    
    \item[($\mathbf{P}_2$)] For any $\tilde{a} \in \tilde{A}, \tilde{b} \in \tilde{B}, \tilde{c} \in \tilde{C}$, we have $\norm{\frac{\tilde{a}+\tilde{b}}{2}-\tilde{c}}{p} \leq \delta$ if and only if $\sum \limits_{\ell \in [d]} a[\ell] \cdot b[\ell] \cdot c[\ell] \neq 0$.   
    
    \item[($\mathbf{P}_3$)] $\norm{x-y}{p} \leq \delta$ holds for all $x,y \in \tilde{A}$, and for all $x,y \in \tilde{B}$ and for all $x,y \in \tilde{C}$.
    
    \item[($\mathbf{P}_4$)] For any $y_1,y_2 \in \left\{s\right\} \cup \tilde{B} \cup \tilde{C}$ and any point $x$ on the line segment $\overline{y_1y_2}$ we have $\norm{x-\tilde{a}}{p} > \delta$ for all $\tilde{a} \in \tilde{A}$. 
    
    \item[($\mathbf{P}_5$)] For any $y_1,y_2 \in \left\{s\right\} \cup \tilde{A} \cup \tilde{C}$ and any point $x$ on the line segment $\overline{y_1y_2}$ we have $\norm{x-\tilde{b}}{p} > \delta$ for all $\tilde{b} \in \tilde{B}$. 
    
    \item[($\mathbf{P}_6$)] For any $y \in \tilde{B} \cup \tilde{A}$ and any point $x$ on the line segment $\overline{sy}$ we have $\norm{x-\tilde{c}}{p} > \delta$ for all $\tilde{c} \in \tilde{C}$.
\end{enumerate}

We postpone the exact construction of these points. Our hard instance for curve simplification will be $Q = \langle s, \tilde{a}_1,\tilde{a}_2, \ldots ,\tilde{a}_n,$ $\tilde{c}_1, \tilde{c}_2, \ldots ,\tilde{c}_n, \tilde{b}_1, \tilde{b}_2, \ldots ,\tilde{b}_n, s \rangle$.

\begin{lemma}
\label{equivalence-allmeasures}
 Let $\hat{Q} = \langle s, \tilde{a}_i, \tilde{b}_j, s \rangle$ for some $\tilde{a}_i \in \tilde{A}$ and $\tilde{b}_j \in \tilde{B}$. If $\norm{\frac{\tilde{a}_i+\tilde{b}_j}{2}-\tilde{c}}{p} \leq \delta$ for all $\tilde{c} \in \tilde{C}$ then the Local-Frechet distance between $Q$ and $\hat{Q}$ is at most $\delta$.
\end{lemma}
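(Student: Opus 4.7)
The plan is to verify the Local-Fr\'echet distance bound by splitting $\hat Q$ into its three linear pieces and bounding the Fr\'echet distance between each simplified subcurve of $Q$ and the corresponding segment. The vertices of $\hat Q$ sit at positions $0,\,i,\,2n+j,\,3n+1$ of $Q$, so it suffices to show
\[
  \delta_F(Q[0\ldots i], \overline{s\tilde a_i}) \le \delta, \quad
  \delta_F(Q[i\ldots 2n+j], \overline{\tilde a_i \tilde b_j}) \le \delta, \quad
  \delta_F(Q[2n+j\ldots 3n+1], \overline{\tilde b_j s}) \le \delta.
\]
For each, I will exhibit an explicit monotone reparametrization of the segment, walking forward on both curves simultaneously, and bound the distance pointwise using (i) property~$\mathbf{P}_3$, (ii) the hypothesis that $\|(\tilde a_i+\tilde b_j)/2 - \tilde c\|_p \le \delta$ for every $\tilde c \in \tilde C$, and (iii) the convexity of $L_p$-balls, which holds for all $p \in [1,\infty)$.

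For the first piece, I would match the subsegment $\overline{s\tilde a_1}$ of $Q[0\ldots i]$ linearly to the whole segment $\overline{s\tilde a_i}$: at parameter $u\in[0,1]$ the two points are $s+u(\tilde a_1-s)$ and $s+u(\tilde a_i-s)$, at distance $u\,\|\tilde a_1-\tilde a_i\|_p \le \delta$ by $\mathbf{P}_3$. For the remainder $\langle \tilde a_1,\ldots,\tilde a_i\rangle$, I would hold the segment at its endpoint $\tilde a_i$; each point of $Q$ then lies on some $\overline{\tilde a_k\tilde a_{k+1}}$, and by $\mathbf{P}_3$ both endpoints of that sub-segment lie in the $\delta$-ball around $\tilde a_i$, so the whole sub-segment does by convexity. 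The third piece is symmetric.

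The main work is the middle piece $Q[i\ldots 2n+j]$, which contains all three clusters. Here the plan is to split the matching into five phases: (a) traverse $\langle \tilde a_i,\ldots,\tilde a_n\rangle$ while staying at $\tilde a_i$ on the segment; (b) traverse the transition $\overline{\tilde a_n \tilde c_1}$ while simultaneously moving from $\tilde a_i$ to the midpoint $m := (\tilde a_i+\tilde b_j)/2$; (c) traverse $\langle \tilde c_1,\ldots,\tilde c_n\rangle$ while holding the segment at $m$; (d) traverse $\overline{\tilde c_n \tilde b_1}$ while moving from $m$ to $\tilde b_j$; (e) traverse $\langle \tilde b_1,\ldots,\tilde b_j\rangle$ while holding at $\tilde b_j$. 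Phases~(a),(c),(e) are handled exactly as for the first piece: by $\mathbf{P}_3$ (for the $\tilde a$'s and $\tilde b$'s) or by the hypothesis (for the $\tilde c$'s), each endpoint of each sub-segment of $Q$ is within $\delta$ of the fixed point on $\overline{\tilde a_i\tilde b_j}$, and convexity of the $\delta$-ball handles the interior points.

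The transition phases (b) and (d) are where care is needed. For phase (b), a direct computation shows that if $x(u)=(1-u)\tilde a_n + u\tilde c_1$ and $y(u)=(1-u)\tilde a_i + u\,m$, then
\[
  x(u)-y(u) = (1-u)\,(\tilde a_n-\tilde a_i) + u\,\bigl(\tilde c_1-m\bigr),
\]
which is a convex combination of one vector of $L_p$-norm $\le \delta$ (by $\mathbf{P}_3$) and one of norm $\le\delta$ (by hypothesis); hence $\|x(u)-y(u)\|_p \le \delta$ by the triangle inequality. Phase (d) is handled by the same identity with $\tilde a$ replaced by $\tilde b$ and reversed parametrization. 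Combining all five phases gives a continuous monotone matching with pointwise distance $\le\delta$, so $\delta_F(Q[i\ldots 2n+j],\overline{\tilde a_i\tilde b_j})\le\delta$, which together with the earlier two bounds yields the claim. The key obstacle throughout is to choose the matching so that \emph{every} intermediate point---and not only the vertices of $Q$---is provably close to the segment, which convexity of the $L_p$-unit ball resolves uniformly.
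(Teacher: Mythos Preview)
Your proof is correct and follows essentially the same approach as the paper's own proof: the same three-piece split of $\hat Q$, the same five-phase traversal for the middle segment $\overline{\tilde a_i\tilde b_j}$ (stay at $\tilde a_i$, move to the midpoint, stay at the midpoint, move to $\tilde b_j$, stay at $\tilde b_j$), and the same appeal to $\mathbf{P}_3$ for the clusters and to the hypothesis for the $\tilde c$'s. Your argument is in fact slightly more explicit than the paper's---you spell out the convex-combination identity for phases~(b) and~(d) where the paper just invokes ``walking with uniform speed on two line segments whose endpoints are within $\delta$,'' and you make the convexity argument for interior points of $Q$'s edges explicit---but the underlying structure is identical.
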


\begin{proof}
Both $Q$ and $\hat{Q}$ have the same starting point $s$. By property $\mathbf{P}_1$ we have $\norm{\tilde{a}-\tilde{a}_i}{p} \leq \delta$ for all $\tilde{a} \in \tilde{A}$, and $\norm{\tilde{b}-\tilde{b}_j}{p} \leq \delta$ for all $\tilde{b} \in \tilde{B}$. Thus it follows that $\delta_F(\langle s, \tilde{a}_1, \ldots , \tilde{a}_i \rangle, \overline{s\tilde{a}_i}) \leq \delta$ and $\delta_F(\langle \tilde{b}_j, \ldots , \tilde{b}_n, s \rangle, \overline{\tilde{b}_js}) \leq \delta$. It remains to show that $\delta_F(Q_{ij}, \overline{\tilde{a}_i\tilde{b}_j}) \leq \delta$ where $Q_{ij} = \langle \tilde{a}_i, \ldots ,\tilde{a}_n,\tilde{c}_1, \ldots ,\tilde{c}_n,\tilde{b}_1, \ldots ,\tilde{b}_j \rangle$. To this end first note that both polylines $Q_{ij}$ and $\overline{\tilde{a}_i\tilde{b}_j}$ have the same endpoints. We now outline monotone walks on both $Q_{ij}$ and $\overline{\tilde{a}_i\tilde{b}_j}$.
\begin{enumerate}[label={(\arabic*)}]
 \item Walk on $Q_{ij}$ from $\tilde{a}_i$ to $\tilde{a}_n$ and remain at $\tilde{a}_i$ on $\overline{\tilde{a}_i\tilde{b}_j}$. 
 \item Walk uniformly on both polylines, up to $\frac{\tilde{a}_i+\tilde{b}_j}{2}$ on $\overline{\tilde{a}_i\tilde{b}_j}$ and up to $\tilde{c}_1$ on $Q_{ij}$.
 \item Walk on $Q_{ij}$ from $\tilde{c}_1$ to $\tilde{c}_n$ and remain at $\frac{\tilde{a}_i+\tilde{b}_j}{2}$ on  $\overline{\tilde{a}_i\tilde{b}_j}$.
 \item Walk uniformly on both curves up to $\tilde{b}_j$ on $\overline{\tilde{a}_i\tilde{b}_j}$ and up to $\tilde{b}_1$ on $Q_{ij}$. 
 \item Walk on $Q_{ij}$ until $\tilde{b}_j$ and remain at $\tilde{b}_j$ on $\overline{\tilde{a}_1\tilde{b}_j}$.
\end{enumerate}
 We now argue that we always stay within distance $\delta$ throughout the walks. For (1) and (5) this follows due to property $\mathbf{P}_1$. For (2) and (4) it follows due to the fact we always remain within distance $\delta$ while walking with uniform speed on two line segments, as long as their startpoints and their endpoints are within distance $\delta$. By the assumption $\norm{\frac{\tilde{a}_i+\tilde{b}_j}{2}-\tilde{c}}{p} \leq \delta$ for all $\tilde{c} \in \tilde{C}$, we always stay within distance $\delta$ also for (3). 
\end{proof}

Observe that property $\mathbf{P_3}$ implies that there is a simplification of size five namely $\hat{Q} = \langle s, \tilde{a}, \tilde{c} ,\tilde{b}, s \rangle$  for any $\tilde{a} \in \tilde{A}$, $\tilde{b} \in \tilde{B}$, and $\tilde{c} \in \tilde{C}$, such that the distance between $\hat{Q}$ and $Q$ is at most $\delta$ under Local-Fr\'echet, Global-Fr\'echet and Local-Hausdorff distance. We now show that a smaller simplification is only possible if there exist $a \in A$, $b \in B$ such that for all $c \in C$ we have $\sum \limits_{\ell \in [d]}a[\ell] \cdot b[\ell] \cdot c[\ell] \neq 0$. 

\begin{lemma}
\label{equivalence-allmeasures2}
 Let $\hat{Q}$ be a simplification of the polyline $Q$ of size 4. Then the following statements are equivalent
 \begin{enumerate}[label={(\arabic*)}]
     \item The Global-Fr\'echet distance between $Q$ and $\hat{Q}$ is at most  $\delta$.
     \item The Local-Fr\'echet distance between $Q$ and $\hat{Q}$  is at most  $\delta$.
     \item The Local-Hausdorff distance between $Q$ and $\hat{Q}$  is at most  $\delta$.
     \item There exist some $\tilde{a} \in \tilde{A}$, $\tilde{b} \in \tilde{B}$, such that $\hat{Q} = \langle s,\tilde{a}, \tilde{b}, s \rangle$ and $\norm{\frac{\tilde{a}+\tilde{b}}{2}-\tilde{c}}{p} \leq \delta$ for every $\tilde{c} \in \tilde{C}$.
     \item There exist $a \in A$, $b \in B$ such that for all $c \in C$ we have $\sum \limits_{\ell \in [d]}a[\ell] \cdot b[\ell] \cdot c[\ell] \neq 0$.
 \end{enumerate}
\end{lemma}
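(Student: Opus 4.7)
The plan is to prove the five-way equivalence by establishing the pair (4)~$\Leftrightarrow$~(5) directly, then closing a cycle (4)~$\Rightarrow$~(2)~$\Rightarrow$~(1)~$\Rightarrow$~(4) and (2)~$\Rightarrow$~(3)~$\Rightarrow$~(4). The equivalence (4)~$\Leftrightarrow$~(5) follows immediately from property~$\mathbf{P}_2$ together with the fixed bijections $a_i \leftrightarrow \tilde a_i$, $b_j \leftrightarrow \tilde b_j$, $c_k \leftrightarrow \tilde c_k$. The implication (4)~$\Rightarrow$~(2) is exactly the content of Lemma~\ref{equivalence-allmeasures}. The implications (2)~$\Rightarrow$~(1) and (2)~$\Rightarrow$~(3) are standard: bounding the Fr\'echet distance on each piece yields a global Fr\'echet bound by concatenating the local monotone reparametrizations, and dominates the Hausdorff distance on each piece.

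The main content is therefore (1)~$\Rightarrow$~(4) and (3)~$\Rightarrow$~(4). A size-$4$ simplification has the form $\hat Q = \langle s, u, v, s\rangle$ where $u = v_{i_1}$, $v = v_{i_2}$ are vertices of $Q$ with $0 < i_1 < i_2 < 3n+1$, so $u,v \in \tilde A \cup \tilde C \cup \tilde B$ and $u$ precedes $v$ along~$Q$. First I would show $u \in \tilde A$: if instead $u \in \tilde C \cup \tilde B$, then also $v \in \tilde C \cup \tilde B$ by the ordering, hence $u,v \in \{s\} \cup \tilde B \cup \tilde C$, and property~$\mathbf{P}_4$ forces every point of every edge of $\hat Q$ to have distance exceeding $\delta$ from every $\tilde a \in \tilde A$. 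This contradicts (1) (which requires each $\tilde a_i$ on $Q$ to be within distance~$\delta$ of some point of $\hat Q$) and also contradicts (3) (which requires the subcurve of $Q$ containing $\tilde a_i$ to have Hausdorff distance at most $\delta$ to the edge it was replaced with). Symmetrically, $\mathbf{P}_5$ forces $v \in \tilde B$. With the ordering along $Q$, this pins down $\hat Q = \langle s, \tilde a, \tilde b, s\rangle$ with $\tilde a \in \tilde A$, $\tilde b \in \tilde B$.

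To finish (a), I would use $\mathbf{P}_6$ and $\mathbf{P}_1$ to deduce the midpoint condition. Property $\mathbf{P}_6$ ensures that the edges $\overline{s\tilde a}$ and $\overline{\tilde b s}$ of $\hat Q$ are strictly more than~$\delta$ away from every $\tilde c \in \tilde C$. Thus, under~(1), the monotone reparametrization must map the instant at which $Q$ visits $\tilde c$ to a point on the middle edge $\overline{\tilde a \tilde b}$; and under~(3), the middle subcurve of~$Q$ (between $\tilde a$ and $\tilde b$) must contain every $\tilde c$, and the Hausdorff condition then places each $\tilde c$ within distance~$\delta$ of $\overline{\tilde a\tilde b}$. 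In either case, for each $\tilde c \in \tilde C$ there exists $x \in \overline{\tilde a\tilde b}$ with $\norm{x - \tilde c}{p} \le \delta$. Applying property~$\mathbf{P}_1$ converts this into $\norm{(\tilde a+\tilde b)/2 - \tilde c}{p} \le \delta$, which is precisely~(4).

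The main obstacle I foresee is not conceptual but organizational: the arguments under Global-Fr\'echet and Local-Hausdorff are analogous but must be phrased in their respective frameworks (monotone reparametrizations versus a partition of $Q$ into three subcurves matched to the three edges). Once the three-way case split $u \in \tilde A \cup \tilde C \cup \tilde B$ is handled cleanly using $\mathbf{P}_4$--$\mathbf{P}_6$ in both frameworks simultaneously, the rest reduces mechanically to $\mathbf{P}_1$ and $\mathbf{P}_2$.
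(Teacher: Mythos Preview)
Your proposal is correct and follows essentially the same route as the paper: use $\mathbf{P}_4$ and $\mathbf{P}_5$ to force $\hat Q=\langle s,\tilde a,\tilde b,s\rangle$, then $\mathbf{P}_6$ and $\mathbf{P}_1$ to obtain the midpoint condition, invoke Lemma~\ref{equivalence-allmeasures} for (4)~$\Rightarrow$~(2), the standard inequalities for (2)~$\Rightarrow$~(1),(3), and $\mathbf{P}_2$ for (4)~$\Leftrightarrow$~(5). The only cosmetic difference is that the paper argues (1),(2),(3)~$\Rightarrow$~(4) simultaneously rather than via your cycle, and your ``To finish~(a)'' is a stray label.
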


\begin{proof}
   We first show that (1), (2) and (3) are equivalent to (4). To this end, we first show that each of (1), (2) and (3) imply (4). Since for any $y_1,y_2 \in s \cup \tilde{B} \cup \tilde{C}$ there is no point on the line segment $\overline{y_1y_2}$ that has distance at most $\delta$ to any $\tilde{a} \in \tilde{A}$ (by property $\mathbf{P}_4$), $\hat{Q}$ must contain at least one point from $\tilde{A}$. A symmetric argument can be made for the fact that $\hat{Q}$ must contain at least one point from $\tilde{B}$ (property $\mathbf{P}_5$). Since the size of $\hat{Q}$ is $4$, we have $\hat{Q} = \langle s, \tilde{a},\tilde{b}, s \rangle$ for some $\tilde{a} \in \tilde{A}$ and $\tilde{b} \in \tilde{B}$. By property $\mathbf{P}_6$ there is no point on the line segments $\overline{s\tilde{a}}$ and $\overline{\tilde{b}s}$ that has distance at most $\delta$ to any $\tilde{c} \in C$. Therefore the Global-Fr\'echet distance or the Local-Fr\'echet distance or the Local-Hausdorff distance between $Q$ and $\hat{Q}$ is at most $\delta$ only if for all $\tilde{c} \in \tilde{C}$ there is a point on the line segment $\overline{\tilde{a}\tilde{b}}$ that has distance at most $\delta$ to $\tilde{c}$. By property $\mathbf{P}_1$, this implies that $\norm{\frac{\tilde{a}+\tilde{b}}{2} - \tilde{c}}{p} \leq \delta$ for all $\tilde{c} \in \tilde{C}$.
   
 Now we show that (4) implies (1), (2) and (3). First observe that (2) implies (1) and (3), since the Local-Fr\'echet distance between a curve and its simplification is at least the Global-Fr\'echet distance and at least the Local-Hausdorff distance between the same. Thus, it suffices to show that (4) implies (2). This directly follows from Lemma \ref{equivalence-allmeasures}. Finally, (4) and (5) are equivalent due to property $\mathbf{P}_2$.
\end{proof}

Assuming that we can construct $Q$ and determine $\delta$ in $\mathcal{O}(nd)$ time, the above lemma directly yields the following theorem,

\begin{theorem}
 \label{lowerboundtheorem}
 For any $\varepsilon > 0$, there is no $\mathcal{O}(n^{3- \varepsilon} \textup{poly}(d))$ algorithm for Global-Fr\'echet, Local-Fr\'echet  and Local-Hausdorff simplification over $(\mathbb{R}^d,\norm{}{p})$ for any $p \in [1,\infty)$, $p\neq 2$ unless $\FOVH$ fails.
\end{theorem}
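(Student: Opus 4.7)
The framework of the reduction is already almost complete: Lemma~\ref{equivalence-allmeasures2} tells us that a simplification of $Q$ of size $4$ exists (under any of the three distance measures) if and only if the underlying $\FOV$ instance $(A,B,C)$ is a ``no''-instance. So it suffices to show that the points $s, \tilde A, \tilde B, \tilde C$ and the threshold $\delta$ satisfying properties $\mathbf{P}_1$--$\mathbf{P}_6$ can be constructed in time $\mathcal{O}(nd)$ with $D = \mathcal{O}(d)$. Any hypothetical $\mathcal{O}(n^{3-\varepsilon}\poly(d))$ simplification algorithm then yields an $\mathcal{O}(n^{3-\varepsilon}\poly(d))$ algorithm for $\FOV$ (after building $Q$ in time $\mathcal{O}(nd)$, running the simplification algorithm, and checking whether the output size is at most $4$), contradicting the $\FOV$ hypothesis.

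The plan is to construct the coordinates of each $\tilde{a}_i,\tilde{b}_j,\tilde{c}_k$ as a concatenation of two types of coordinate blocks: (i) one \emph{gadget block} per OV-dimension $\ell \in [d]$, of constant size, whose values depend only on the bits $a_i[\ell], b_j[\ell], c_k[\ell]$; and (ii) a constant number of \emph{separator blocks} that pull the four clusters $\{s\}, \tilde A, \tilde B, \tilde C$ apart along fresh coordinate directions. The separator blocks will be responsible for $\mathbf{P}_1, \mathbf{P}_3, \mathbf{P}_4, \mathbf{P}_5, \mathbf{P}_6$: placing $\tilde A,\tilde B,\tilde C$ at three symmetric ``corners'' that are pairwise very far apart along private coordinates (so any point on a segment joining two of them is far from any point in the third, enforcing $\mathbf{P}_4$--$\mathbf{P}_5$), placing $s$ along a further fresh axis sufficiently far from $\tilde A\cup\tilde B\cup\tilde C$ to guarantee $\mathbf{P}_6$, and placing $\tilde c$ symmetrically with respect to $\tilde a,\tilde b$ on the separator axis (so the midpoint of $\overline{\tilde a \tilde b}$ really is the point on $\overline{\tilde a \tilde b}$ minimising $\|\,\cdot - \tilde c\|_p$, giving $\mathbf{P}_1$). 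Since separator blocks are identical across $\tilde A$ (and across $\tilde B$, and across $\tilde C$), $\mathbf{P}_3$ reduces to bounding the gadget contributions, which will be a byproduct of the gadget design.

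The main obstacle is therefore $\mathbf{P}_2$, i.e.\ designing the per-dimension gadget. I must place, in $\mathcal{O}(1)$ coordinates, points $\alpha_0,\alpha_1,\beta_0,\beta_1,\gamma_0,\gamma_1$ such that the quantity $g(i,j,k) := \bigl\|(\alpha_i+\beta_j)/2 - \gamma_k\bigr\|_p^{\,p}$ takes value $\alpha$ when $(i,j,k) = (1,1,1)$ and a \emph{common} value $\beta > \alpha$ for all seven other triples in $\{0,1\}^3$. If this succeeds, then summing over the $d$ gadget blocks gives $\bigl\|(\tilde a+\tilde b)/2 - \tilde c\bigr\|_p^p = d\beta - m(\beta-\alpha)$, where $m = \sum_{\ell} a[\ell]b[\ell]c[\ell]$; setting $\delta^p := d\beta - (\beta-\alpha)/2$ encodes $\mathbf{P}_2$ exactly. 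The technical overview in Section~\ref{sec:techoverview} already records the bad news: for $p = 2$ no such gadget exists, since the rank argument sketched there forces $ijk$ to be a sum of pairwise functions, which it is not. For $p \neq 2$ the nonlinearity of $t \mapsto |t|^p$ breaks this obstruction, and I would solve the resulting (nonlinear in $p$) constant-size system to exhibit an explicit gadget. A concrete route is to parametrise $\alpha_i,\beta_j,\gamma_k$ as coordinate-aligned vectors (e.g.\ $\alpha_i$ supported on ``$A$-coordinates'', $\beta_j$ on ``$B$-coordinates'', and $\gamma_k$ chosen to balance the six failure contributions), reducing the design to a finite system of polynomial equations in $p$ whose solvability fails precisely at $p=2$, as predicted by the impossibility argument. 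This is the one place I expect a careful, $p$-dependent calculation.

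Once the gadget is in hand, assembling the full construction is routine: each $\tilde a_i,\tilde b_j,\tilde c_k$ is described by $\mathcal{O}(d)$ coordinates written down in $\mathcal{O}(d)$ time, $s$ is a single additional point, and $\delta$ is computed from the fixed gadget values. Properties $\mathbf{P}_3$--$\mathbf{P}_6$ can be made to hold by scaling the separator coordinates large enough compared to the gadget coordinates so that all ``cross-cluster'' distances on the right-hand sides of $\mathbf{P}_4,\mathbf{P}_5,\mathbf{P}_6$ exceed $\delta$ by a uniform margin, while intra-cluster distances remain below $\delta$ for $\mathbf{P}_3$; this uses only elementary $L_p$ estimates. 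Combining this $\mathcal{O}(nd)$-time reduction with Lemma~\ref{equivalence-allmeasures2} proves Theorem~\ref{lowerboundtheorem}.
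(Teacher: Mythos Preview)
Your proposal is correct and follows essentially the same route as the paper: reduce via Lemma~\ref{equivalence-allmeasures2} to constructing the points with properties $\mathbf{P}_1$--$\mathbf{P}_6$, build each $\tilde a,\tilde b,\tilde c$ as a concatenation of one constant-size coordinate gadget per OV-dimension (the paper's $\as{i},\bs{i},\cs{i}\in\mathbb{R}^9$) plus a constant-size separator block (the paper's $a'',b'',c'',s''\in\mathbb{R}^3$), and identify the design of the per-dimension gadget for $\mathbf{P}_2$ as the technical crux that forces $p\neq 2$. One small caveat: your justification of $\mathbf{P}_1$ via ``the midpoint is the minimizer'' is not quite how the paper argues it---the gadget blocks are not symmetric in $\tilde a,\tilde b$, so the midpoint need not be the exact minimizer; instead the paper (Lemma~\ref{main-lemma}) uses that the separator contributes a term $|\gamma_2\alpha|^p$ which, for $\gamma_2$ large enough, dominates any gain from the gadget blocks whenever the triple is orthogonal, yielding $\mathbf{P}_1$ via a two-case estimate rather than a minimizer argument.
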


\begin{proof}
 The curve $Q$ can be constructed and $\delta$ can be determined in $\mathcal{O}(nd)$ from any instance $A,B,C$ of $\FOV$. Henceforth, by Lemma \ref{equivalence-allmeasures2} the simplification problem is equivalent to $\FOV$. Thus any $\mathcal{O}(n^{3-\varepsilon} \textup{poly}(d))$ algorithm for the curve simplification problem will yield an $\mathcal{O}(n^{3-\varepsilon} \textup{poly}(d))$ algorithm for $\FOV$ as well.  
\end{proof}

It remains to construct the point $s$ and the sets $\tilde{A}$, $\tilde{B}$ and $\tilde{C}$ and determine $\delta$ in $\mathcal{O}(nd)$. We first introduce some notations. For vectors $x$ and $y$ and $\alpha \in [-\frac{1}{2},\frac{1}{2}]$, we define $P_{xy}(\alpha)$ as $(\frac{1}{2} - \alpha)x + (\frac{1}{2}+\alpha)y$.
Moreover let $u_i \in \mathbb{R}^{d}$. We write $v = \big[ u_1u_2 \ldots u_m \big]$ for the vector $v \in \mathbb{R}^{md}$ with $v[(j-1)d + k] = u_j[k]$ for any $j \in [m]$ and $k \in [d]$.
\begin{observation}
\label{vector-decomposition}
Let $u_1,u_2, \ldots ,u_m \in \mathbb{R}^d$ and $v = \big[ u_1u_2 \ldots u_m \big]$. Then we have $\norm{v}{p}^p = \sum \limits_{i \in [m]} \norm{u_i}{p}^p$.
\end{observation}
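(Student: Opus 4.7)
The plan is to prove this by direct unpacking of the definition of the $L_p$-norm together with the indexing convention just introduced for the concatenation operator $[\,\cdot\,]$. The key point is that the construction of $v$ partitions its $md$ coordinates into $m$ consecutive blocks of length $d$, where the $j$-th block coincides exactly with the coordinates of $u_j$, i.e.\ the map $(j,k) \mapsto (j-1)d + k$ is a bijection between $[m] \times [d]$ and $[md]$.

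Concretely, the plan is to start from the definition
\[
\norm{v}{p}^p \;=\; \sum_{i=1}^{md} |v[i]|^p,
\]
and then re-index the single sum as a double sum over $(j,k) \in [m] \times [d]$ via $i = (j-1)d+k$. Substituting the defining relation $v[(j-1)d+k] = u_j[k]$ yields
\[
\norm{v}{p}^p \;=\; \sum_{j=1}^{m} \sum_{k=1}^{d} |u_j[k]|^p \;=\; \sum_{j=1}^{m} \norm{u_j}{p}^p,
\]
where the last equality is again just the definition of the $L_p$-norm, now applied to each $u_j \in \mathbb{R}^d$.

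There is no real obstacle here: the identity works because we are considering the $p$-th power of the norm, which is simply the sum of $p$-th powers of the absolute values of the coordinates, so the outer $p$-th root never intervenes and the sum decomposes along the blockwise partition of coordinates. The statement would fail if we tried to write the same identity for $\norm{v}{p}$ rather than $\norm{v}{p}^p$, which is precisely why the observation is phrased in the form above and will be used later in that form.
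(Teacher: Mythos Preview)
Your proposal is correct; the paper actually states this observation without proof, as it is immediate from the definition of the $L_p$-norm and the indexing convention for the concatenation operator. Your argument is precisely the natural one-line justification that the paper omits.
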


\subsection{Cordinate gadgets}
In this section our aim is to construct points $\as{i}$, $\bs{i}$, $\cs{i}$ for $i \in \left\{0,1\right\}$ such that the distance $\norm{\cs{i} - P_{\as{j}\bs{k}}(0)}{p}$ only depends on whether the bits $i,j,k \in \left\{0,1\right\}$ seen as cordinates of vectors are orthogonal. In other words the points $\as{i}$, $\bs{i}$, $\cs{i}$ form a cordinate gadget. Formally we will prove the following lemma,

\begin{lemma}
 \label{main-lemma-cordinate}
 For any $p \neq 2$
\begin{align*}
 \norm{\cs{i} -P_{\as{j}\bs{k}}(0)}{p}^p = \left\{
                                                          \begin{array}{ll}
                                                                \beta_1    & \text{ if }  i=1,j=1,k=1\\
                                                                \beta_2   & \text{ otherwise }  
                                                                \end{array} 
                                                        \right.
 \end{align*}
 where $\beta_1 < \beta_2$.
\end{lemma}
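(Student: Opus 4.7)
The plan is to construct the six points $\as{0}, \as{1}, \bs{0}, \bs{1}, \cs{0}, \cs{1}$ as vectors in $\mathbb{R}^D$ for some absolute constant $D$, assembled coordinate by coordinate. Since $\norm{v}{p}^p = \sum_\ell |v[\ell]|^p$, the problem splits additively over coordinates, and it suffices to exhibit a constant number of single-coordinate ``atoms''---each a sextuple $(a_0^{(\ell)}, a_1^{(\ell)}, b_0^{(\ell)}, b_1^{(\ell)}, c_0^{(\ell)}, c_1^{(\ell)}) \in \mathbb{R}^6$ contributing a function $g_\ell(i,j,k) := | c_i^{(\ell)} - (a_j^{(\ell)} + b_k^{(\ell)})/2 |^p$---whose sum equals a target $f(i,j,k)$ that takes value $\beta_1$ on $(1,1,1)$ and $\beta_2 > \beta_1$ on the other seven triples. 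Expanded in the multilinear monomial basis $\{1, i, j, k, ij, ik, jk, ijk\}$ on $\{0,1\}^3$, this target has exactly two nonzero coefficients: the constant $\beta_2$ and the $ijk$-coefficient $\beta_1 - \beta_2 < 0$, so my task reduces to hitting exactly this two-coefficient function as a positive combination of atom contributions.

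The critical step, and the place where the hypothesis $p \neq 2$ enters, is to exhibit a single atom whose contribution has a strictly nonzero $ijk$-coefficient. For the atom with $a_0 = b_0 = 0$, $a_1 = b_1 = 1$, $c_0 = 1/2$, $c_1 = 0$, a direct calculation gives the eight values $\{(1/2)^p, 0, 0, (1/2)^p, 0, (1/2)^p, (1/2)^p, 1\}$ (ordered $(000),(001),\ldots,(111)$), whose $ijk$-coefficient (by M\"obius inversion on $\{0,1\}^3$) equals $1 - 2^{2-p}$. This is exactly the obstruction identified after equation~(\ref{eq:fiRepresentation}): it vanishes at $p = 2$, is strictly negative for $p \in [1,2)$, and strictly positive for $p > 2$. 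In the subcase $p > 2$, the mirrored atom $c_0 = 0,\, c_1 = 1/2$ instead yields $ijk$-coefficient $2^{2-p} - 1 < 0$. In both subcases we therefore obtain a ``primary'' atom whose contribution already pushes the $ijk$-coefficient in the required negative direction; this is exactly what fails at $p = 2$.

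To this primary atom I would adjoin a constant number of ``cancellation'' atoms to zero out the six lower-order monomial coefficients ($i,j,k,ij,ik,jk$) left over from the primary, without affecting the $ijk$-coefficient. Any atom in which one of the pairs $(a_0,a_1)$, $(b_0,b_1)$, $(c_0,c_1)$ is held constant produces a $g_\ell$ independent of the corresponding variable and hence automatically has $ijk$-coefficient zero; explicit atoms of the forms $|c_i|^p$, $|u - (a_j+b_k)/2|^p$ for a constant $u$, and $|c_i - a_j/2|^p$ (with a $(j \leftrightarrow k)$ symmetric variant) together span the full seven-dimensional lower-order subspace of functions on $\{0,1\}^3$. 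The main obstacle is that the combination must be a \emph{positive} one, since each coordinate contributes pointwise nonnegatively; I would have to verify that the linear system enforcing simultaneous cancellation of all six lower-order coefficients admits a nonnegative solution consistent with $0 \le \beta_1 < \beta_2$. The sign bookkeeping splits naturally along $p < 2$ versus $p > 2$, and the boundary case $p = 1$ may require separate care because $|x|^p$ is only piecewise linear there. Once the atoms are pinned down, the vectors $\as{i}, \bs{i}, \cs{i} \in \mathbb{R}^D$ are formed by concatenating their atom values coordinate by coordinate, and the lemma follows by directly evaluating $\norm{\cs{i} - P_{\as{j}\bs{k}}(0)}{p}^p$ at each of the eight triples $(i,j,k) \in \{0,1\}^3$ to read off $\beta_1 < \beta_2$.
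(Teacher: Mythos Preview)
Your strategy is the paper's strategy, and your primary atom---after translating by $1/2$ and scaling by $2$---is exactly the paper's first coordinate; your identification of the $ijk$-coefficient $1-2^{2-p}$ is in fact a crisper explanation of the $p=2$ obstruction than the paper's direct computation. The gap is precisely where you flag it: you never construct the cancellation atoms, and the positivity constraint is the entire difficulty. Knowing that atoms with one frozen pair span the seven-dimensional lower-order subspace is not sufficient, because you may only take \emph{conic} combinations of atoms (each coordinate contributes a nonnegative $|\cdot|^p$), and it is not evident that the specific correction needed to equalize the seven non-$(1,1,1)$ values lies in the positive cone generated by your atom families for every $p\in[1,\infty)\setminus\{2\}$. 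Your remark that $p=1$ ``may require separate care'' is a symptom of this: the feasibility of the conic system genuinely changes shape across the two regimes.

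The paper does not argue via cones at all. It writes down an explicit nine-coordinate construction with five scalar parameters $\theta_1,\ldots,\theta_5$, computes all eight values $\norm{\cs{i}-P_{\as{j}\bs{k}}(0)}{p}^p$ symbolically as sums of terms $\theta_r^p$ and $2^p\theta_r^p$, and then solves for the $\theta_r$ in two regimes. For $1\le p<2$ it sets $\theta_1=(2^{p-1}-1)^{1/p}$, $\theta_3=1$, $\theta_5=2^{(p-1)/p}$, $\theta_2=\theta_4=0$, and verifies by substitution that $\beta_1=2^p(2^{p-1}-1)<2^p=\beta_2$; for $p>2$ it sets $\theta_1=0$, $\theta_2=\theta_4=(2^p-2)^{1/p}$, $\theta_3=(2^p-4)^{1/p}$, $\theta_5=(2^{2p}-3\cdot 2^p)^{1/p}$, giving $\beta_1=2^{p+2}-8<2^{2p}-8=\beta_2$. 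To turn your proposal into a proof you would need either such an explicit assignment or a genuine feasibility argument for the conic system; the paper opts for the former and simply checks the eight numbers.
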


In Section \ref{vectorgadgets} we will use this lemma to construct the final point sets $\tilde{A}$, $\tilde{B}$ and $\tilde{C}$.

Let $\theta_1$, $\theta_2$, $\theta_3$, $\theta_4$ and $\theta_5$ be positive constants. We construct the points $\as{1}$,$\bs{1}$,$\cs{1}$ and $\as{0}$,$\bs{0}$,$\cs{0}$ in $\mathbb{R}^{9}$ as follows,

\begin{align*}
    \as{0} &= \big[ &-\theta_1  &, & 0         & , & -\theta_2    &,& 0         &,& \theta_3  &, & 2\theta_3     & , & \theta_4    &,&  -2\theta_4 &,&   0  \big]\\
    \as{1} &= \big[ &\theta_1   &, & 2\theta_1 & , & \theta_2     &,&-2\theta_2 &,& 0         &, &-\theta_3      & , &-\theta_4    & ,&  0         & , & 0 \big]\\
    \bs{0} &= \big[ &-\theta_1  &, & 0         & , & \theta_2     &,& 2\theta_2 &,& \theta_3  &, &-2\theta_3     & , &-\theta_4    & ,&  0         & , & 0   \big]\\
    \bs{1} &= \big[ &\theta_1   &, &-2\theta_1 & , & -\theta_2    &,& 0         &,& 0         &, & -\theta_3     & , & \theta_4    &, &  2\theta_4 & , & 0 \big]\\
    \cs{0} &= \big[ &0          &, & 0         & , & 0            &,& 0         &,& 0         &, & 0             & , & 0           & ,&  0         & , & \theta_5 \big]\\
    \cs{1} &= \big[ &-\theta_1  &, & 0         & , & -\theta_2    &,& 0         &,& -\theta_3 &, & 0             & , & -\theta_4   & , & 0         &, & 0  \big]
\end{align*}

From these points we can compute the points $P_{\as{i}\bs{j}}(0)$ for all $i,j \in \left\{0,1\right\}$.

\begin{align*}
    P_{\as{0}\bs{0}}(0) &= \big[  &-\theta_1 &,   &0         &,    & 0          &, & \theta_2     &, & \theta_3   &, & 0           &,& 0         &,& -\theta_4   &,&0 \big]\\ 
    P_{\as{1}\bs{0}}(0) &= \big[  & 0        &,   &\theta_1  &,    & \theta_2   &,   & 0          &, & 0          &,  & -\theta_3  &,&-\theta_4  &,& 0           &,&0 \big]\\
    P_{\as{1}\bs{1}}(0) &=  \big[ &\theta_1  &,   & 0        &,    & 0          &,   & -\theta_2  &, &-\theta_3   &, & 0           &,&0          &,& \theta_4    &,&0 \big]\\
    P_{\as{0}\bs{1}}(0) &= \big[  & 0        &,   &-\theta_1 &,    &-\theta_2   &,   & 0          &, & 0          &,  &\theta_3    &,&\theta_4   &,& 0           &,&0\big]
\end{align*}

Observe that  $\norm{\cs{0}-P_{\as{i}\bs{j}}(0)}{p}^p = \sum \limits_{r \in [5]} \theta_r^p$ for all $i,j \in \left\{0,1\right\}$. Thus all the points $P_{\as{i}\bs{j}(0)}$ are equidistant from $\cs{0}$ irrespective of the exact values of the $\theta_r$ for $r \in [5]$. Note that when $\theta_r = \theta$ for all $r \in [5]$, then $\norm{\cs{1} - P_{\as{i}\bs{j}}(0)}{p}^p = 4\theta^p + 2^p\theta^p$ for all $i,j \in \left\{0,1\right\}$. Thus all the points $P_{\as{i}\bs{j}}(0)$ are equidistant from $\cs{1}$ when all the $\theta_r$ are the same. We now determine $\theta_r$ for $r \in [5]$ such that all but one point in $\left\{P_{\as{i}\bs{j}}(0)| i,j \in \left\{0,1\right\} \right\}$ are equidistant and far from $\cs{1}$. More precisely,  

\begin{align*}
\norm{\cs{1} -P_{\as{i}\bs{j}}(0)}{p}^p = \left\{
                                                          \begin{array}{ll}
                                                                \beta_1    & \text{ if }  i=1,j=1\\
                                                                \beta_2   & \text{ otherwise }  
                                                                \end{array} 
                                                        \right.
\end{align*}
 and $\beta_1 < \beta_2$.\\
 We first quantify the distances from $\left\{\cs{0},\cs{1}\right\}$ to each of the points in $\left\{P_{\as{j}\bs{k}}(0)| j,k \in \left\{0,1\right\} \right\}$.
 
\begin{lemma}
\label{distance-c1}
 We have 
 \begin{align*}
  \norm{\cs{i} -P_{\as{j}\bs{k}}(0)}{p}^p &= \left\{
                                                          \begin{array}{ll}
                                                                \sum_{r \in [5]} \theta_r^p                  &\text{ if }  i=0\\
                                                                2\theta_2^p + 2^p\theta_3^p + 2\theta_4^p    & \text{ if } i=1, j=0,k=0\\
                                                                2\theta_1^p + 2^p \theta_2^p + 2\theta_3^p  & \text{ if } i=1, j=1,k=0\\
                                                                2\theta_1^p + 2\theta_3^p + 2^p \theta_4^p  & \text{ if } i=1, j=0,k=1\\
                                                                2^p\theta_1^p + 2\theta_2^p + 2\theta_4^p    & \text{ if } i=1,j=1,k=1
                                                                \end{array}
                                                          \right.
\end{align*}
\end{lemma}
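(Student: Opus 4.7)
My plan is to prove Lemma~\ref{distance-c1} by direct coordinate-by-coordinate computation. Using the identity $\norm{x-y}{p}^p = \sum_{\ell=1}^{9} |x[\ell]-y[\ell]|^p$, the problem reduces to subtracting the $9$-dimensional vectors listed in the construction and summing $p$-th powers of absolute values. There are only $2 \times 4 = 8$ combinations of $(i,j,k)$, but the answer depends only on the five cases listed, so it suffices to verify five distance computations.

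For the case $i=0$, I would use the fact that $\cs{0}$ is zero in coordinates $1$ through $8$ and equals $\theta_5$ in coordinate $9$, while each $P_{\as{j}\bs{k}}(0)$ is zero in coordinate $9$. Inspection of the four displayed vectors $P_{\as{j}\bs{k}}(0)$ reveals that each one has exactly four nonzero entries, and these four entries always have absolute values $\theta_1, \theta_2, \theta_3, \theta_4$ respectively (though the signs and the particular coordinates at which they appear differ between the four cases). Hence in every case $\norm{\cs{0}-P_{\as{j}\bs{k}}(0)}{p}^p = \theta_1^p + \theta_2^p + \theta_3^p + \theta_4^p + \theta_5^p = \sum_{r \in [5]} \theta_r^p$, as claimed.

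For the four cases with $i=1$, I would carry out the subtraction $\cs{1} - P_{\as{j}\bs{k}}(0)$ coordinate by coordinate; in each case the resulting difference vector has exactly five nonzero entries. Three of these entries come as a pair of equal-magnitude terms $\pm \theta_r$ contributing $2\theta_r^p$, another two entries come as a second pair of equal-magnitude terms $\pm \theta_s$ contributing $2\theta_s^p$, and the remaining single entry has magnitude $2\theta_t$ for some third index $t$, contributing $2^p \theta_t^p$. For example, when $j=k=0$ I obtain $\cs{1}-P_{\as{0}\bs{0}}(0) = (0,0,-\theta_2,-\theta_2,-2\theta_3,0,-\theta_4,\theta_4,0)$, giving $2\theta_2^p + 2^p\theta_3^p + 2\theta_4^p$. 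The other three subcases are completely analogous, with the role of the doubled coordinate rotating among $\theta_2, \theta_3, \theta_4, \theta_1$ as $(j,k)$ cycles through $(0,0),(1,0),(0,1),(1,1)$.

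The proof is pure arithmetic, so there is no conceptual obstacle; the only risk is a transcription error in identifying which coordinate of $\cs{1}$ cancels (or doubles) which coordinate of $P_{\as{j}\bs{k}}(0)$. I would guard against this by laying out the six defining vectors in a single aligned display as in the construction, performing all eight subtractions in parallel, and then checking that the three contribution types ($2^p\theta_t^p$, $2\theta_r^p$, $2\theta_s^p$) use three \emph{distinct} indices in each of the four $i=1$ subcases. This last observation is what will ultimately drive Lemma~\ref{main-lemma-cordinate}: by choosing $\theta_1,\ldots,\theta_4$ appropriately, the unique case $i=j=k=1$ (which doubles the $\theta_1$ coordinate rather than $\theta_2$, $\theta_3$, or $\theta_4$) can be forced to have strictly smaller $p$-norm than the other three $i=1$ subcases and also strictly smaller than the uniform $i=0$ value.
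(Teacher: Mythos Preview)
Your proposal is correct and is precisely the approach the paper takes: the paper does not write out a proof of this lemma at all, having already displayed the vectors $\as{i},\bs{i},\cs{i}$ and the four midpoints $P_{\as{j}\bs{k}}(0)$ explicitly, so the claim is meant to be verified exactly as you describe, by coordinate-wise subtraction and summing $p$-th powers. Your worked example for $(i,j,k)=(1,0,0)$ matches, and the remaining cases follow in the same way; the only slip is the phrase ``Three of these entries come as a pair'', which should read ``Two'' (two pairs of magnitude $\theta_r,\theta_s$ plus one doubled entry of magnitude $2\theta_t$, for a total of five nonzero coordinates).
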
 

We now set the exact values of $\theta_r$ for $r \in [5]$. We define values depending on $p$. When $1 \leq p < 2$ we set 
\[ \theta_1 = (2^{p-1}-1)^{\frac{1}{p}}, \theta_2 =0, \theta_3 =1,  \theta_4 =0, \theta_5 = 2^{\frac{p-1}{p}} \]

Now we make the following observation,

\begin{observation}
\label{cordinate-1}
When $1\leq p < 2$, then 
\begin{align*}
\norm{\cs{i} -P_{\as{j}\bs{k}}(0)}{p}^p = \left\{
                                                          \begin{array}{ll}
                                                                2^p (2^{p-1}-1) & \text{ if }  i=1, j=1,k=1\\
                                                                2^p   & \text{ otherwise }  
                                                                \end{array} 
                                                        \right.
\end{align*}
\end{observation}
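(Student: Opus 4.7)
The plan is to prove Observation~\ref{cordinate-1} by direct substitution into the five cases of Lemma~\ref{distance-c1}, using the specific choice $\theta_1 = (2^{p-1}-1)^{1/p}$, $\theta_2 = 0$, $\theta_3 = 1$, $\theta_4 = 0$, $\theta_5 = 2^{(p-1)/p}$. First I would note that these choices are well-defined in the range $1 \le p < 2$: the quantity $2^{p-1}-1$ is nonnegative (zero at $p=1$, positive for $p>1$), and all other values are clearly real and nonnegative. So the coordinate gadget points $\as{i},\bs{i},\cs{i}$ are well-defined.

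Next I would handle each case of Lemma~\ref{distance-c1}. With $\theta_2 = \theta_4 = 0$, most terms collapse. Concretely, for $i=0$ the distance is $\sum_{r\in[5]}\theta_r^p = \theta_1^p + \theta_3^p + \theta_5^p = (2^{p-1}-1) + 1 + 2^{p-1} = 2^p$. For the three "off-diagonal" $i=1$ cases, the formulas reduce as follows: $(j,k)=(0,0)$ gives $2^p\theta_3^p = 2^p$; $(j,k)=(1,0)$ gives $2\theta_1^p + 2\theta_3^p = 2(2^{p-1}-1) + 2 = 2^p$; and $(j,k)=(0,1)$ gives $2\theta_1^p + 2\theta_3^p = 2^p$ by the same arithmetic. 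Finally, the "diagonal" case $i=j=k=1$ picks up only $2^p\theta_1^p = 2^p(2^{p-1}-1)$ since the $\theta_2,\theta_4$ terms vanish.

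To conclude, I would verify the strict inequality $\beta_1 = 2^p(2^{p-1}-1) < 2^p = \beta_2$, which is equivalent to $2^{p-1} < 2$, i.e., $p < 2$; this is precisely the hypothesis of the observation, so the inequality holds. There is essentially no obstacle here: the argument is a routine substitution, and the one place the restriction $1 \le p < 2$ enters is (a) to make $\theta_1$ real and (b) to guarantee $\beta_1 < \beta_2$. The more substantive design choice — namely why these specific $\theta_r$ values realize the "orthogonality indicator" behavior — is already captured by the structure of Lemma~\ref{distance-c1}; the proposal here is merely to observe that the chosen parameters balance the four "non-triple-one" cases to a common value $2^p$ while collapsing the "triple-one" case to the strictly smaller value $2^p(2^{p-1}-1)$.
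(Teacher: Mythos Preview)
Your proposal is correct and takes essentially the same approach as the paper: both proceed by direct substitution of the values $\theta_1 = (2^{p-1}-1)^{1/p}$, $\theta_2 = \theta_4 = 0$, $\theta_3 = 1$, $\theta_5 = 2^{(p-1)/p}$ into the five cases of Lemma~\ref{distance-c1} and simplify the resulting arithmetic. Your version is in fact slightly more thorough than the paper's, since you also check that the $\theta_r$ are well-defined for $1 \le p < 2$ and explicitly verify the inequality $\beta_1 < \beta_2$ (which the paper defers to the surrounding Lemma~\ref{main-lemma-cordinate}).
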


\begin{proof}
  Substituting the values of $\theta_k$ for every $k \in [4]$ in Lemma \ref{distance-c1} we have that 
  \begin{align*}
     \norm{\cs{0} -P_{\as{j}\bs{k}}(0)}{p}^p &= 2^{p-1}-1 + 0^p + 1^p + 0^p + 2^{p-1} &= 2^{p}\\ 
     \norm{\cs{1} -P_{\as{0}\bs{0}}(0)}{p}^p &= 2 \cdot 0^p + 2^p \cdot 1^p + 2 \cdot 0^p  &= 2^p \\
     \norm{\cs{1} -P_{\as{1}\bs{0}}(0)}{p}^p &= 2 \cdot (2^{p-1}-1) + 2^p \cdot 0^p + 2 \cdot 1^p  &= 2^p\\
     \norm{\cs{1} -P_{\as{0}\bs{1}}(0)}{p}^p &= 2 \cdot (2^{p-1}-1) + 2 \cdot 1^p + 2^p \cdot 0^p  &= 2^p\\
     \norm{\cs{1} -P_{\as{1}\bs{1}}(0)}{p}^p &= 2^p \cdot (2^{p-1}-1) + 2 \cdot 0^p + 2^p \cdot 0^p &= 2^p(2^{p-1}-1)
  \end{align*}
\end{proof}

In case $p >2$. Then we set \[\theta_1 = 0, \theta_2 =(2^p-2)^{\frac{1}{p}}, \theta_3 =(2^p-4)^{\frac{1}{p}}, \theta_4 = (2^p-2)^{\frac{1}{p}}, \theta_5 = (2^{2p}-3 \cdot 2^p)^\frac{1}{p} \] We make a similar observation,

\begin{observation}
\label{cordinate-2}
When $p > 2$, then 
\begin{align*}
\norm{\cs{i} -P_{\as{j}\bs{k}}(0)}{p}^p = \left\{
                                                          \begin{array}{ll}
                                                                2^{p+2}-8    & \text{ if }  i=1,j=1, k=1\\
                                                                2^{2p}-8     & \text{ otherwise }  
                                                                \end{array} 
                                                          \right.
\end{align*}
\end{observation}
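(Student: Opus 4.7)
The plan is to mirror the proof of Observation~\ref{cordinate-1}: substitute the prescribed values of $\theta_1,\ldots,\theta_5$ into the five cases of Lemma~\ref{distance-c1} and verify that all four non-orthogonal configurations evaluate to $2^{2p}-8$, while the orthogonal case $i=j=k=1$ evaluates to the strictly smaller value $2^{p+2}-8$.

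First I would record the $p$-th powers of the chosen parameters, namely $\theta_1^p=0$, $\theta_2^p=2^p-2$, $\theta_3^p=2^p-4$, $\theta_4^p=2^p-2$, and $\theta_5^p=2^{2p}-3\cdot 2^p$. A preliminary sanity check is that these are all nonnegative so the real $p$-th roots exist; this uses $p>2$, which gives $2^p>4$ and hence $2^{2p}=2^p\cdot 2^p>3\cdot 2^p$. With this in hand, each of the five rows of Lemma~\ref{distance-c1} collapses to a short arithmetic identity. The $i=0$ row sums to $0+(2^p-2)+(2^p-4)+(2^p-2)+(2^{2p}-3\cdot 2^p)=2^{2p}-8$. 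For $(i,j,k)=(1,0,0)$ the expression $2\theta_2^p+2^p\theta_3^p+2\theta_4^p$ becomes $2(2^p-2)+2^p(2^p-4)+2(2^p-2)=2^{2p}-8$. For $(1,1,0)$ we obtain $2\cdot 0+2^p(2^p-2)+2(2^p-4)=2^{2p}-8$, and $(1,0,1)$ is essentially symmetric, also giving $2^{2p}-8$. Finally, the orthogonal case $(1,1,1)$ reads $2^p\cdot 0+2(2^p-2)+2(2^p-2)=4\cdot 2^p-8=2^{p+2}-8$.

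There is no real obstacle beyond bookkeeping; the conceptual content is already packaged into Lemma~\ref{distance-c1}, which isolates the $\theta_r^p$ contributions in precisely the structure needed for the four non-orthogonal cases to coincide while the orthogonal case sheds its large $2^p\theta_1^p$ term. The one remaining check required by Lemma~\ref{main-lemma-cordinate} is the strict inequality $\beta_1<\beta_2$, i.e.\ $2^{p+2}-8<2^{2p}-8$, which is equivalent to $p+2<2p$ and thus to the standing assumption $p>2$. Once the five substitutions and this strict inequality are verified, Observation~\ref{cordinate-2} follows immediately.
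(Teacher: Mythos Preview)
Your proposal is correct and follows essentially the same approach as the paper: both proofs substitute the prescribed values of $\theta_r^p$ into the five cases of Lemma~\ref{distance-c1} and verify the arithmetic line by line. Your added sanity checks (nonnegativity of the $\theta_r^p$ and the strict inequality $\beta_1<\beta_2$) are minor extras that the paper omits, but the core argument is identical.
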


\begin{proof}
  Substituting the values of $\theta_k$ for every $k \in [4]$ in Lemma \ref{distance-c1} we have that 
  \begin{align*}
     \norm{\cs{0} -P_{\as{j}\bs{k}}(0)}{p}^p &= 0^p + (2^p-2) + (2^p-4) + (2^p-2) + (2^{2p}-3 \cdot 2^p) &= 2^{2p}-8\\ 
     \norm{\cs{1} -P_{\as{0}\bs{0}}(0)}{p}^p &= 2 \cdot (2^p-2) + 2^p \cdot (2^p-4) + 2 \cdot(2^p-2)  &=  2^{2p}-8\\
     \norm{\cs{1} -P_{\as{1}\bs{0}}(0)}{p}^p &= 2\cdot 0^p + 2^p \cdot (2^p-2) + 2 \cdot (2^p-4)  &=  {2^{2p}-8}\\
     \norm{\cs{1} -P_{\as{0}\bs{1}}(0)}{p}^p &= 2\cdot 0^p +  2 \cdot (2^p-4) + 2^p \cdot (2^p-2)  &= 2^{2p}-8\\
     \norm{\cs{1} -P_{\as{1}\bs{1}}(0)}{p}^p &= 2^p \cdot 0^p + 2 \cdot (2^p-2) + 2 \cdot (2^p-2) &= 2^{p+2}-8
  \end{align*}
\end{proof}

Combining Observations \ref{cordinate-1} and \ref{cordinate-2} we arrive at Lemma $\ref{main-lemma-cordinate}$.

\subsection{Vector gadgets}
\label{vectorgadgets}
For every $a \in A$, $b \in B$ and $c\in C$ we introduce vectors $a',b',c'$ and $a''b'',c''$ and then concatenate the respective vectors to form $\tilde{a}$, $\tilde{b}$ and $\tilde{c}$ respectively. Intuitively $a',b',c'$ primarily helps us to ensure properties $\mathbf{P}_1$ and $\mathbf{P}_2$, while $a''b'',c''$ help us ensure the remaining properties.

\subsubsection{The vectors $a'$, $b'$, $c'$, and $s'$}
We construct the vector $s'$ and the vectors $a'$, $b'$ and $c'$  for every $a \in A$, $b \in B$ and $c \in C$ respectively,  in $\mathbb{R}^{9d}$ as follows,

\begin{align}
 a' &= \big[ \as{a[1]}, \as{a[2]}, \ldots \as{a[d]} \big]\\
 b' &= \big[ \bs{b[1]}, \bs{b[2]}, \ldots \bs{b[d]} \big]\\
 c' &= \big[ \cs{c[1]}, \cs{c[2]}, \ldots \cs{c[d]} \big]\\
 s' &= \big[ 0, 0, \ldots ,0 \big]
\end{align}
 
 We also define the sets $A' = \left\{a'\mid a \in A\right\}$, $B' = \left\{b'\mid b \in B\right\}$ and $C' = \left\{c'\mid c \in C\right\}$. We now make a technical observation about the vectors in $A',B'$, and $C'$, that will be useful later. We set $\eta_1 = \max \limits_{i \in [5]} \theta_i$. 
 
\begin{observation}
\label{technical-lowerbound}
 For any $x,y \in A' \cup B' \cup C'$, we have $\norm{x-y}{p} \leq \eta_2$ where $\eta_2 \colon = 36d\eta_1$. 
\end{observation}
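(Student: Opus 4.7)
The plan is a simple coordinate-wise bound followed by an $L_p$-norm aggregation. First I would inspect the six coordinate gadget vectors $\as{i}, \bs{i}, \cs{i}$ for $i \in \{0,1\}$ defined in Section~\ref{vectorgadgets}: every entry is either $0$, $\pm \theta_r$, or $\pm 2\theta_r$ for some $r \in [5]$, so each coordinate has absolute value at most $2\eta_1$ by the definition $\eta_1 = \max_{r \in [5]} \theta_r$. Since the vectors $a', b', c'$ are built by concatenating $d$ of these $9$-dimensional gadget blocks, the same coordinate-wise bound $|w[i]| \leq 2\eta_1$ carries over to every $w \in A' \cup B' \cup C'$ and every coordinate $i \in [9d]$.

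Next I would apply the scalar triangle inequality in each coordinate to conclude $|x[i]-y[i]| \leq 4\eta_1$ for any $x,y \in A' \cup B' \cup C'$. Summing the $p$-th powers over the $9d$ coordinates and taking $p$-th roots gives
\[
\norm{x-y}{p} \;\leq\; \bigl(9d \cdot (4\eta_1)^p\bigr)^{1/p} \;=\; (9d)^{1/p} \cdot 4\eta_1.
\]
Since $p \geq 1$ and $9d \geq 1$, we have $(9d)^{1/p} \leq 9d$, hence $\norm{x-y}{p} \leq 36 d\, \eta_1 = \eta_2$, which is the claim.

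No step is genuinely delicate here: the constant $36$ is deliberately loose, and the content of the observation is simply that pairwise distances within the $A', B', C'$ portion of the construction scale as $O(d\, \eta_1)$, independently of the bit patterns of the underlying vectors $a, b, c$. The observation will be combined later with the ``separator'' coordinates $a'', b'', c''$ (which will be of much larger magnitude) to enforce the remaining properties $\mathbf{P}_3$--$\mathbf{P}_6$, so a coarse upper bound in terms of $d\, \eta_1$ is all that is needed.
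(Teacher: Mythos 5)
Your proof is correct and matches the paper's argument: both bound each coordinate of the gadget vectors by $2\eta_1$, deduce $\abs{x[\ell]-y[\ell]} \leq 4\eta_1$, and then aggregate over the $9d$ coordinates (the paper via $\norm{x-y}{p} \leq \sum_{\ell} \abs{x[\ell]-y[\ell]}$, you via $(9d)^{1/p} \le 9d$ — a cosmetic difference). Nothing further is needed.
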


\begin{proof}
   Note that the absolute value of every cordinate of the vectors $\as{1},\bs{1},\cs{1}$ and $\as{0},\bs{0},\cs{0}$ is bounded by $2\eta_1$ (Since every cordinate is of the form $ \pm \theta_r$ or $ \pm 2\theta_r$ or $0$). Also  every cordinate of $a'$, $b'$, and $c'$, is a cordinate of one of $\as{1},\bs{1},\cs{1},\as{0},\bs{0}$ and $\cs{0}$. Therefore for any $x,y \in A' \cup B' \cup C'$ we have $\max \limits_{\ell \in [9d]} \abs{x[\ell]-y[\ell]} \leq 4\eta_1$. Hence we have $\norm{x-y}{p} \leq \sum_{\ell \in [9d]} \abs{x[\ell] -y[\ell]} \leq 9d \cdot 4\eta_1 = 36d\eta_1 = \eta_2$.
  \end{proof}

 Note that $a \in A$, $b \in B$ and $c \in C$ are non orthogonal if and only if $\#_{111}^{c,a,b} > 0$. The following Lemma shows a connection between non-orthogonality and small distance $\norm{c' - P_{a'b'}(0)}{p}$.  
  
 \begin{lemma}
  \label{vector-gadget-1}
   For any $a \in A$, $b\in B$ and $c \in C$ we have $\norm{c' - P_{a'b'}(0)}{p}^p = d\beta_2 -(\beta_2-\beta_1)\#_{111}^{c,a,b}$.
 \end{lemma}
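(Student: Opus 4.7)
The plan is to reduce the claim coordinate-by-coordinate to Lemma~\ref{main-lemma-cordinate} using the concatenation decomposition from Observation~\ref{vector-decomposition}.

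First I would observe that the midpoint operator $P_{xy}(0) = (x+y)/2$ is linear and commutes with concatenation: since $a' = \big[\as{a[1]}, \ldots, \as{a[d]}\big]$ and $b' = \big[\bs{b[1]}, \ldots, \bs{b[d]}\big]$, we have
\begin{align*}
P_{a'b'}(0) &= \tfrac{1}{2}(a' + b') = \big[\, P_{\as{a[1]}\bs{b[1]}}(0),\ P_{\as{a[2]}\bs{b[2]}}(0),\ \ldots,\ P_{\as{a[d]}\bs{b[d]}}(0) \,\big].
\end{align*}
Combining this with $c' = \big[\cs{c[1]}, \ldots, \cs{c[d]}\big]$, the vector $c' - P_{a'b'}(0)$ is itself a concatenation whose $\ell$-th block is $\cs{c[\ell]} - P_{\as{a[\ell]}\bs{b[\ell]}}(0) \in \mathbb{R}^9$.

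Next I would apply Observation~\ref{vector-decomposition} to this concatenation, yielding
\begin{align*}
\norm{c' - P_{a'b'}(0)}{p}^p &= \sum_{\ell \in [d]} \norm{\cs{c[\ell]} - P_{\as{a[\ell]}\bs{b[\ell]}}(0)}{p}^p.
\end{align*}
By Lemma~\ref{main-lemma-cordinate}, each summand equals $\beta_1$ if $a[\ell] = b[\ell] = c[\ell] = 1$, and equals $\beta_2$ otherwise. The number of coordinates $\ell$ falling in the first case is exactly $\#_{111}^{c,a,b}$, so the remaining $d - \#_{111}^{c,a,b}$ coordinates contribute $\beta_2$ each. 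Adding these up gives
\begin{align*}
\norm{c' - P_{a'b'}(0)}{p}^p &= \#_{111}^{c,a,b} \cdot \beta_1 + (d - \#_{111}^{c,a,b}) \cdot \beta_2 = d\beta_2 - (\beta_2 - \beta_1) \#_{111}^{c,a,b},
\end{align*}
which is the claim. There is no real obstacle here; the only slightly subtle point is checking that the midpoint indeed distributes over the concatenation, which is immediate from the definition of $[\,\cdot\,]$ as a coordinate-wise stacking operator. Everything else is a direct bookkeeping combination of the gadget lemma with the $L_p$-additivity of concatenated blocks.
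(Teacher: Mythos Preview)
Your proof is correct and follows essentially the same approach as the paper: decompose $\norm{c' - P_{a'b'}(0)}{p}^p$ into the $d$ block contributions via Observation~\ref{vector-decomposition}, evaluate each block by Lemma~\ref{main-lemma-cordinate}, and count. You even make explicit the step that the midpoint $P_{xy}(0)$ distributes over concatenation, which the paper leaves implicit.
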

 
 \begin{proof}
  By Lemma \ref{main-lemma-cordinate}, for any $\alpha \in [-\frac{1}{2},\frac{1}{2}]$
  \begin{align*}
     \norm{\cs{c[\ell]} - P_{\as{a[\ell]}\bs{b[\ell]}}(0)}{p}^p = \left\{
                                                                            \begin{array}{ll}
                                                                            \beta_1    & \text{ if }  c[\ell]=a[\ell]=b[\ell]=1\\
                                                                            \beta_2   & \text{ otherwise }  
                                                                            \end{array} 
                                                                      \right.  
  \end{align*} 
  By Observation $\ref{vector-decomposition}$ we have
  \begin{align*}
      \norm{c' - P_{a'b'}(0)}{p}^p &= \sum \limits_{\ell \in [d]} \norm{\cs{c[\ell]} - P_{\as{a[\ell]}\bs{b[\ell]}}(0)}{p}^p\\
                                                  &= \beta_2(d - \#_{111}^{c,a,b}) + \beta_1\#_{111}^{c,a,b}\\
                                                  &= d\beta_2 - (\beta_2-\beta_1)\#_{111}^{c,a,b}. \qedhere
  \end{align*}
 \end{proof}

\subsubsection{The vectors $a''$,$b''$, $c''$, and $s''$} 
We construct the vector $s''$ and the vectors $a''$,$b''$, and $c''$  for every $a \in A$,$b \in B$, and $c \in C$, respectively  in $\mathbb{R}^{3}$ as follows,

\begin{align*}
    a'' &= \big[\gamma_1,0,0 \big]\\
    b'' &= \big[\gamma_1,\gamma_2,0 \big]\\
    c'' &= \big[0,\frac{\gamma_2}{2},0\big]\\
    s'' &= \big[0,\frac{\gamma_2}{2},\gamma_2]
\end{align*} 
where $\gamma_1$, $\gamma_2$ are positive constants. We are now ready to define the final points of our construction, $s$ and $\tilde{a}$, $\tilde{b}$ and $\tilde{c}$ for any $a \in A$, $b \in B$ and $c \in C$ respectively.

\begin{align*}
    \tilde{a} &= \big[a',a''\big]\\
    \tilde{b} &= \big[b',b''\big]\\
    \tilde{c} &= \big[c', c''\big]\\
    s&= \big[ s',s'' \big]
\end{align*} 

 We set \[ \gamma_1 = \eta_2, \delta = (\gamma_1^p + d\beta_2 - (\beta_2-\beta_1))^{\frac{1}{p}}, \gamma_2 = \max \Bigg( 4 \delta, \eta_2 \Bigg( 1 + \frac{(\gamma_1^p + d\beta_2)^\frac{1}{p}}{(\gamma_1^p + d\beta_2)^\frac{1}{p} - \delta}\Bigg) \Bigg) \] 
 
 Note that we have constructed the point sets $\tilde{A}$, $\tilde{B}$, $\tilde{C}$, and the point $s$ and determined $\delta$ in total time $\mathcal{O}(nd)$. Therefore now it suffices to show that our point set and $\delta$ satisfy the properties $\mathbf{P}_1$, $\mathbf{P}_2$, $\mathbf{P}_3$, $\mathbf{P}_4$, $\mathbf{P}_5$, and $\mathbf{P}_6$. To this end we first show  how the distance $\norm{\tilde{c} - P_{\tilde{a}\tilde{b}}(\alpha)}{p}$ is related with $\#_{111}^{c,a,b}$ (the non orthogonality of the vectors $a$,$b$, and $c$) by the following lemma.

\begin{lemma}
\label{main-lemma}
 For any $a \in A$, $b \in B$ and $c \in C$ we have,
 \begin{itemize}
     \item $\norm{\tilde{c} - P_{\tilde{a}\tilde{b}}(0)}{p}^p = \gamma_1^p + \beta_2d - (\beta_2-\beta_1)\#_{111}^{c,a,b}$.
     \item If $\#_{111}^{c,a,b} =0$ then $\norm{\tilde{c} - P_{\tilde{a}\tilde{b}}(\alpha)}{p}^p > \delta$ for all $\alpha \in [-\frac{1}{2},\frac{1}{2}]$.
 \end{itemize}
\end{lemma}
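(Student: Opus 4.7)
The plan separates into the two items and shares a common decomposition. Writing $\tilde{a} = [a', a'']$, $\tilde{b} = [b', b'']$, and $\tilde{c} = [c', c'']$, Observation~\ref{vector-decomposition} splits the $p$-th power as
\[ \norm{\tilde{c} - P_{\tilde{a}\tilde{b}}(\alpha)}{p}^p = L(\alpha)^p + M(\alpha)^p, \]
where $L(\alpha) := \norm{c' - P_{a'b'}(\alpha)}{p}$ and $M(\alpha) := \norm{c'' - P_{a''b''}(\alpha)}{p}$. A direct computation from the three-dimensional definitions of $a''$, $b''$, $c''$ gives $P_{a''b''}(\alpha) = [\gamma_1, (\tfrac12 + \alpha)\gamma_2, 0]$ and hence $M(\alpha)^p = \gamma_1^p + |\alpha|^p \gamma_2^p$; in particular $M(0)^p = \gamma_1^p$.

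For item~(1), set $\alpha = 0$, apply Lemma~\ref{vector-gadget-1} to obtain $L(0)^p = d\beta_2 - (\beta_2 - \beta_1)\#_{111}^{c,a,b}$, and add $M(0)^p = \gamma_1^p$; summing delivers the stated formula.

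For item~(2), assume $\#_{111}^{c,a,b} = 0$, so $L(0)^p = d\beta_2$ and $\norm{\tilde{c} - P_{\tilde{a}\tilde{b}}(0)}{p} = \Delta := (\gamma_1^p + d\beta_2)^{1/p}$. The plan is to show $\norm{\tilde{c} - P_{\tilde{a}\tilde{b}}(\alpha)}{p} > \delta$ on $[-\tfrac12,\tfrac12]$ by a case split on $L(\alpha)$. If $L(\alpha)^p \geq d\beta_2 - (\beta_2 - \beta_1) = \delta^p - \gamma_1^p$, then $L(\alpha)^p + M(\alpha)^p \geq \delta^p + |\alpha|^p \gamma_2^p$, which is strictly greater than $\delta^p$ whenever $\alpha \neq 0$; at $\alpha = 0$ the strict inequality follows directly from $L(0)^p = d\beta_2 > d\beta_2 - (\beta_2 - \beta_1)$. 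In the complementary case, the identity $P_{a'b'}(\alpha) = P_{a'b'}(0) + \alpha(b' - a')$ combined with the triangle inequality and the bound $\norm{b' - a'}{p} \leq \eta_2$ of Observation~\ref{technical-lowerbound} gives $L(\alpha) \geq (d\beta_2)^{1/p} - |\alpha|\eta_2$. The case hypothesis $L(\alpha) < (d\beta_2 - (\beta_2 - \beta_1))^{1/p}$ then forces $|\alpha| > \bigl((d\beta_2)^{1/p} - (d\beta_2 - (\beta_2 - \beta_1))^{1/p}\bigr)/\eta_2$, at which point the single bound $\norm{\tilde{c} - P_{\tilde{a}\tilde{b}}(\alpha)}{p} \geq M(\alpha) \geq |\alpha|\gamma_2$ will already exceed $\delta$ provided $\gamma_2$ was chosen large enough relative to $\eta_2$.

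The main obstacle is checking that the concrete value $\gamma_2 \geq \eta_2\bigl(1 + \Delta/(\Delta - \delta)\bigr)$ indeed implements ``large enough'' here. Setting $u := (d\beta_2)^{1/p}$ and $v := (d\beta_2 - (\beta_2 - \beta_1))^{1/p}$, what is needed is $\gamma_2 \geq \delta\eta_2/(u - v)$. Exploiting the identity $u^p - v^p = \Delta^p - \delta^p = \beta_2 - \beta_1$ and the integral representation $u - v = \int_{0}^{\beta_2 - \beta_1} \tfrac{1}{p}(v^p + t)^{1/p - 1}\, dt$ (and analogously for $\Delta - \delta$ with $v^p$ replaced by $\delta^p = v^p + \gamma_1^p$), the monotonicity of $x \mapsto x^{1/p - 1}$ for $p \geq 1$ pointwise gives $u - v \geq \Delta - \delta$. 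Therefore $\delta\eta_2/(u - v) \leq \delta\eta_2/(\Delta - \delta)$, which is bounded above by the paper's choice $\eta_2(2\Delta - \delta)/(\Delta - \delta)$ since $\delta \leq 2\Delta - \delta$. The auxiliary condition $\gamma_2 \geq 4\delta$ in the definition plays the supporting role of placing the transition point $\delta/\gamma_2$ safely inside $[-\tfrac12,\tfrac12]$, so that the case analysis genuinely covers the entire domain of $\alpha$.
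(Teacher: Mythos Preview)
Your proof is correct, and both items go through as you outline. The approach, however, differs from the paper's in a meaningful way.

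The paper does not split into $L(\alpha)^p + M(\alpha)^p$. Instead it writes $\tilde{c} - P_{\tilde{a}\tilde{b}}(\alpha) = [c' - P_{a'b'}(0), -\gamma_1, -\gamma_2\alpha, 0] - [P_{a'b'}(\alpha) - P_{a'b'}(0), 0, 0, 0]$ and applies the reverse triangle inequality to the \emph{whole} vector at once, obtaining
\[
\norm{\tilde{c} - P_{\tilde{a}\tilde{b}}(\alpha)}{p} \;\geq\; \max\bigl(\Delta,\, |\alpha|\gamma_2\bigr) - |\alpha|\eta_2,
\]
with $\Delta = (\gamma_1^p + d\beta_2)^{1/p}$. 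The case split is then directly on $|\alpha|$ at the threshold $(\Delta - \delta)/\eta_2$, which plugs straight into the definition of $\gamma_2$ without any further analytic work.

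Your route is conceptually cleaner in that it separates the coordinate-gadget part $L$ from the spacing part $M$, and your Case~A exploits the exact value of $M(\alpha)^p$ rather than just a lower bound. The price is that your Case~B threshold is $(u-v)/\eta_2$ with $u = (d\beta_2)^{1/p}$ and $v = (d\beta_2 - (\beta_2-\beta_1))^{1/p}$, which does not match $\gamma_2$ on the nose; you then need the concavity argument (your integral comparison showing $u - v \geq \Delta - \delta$) to close the gap. This is perfectly valid but is extra machinery the paper avoids. One minor point: your closing sentence about $\gamma_2 \geq 4\delta$ ``placing the transition point inside $[-\tfrac12,\tfrac12]$'' is unnecessary for your argument --- your case split on $L(\alpha)$ already covers all $\alpha$ regardless; that lower bound on $\gamma_2$ is actually used elsewhere in the paper (properties $\mathbf{P}_4$--$\mathbf{P}_6$), not here.
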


\begin{proof}
 Note that
\begin{align*}
    \tilde{c} - P_{\tilde{a}\tilde{b}}(\alpha) &= \big[c' - P_{a'b'}(\alpha), -\gamma_1, -\gamma_2\alpha, 0 \big]\\
                                               &= \big[c' - P_{a'b'}(0), -\gamma_1, -\gamma_2\alpha, 0 \big] - \big[P_{a'b'}(\alpha) - P_{a'b'}(0), 0, 0, 0 \big]
\end{align*}
 Thus substituting  $\alpha$ as 0,
 \begin{align*}
    \norm{\tilde{c} - P_{\tilde{a}\tilde{b}}(0)}{p}^p &= \norm{[c'- P_{a'b'}(0),-\gamma_1]}{p}^p\\
                                                    &= \gamma_1^p + \norm{c'- P_{a'b'}(0)}{p}^p\\
                                                    &= \gamma_1^p + d\beta_2 - (\beta_2-\beta_1)\#_{111}^{c,a,b} & (\text{by Lemma \ref{vector-gadget-1}})
 \end{align*}
 Furthermore, by reverse triangle inequality we have
\begin{align*}
    \norm{\tilde{c} - P_{\tilde{a}\tilde{b}}(\alpha)}{p} &\geq \norm{\big[c' - P_{a'b'}(0), -\gamma_1, -\gamma_2\alpha, 0 \big]}{p} - \norm{\big[P_{a'b'}(\alpha) - P_{a'b'}(0), 0, 0, 0 \big]}{p}\\
                                                         & = \norm{\big[c' - P_{a'b'}(0), -\gamma_1, -\gamma_2\alpha \big]}{p} - \norm{\big[P_{a'b'}(\alpha) - P_{a'b'}(0) \big]}{p}.
                                                         \end{align*}
We bound the two summands on the right hand side. Note that $\norm{\big[c' - P_{a'b'}(0), -\gamma_1, -\gamma_2\alpha \big]}{p} \geq \max((\gamma_1^p+ d\beta_2 - (\beta_2-\beta_1)\#_{111}^{c,a,b})^{\frac{1}{p}},\abs{\alpha} \gamma_2)$. We also have $\norm{\big[P_{a'b'}(\alpha) - P_{a'b'}(0) \big]}{p} = \abs{\alpha} \norm{b-a}{p} \leq \abs{\alpha}\eta_2$ (by Observation \ref{technical-lowerbound}). Therefore when $\#_{111}^{c,a,b} = 0$, for any $\alpha \in [-\frac{1}{2},\frac{1}{2}]$ we have,
\begin{align*}
    \norm{\tilde{c} - P_{\tilde{a}\tilde{b}}(\alpha)}{p} &\geq \max((\gamma_1^p+ d\beta_2)^{\frac{1}{p}},\abs{\alpha} \gamma_2) - \abs{\alpha} \eta_2
\end{align*}

Now we consider two cases. If $\abs{\alpha} < \frac{1}{\eta_2}((\gamma_1^p + d\beta_2)^\frac{1}{p} - \delta)$, then 
\begin{align*}
    \norm{\tilde{c} - P_{\tilde{a}\tilde{b}}(\alpha)}{p} &> (\gamma_1^p+ d\beta_2)^\frac{1}{p} - ((\gamma_1^p + d\beta_2)^\frac{1}{p} - \delta)\\
                                                         &= \delta
\end{align*}

Similarly if $\abs{\alpha} \geq \frac{1}{\eta_2}((\gamma_1^p + d\beta_2)^\frac{1}{p} - \delta)$, we have 
\begin{align*}
    \norm{\tilde{c} - P_{\tilde{a}\tilde{b}}(\alpha)}{p} &\geq \abs{\alpha} \gamma_2 - \abs{\alpha} \eta_2\\
                                                         &= \abs{\alpha}(\gamma_2-\eta_2)\\
                                                         & \geq \frac{1}{\eta_2}(\gamma_1^p + d\beta_2)^\frac{1}{p} - \delta) \cdot \eta_2 \bigg(\frac{(\gamma_1^p + d\beta_2)^\frac{1}{p}}{(\gamma_1^p 															+ d\beta_2)^\frac{1}{p} - \delta} \bigg) & (\text{ substituting } \gamma_2 \text{ and } \alpha)\\
                                                         &= (\gamma_1^p + d\beta_2)^\frac{1}{p}\\
                                                         &> \delta.
\end{align*}
Combining the two cases, we arrive at the second result of the lemma.
\end{proof}

We now verify properties $\mathbf{P}_1$, $\mathbf{P}_2$, $\mathbf{P}_3$, $\mathbf{P}_4$, $\mathbf{P}_5$, and $\mathbf{P}_6$.

\begin{lemma}[$\mathbf{P}_2$]
\label{Property-1}
  For any $a\in A$, $b\in B$ and $c \in C$ we have $\norm{\tilde{c} - P_{\tilde{a}\tilde{b}}(0)}{p} \leq \delta$ if and only if $\#_{111}^{c,a,b} \geq 1$ or equivalently when $\sum _ {\ell \in [d]} a[\ell] \cdot b[\ell] \cdot c[\ell] \neq 0$.
\end{lemma}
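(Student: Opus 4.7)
The plan is to invoke Lemma~\ref{main-lemma} directly and substitute the definition of $\delta$. First I would recall that $\delta^p = \gamma_1^p + d\beta_2 - (\beta_2 - \beta_1)$ by the definition of $\delta$ given just before Lemma~\ref{main-lemma}, and that by the first bullet of Lemma~\ref{main-lemma},
\begin{equation*}
\norm{\tilde{c} - P_{\tilde{a}\tilde{b}}(0)}{p}^p = \gamma_1^p + d\beta_2 - (\beta_2 - \beta_1)\,\#_{111}^{c,a,b}.
\end{equation*}

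Since $\norm{\cdot}{p} \leq \delta$ iff $\norm{\cdot}{p}^p \leq \delta^p$, the inequality $\norm{\tilde{c} - P_{\tilde{a}\tilde{b}}(0)}{p} \leq \delta$ becomes, after cancelling the common term $\gamma_1^p + d\beta_2$,
\begin{equation*}
-(\beta_2 - \beta_1)\,\#_{111}^{c,a,b} \leq -(\beta_2 - \beta_1).
\end{equation*}

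By Lemma~\ref{main-lemma-cordinate} we have $\beta_1 < \beta_2$, so $\beta_2 - \beta_1 > 0$, and dividing by this positive quantity and flipping the inequality gives $\#_{111}^{c,a,b} \geq 1$. Since $\#_{111}^{c,a,b}$ counts the coordinates $\ell \in [d]$ with $a[\ell] = b[\ell] = c[\ell] = 1$, this is equivalent to $\sum_{\ell \in [d]} a[\ell]\cdot b[\ell]\cdot c[\ell] \neq 0$, establishing the claim. There is no real obstacle here: all the work has been front-loaded into the careful choice of the constants $\theta_r$ (which ensures Lemma~\ref{main-lemma-cordinate}) and into the definition of $\delta$, which was tuned precisely so that the threshold lies exactly between $\#_{111}^{c,a,b} = 0$ and $\#_{111}^{c,a,b} \geq 1$.
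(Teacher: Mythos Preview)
Your proof is correct and follows essentially the same approach as the paper: both invoke the first bullet of Lemma~\ref{main-lemma}, substitute the definition of $\delta$, and use $\beta_1 < \beta_2$ from Lemma~\ref{main-lemma-cordinate} to conclude. The only cosmetic difference is that the paper argues by the two cases $\#_{111}^{c,a,b} \geq 1$ and $\#_{111}^{c,a,b} = 0$ separately, whereas you manipulate the inequality directly after taking $p$-th powers.
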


\begin{proof}
 By Lemma \ref{main-lemma} we have that $\norm{\tilde{c} - P_{\tilde{a}\tilde{b}}(0)}{p} = (\gamma_1^p + d\beta_2 - (\beta_2-\beta_1)\#_{111}^{c,a,b})^{\frac{1}{p}}$. Therefore, if $\#_{111}^{c,a,b} \geq 1$ then $\norm{\tilde{c} - P_{\tilde{a}\tilde{b}}(0)}{p}\leq \delta$. Conversely if $\#_{111}^{c,a,b} = 0$  then $\norm{\tilde{c} - P_{\tilde{a}\tilde{b}}(0)}{p} = (\gamma_1^p + d\beta_2)^{\frac{1}{p}}  > (\gamma_1^p + d\beta_2 - (\beta_2-\beta_1))^{\frac{1}{p}} =  \delta$.
\end{proof}

\begin{lemma}[$\mathbf{P}_1$]
\label{Property-2}
 For any $a\in A$, $b\in B$ and $c \in C$ we have $\norm{\tilde{c} - P_{\tilde{a}\tilde{b}}(\alpha)}{p} \leq \delta$ for any $\alpha \in [-\frac{1}{2},\frac{1}{2}]$, if and only if $\norm{\tilde{c} - P_{\tilde{a}\tilde{b}}(0)}{p} \leq \delta$.  
\end{lemma}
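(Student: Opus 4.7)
I prove the biconditional by addressing the two directions separately.

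The forward direction—if $\norm{\tilde c - P_{\tilde a \tilde b}(\alpha)}{p} \le \delta$ for every $\alpha \in [-1/2, 1/2]$, then $\norm{\tilde c - P_{\tilde a \tilde b}(0)}{p} \le \delta$—is immediate by specializing the universal hypothesis to $\alpha = 0 \in [-1/2, 1/2]$.

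For the reverse direction, assume $\norm{\tilde c - P_{\tilde a \tilde b}(0)}{p} \le \delta$. Lemma \ref{main-lemma} (first part), together with the explicit value chosen for $\delta$, translates this hypothesis into $\#_{111}^{c,a,b} \ge 1$, i.e., the vectors $a,b,c$ are non-orthogonal. I would then, for an arbitrary $\alpha \in [-1/2, 1/2]$, decompose the distance via Observation \ref{vector-decomposition} as
\[
\norm{\tilde c - P_{\tilde a \tilde b}(\alpha)}{p}^p = \norm{c' - P_{a'b'}(\alpha)}{p}^p + \norm{c'' - P_{a''b''}(\alpha)}{p}^p,
\]
evaluate the second-gadget contribution explicitly from the definitions of $a'', b'', c''$, and refine the analysis of the first-gadget contribution (extending Lemma \ref{vector-gadget-1} beyond $\alpha=0$) using the coordinate-wise structure of the points $\as{\cdot}, \bs{\cdot}, \cs{\cdot}$ developed in Section~\ref{vectorgadgets}.

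The main obstacle is obtaining a uniform upper bound of $\delta^p$ on the quantity above for all $\alpha \in [-1/2,1/2]$. A direct expansion shows that the second-gadget contribution equals $\gamma_1^p + |\alpha|^p \gamma_2^p$. Because $\gamma_2 \ge 4\delta$ was chosen large enough to separate the endpoints $\tilde a, \tilde b$ from the points in $\tilde C$ (which will be essential for Properties $\mathbf{P}_4$--$\mathbf{P}_6$), this single term already approaches $(2\delta)^p > \delta^p$ as $|\alpha| \to 1/2$. Consequently, a purely coordinate-wise ``plug-in'' bound cannot close the argument without additional cancellation coming from the first-gadget term—cancellation which does not appear to be present in the construction. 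To resolve this I would either (i) show that the universally-quantified form simplifies, via the structure of the construction, to checking the endpoints of an interval around $\alpha=0$ on which $\norm{\cdot}{p}$ is monotone (invoking convexity of $\alpha \mapsto \norm{\tilde c - P_{\tilde a\tilde b}(\alpha)}{p}^p$ via Fact \ref{convexity}-style arguments), or (ii) reinterpret ``for any $\alpha$'' in the spirit of Property $\mathbf{P}_1$ in the properties list—namely, as the existential ``there is a point $x$ on $\overline{\tilde a\tilde b}$ with $\norm{x-\tilde c}{p}\le\delta$.'' Under the latter reading the reverse direction is trivial ($\alpha=0$ witnesses existence), and the forward direction follows immediately by combining the contrapositive of Lemma \ref{main-lemma} (second part)—which gives $\norm{\tilde c - P_{\tilde a\tilde b}(\alpha)}{p} > \delta$ for all $\alpha$ whenever $\#_{111}^{c,a,b}=0$—with Lemma \ref{main-lemma} (first part) to conclude $\norm{\tilde c - P_{\tilde a\tilde b}(0)}{p}\le\delta$ in the complementary case.
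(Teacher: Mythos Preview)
Your option (ii) is exactly right and is precisely what the paper does. The phrase ``for any $\alpha$'' in the lemma is the existential ``for some $\alpha$'' (cf.\ the statement of property $\mathbf{P}_1$ in the overview, which speaks of \emph{there is a point $x$ on the segment}). Under this reading: the ``if'' direction ($\alpha=0$ works) is trivial; for the ``only if'' direction the paper argues contrapositively---if $\norm{\tilde c - P_{\tilde a\tilde b}(0)}{p}>\delta$ then $\#_{111}^{c,a,b}=0$ by Lemma~\ref{Property-1}, and then Lemma~\ref{main-lemma} (second part) gives $\norm{\tilde c - P_{\tilde a\tilde b}(\alpha)}{p}>\delta$ for every $\alpha$. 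Your option (ii) is the same argument phrased directly rather than contrapositively.

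Your extended analysis under the universal reading is not needed; indeed, as you correctly observe, the universal statement is simply false for the construction (the $|\alpha|^p\gamma_2^p$ term alone exceeds $\delta^p$ near $|\alpha|=1/2$), so option (i) is a dead end. You should drop that branch and commit to option (ii).
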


\begin{proof}
 The \say{if} statement is trivial as $\norm{\tilde{c} - P_{\tilde{a}\tilde{b}}(\alpha)}{p} \leq \delta$ for $\alpha=0$. For the \say{only if} case, since $\norm{\tilde{c} - P_{\tilde{a}\tilde{b}}(0)}{p} > \delta$, from Lemma \ref{Property-1} it follows that $\#_{111}^{c,a,b} = 0$. By Lemma $\ref{main-lemma}$ we obtain $\norm{\tilde{c} - P_{\tilde{a}\tilde{b}}(\alpha)}{p}^p > \delta$ for all $\alpha \in [-\frac{1}{2},\frac{1}{2}]$. Therefore there exists no $\alpha \in [-\frac{1}{2},\frac{1}{2}]$ such that $\norm{\tilde{c} - P_{\tilde{a}\tilde{b}}(\alpha)}{p} \leq \delta$.   
\end{proof}

\begin{lemma}[$\mathbf{P}_3$]
 We have $\norm{x-y}{p} \leq \delta$ for all $x,y \in \tilde{A}$, and for all $x,y \in \tilde{B}$ and for all $x,y \in \tilde{C}$.
\end{lemma}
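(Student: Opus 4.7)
The plan is to prove the three bounds ($\tilde A$, $\tilde B$, $\tilde C$) separately by direct computation, leveraging the fact that the vectors within each set share a common second block ($a''$, $b''$, or $c''$).

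First, for $x, y \in \tilde A$ with $x = \tilde a_i = [a_i', a'']$ and $y = \tilde a_j = [a_j', a'']$, since the $a''$ parts coincide, $\norm{x - y}{p}^p = \norm{a_i' - a_j'}{p}^p$. By Observation \ref{vector-decomposition}, this equals $\sum_{\ell=1}^d \norm{\as{a_i[\ell]} - \as{a_j[\ell]}}{p}^p$. Each summand is either $0$ (matching bits) or $\norm{\as{0} - \as{1}}{p}^p$ (differing bits), so $\norm{x - y}{p}^p \leq d \cdot \norm{\as{0} - \as{1}}{p}^p$. The identical argument yields $\norm{x - y}{p}^p \leq d \cdot \norm{\bs{0} - \bs{1}}{p}^p$ for $x, y \in \tilde B$ and $\norm{x - y}{p}^p \leq d \cdot \norm{\cs{0} - \cs{1}}{p}^p$ for $x, y \in \tilde C$.

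Second, I would evaluate each of the three norms in closed form from the coordinate-wise descriptions. The case of $\tilde C$ is the cleanest: $\cs{0} - \cs{1} = [\theta_1, 0, \theta_2, 0, \theta_3, 0, \theta_4, 0, \theta_5]$, so $\norm{\cs{0} - \cs{1}}{p}^p = \sum_{r=1}^5 \theta_r^p$, which is exactly the value of $\beta_2$ read off the $i=0$ case of Lemma \ref{distance-c1}. Hence the required bound for $\tilde C$ reduces to $d\beta_2 \leq \gamma_1^p + d\beta_2 - (\beta_2 - \beta_1)$, i.e., $\beta_2 - \beta_1 \leq \gamma_1^p$. For $\tilde A$ and $\tilde B$, writing out the coordinate-wise differences gives $\norm{\as{0} - \as{1}}{p}^p = 2^{p+1}(\theta_1^p + \theta_2^p + \theta_4^p) + (1 + 3^p)\theta_3^p$ and the symmetric expression for $\bs{0} - \bs{1}$.

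Finally, I would substitute the explicit values of $\theta_1, \ldots, \theta_5$ (distinguishing $1 \leq p < 2$ from $p > 2$, as in Observations \ref{cordinate-1} and \ref{cordinate-2}) into both sides and verify the inequality $d \cdot \norm{\cdot}{p}^p \leq \delta^p = \gamma_1^p + d\beta_2 - (\beta_2 - \beta_1)$ in each of the three subcases. The anticipated main obstacle lies in the $\tilde A$ and $\tilde B$ computations: since $\norm{\as{0} - \as{1}}{p}^p$ involves coefficients such as $2^{p+1}$ and $1 + 3^p$ rather than the clean unit multipliers that appear in the midpoint distances defining $\beta_2$, the resulting inequalities are sensitive to the precise choice of $\theta_r$, and verifying them uniformly across both ranges of $p$ is where the bookkeeping is most delicate.
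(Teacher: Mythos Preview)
Your reduction to the primed blocks is the same first move the paper makes, but from there the paper takes a much shorter route: it simply invokes Observation~\ref{technical-lowerbound} to get $\norm{a_1'-a_2'}{p}\le\eta_2$ and then chains $\eta_2<\gamma_1\le\delta$, with no per-coordinate computation at all. Your explicit evaluation of $\norm{\as{0}-\as{1}}{p}^p$, $\norm{\bs{0}-\bs{1}}{p}^p$, $\norm{\cs{0}-\cs{1}}{p}^p$ followed by case-by-case substitution is thus a genuinely different (and more laborious) approach.

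There is, however, a real gap in your plan, and it is not just bookkeeping. With the paper's stated choice $\gamma_1=\eta_1$, the inequalities you intend to verify in your final step are \emph{false}. Take $p=1$: then $\theta_1=\theta_2=\theta_4=0$, $\theta_3=\theta_5=1$, so $\norm{\cs{0}-\cs{1}}{1}=2$ and your $\tilde C$ bound becomes $2d\le\delta=2d-1$, which fails; for $\tilde A$ one gets $\norm{\as{0}-\as{1}}{1}=4$, so $4d\le 2d-1$ fails even more badly. (The paper's own chain $\eta_2<\gamma_1$ is equally broken, since $\eta_2=36d\eta_1=36d\gamma_1$.) The issue is not your method but the parameter: the construction needs $\gamma_1$ large enough that $\gamma_1^p$ absorbs the intra-cluster diameters --- for instance $\gamma_1>\eta_2$ makes the paper's one-line argument valid and renders all three of your target inequalities trivial, since then $\delta\ge\gamma_1>\eta_2\ge\norm{x'-y'}{p}$ directly from Observation~\ref{technical-lowerbound}. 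So rather than pushing through the explicit substitutions (which cannot succeed as stated), you should flag the parameter choice and observe that once $\gamma_1$ is corrected the lemma follows immediately from Observation~\ref{technical-lowerbound}.
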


\begin{proof}
  We prove the case of $x,y \in \tilde{A}$; the other cases are analogous. Consider any $\tilde{a}_1, \tilde{a}_2 \in \tilde{A}$. Note that $\norm{\tilde{a}_1-\tilde{a}_2}{p} = \norm{a'_1 - a'_2}{p}$. By Observation \ref{technical-lowerbound}, we have $\norm{a'_1 - a'_2}{p} \leq \eta_2 \leq \gamma_1 \leq \delta$.
\end{proof}

We now prove properties $\mathbf{P}_4$, $\mathbf{P}_5$ and $\mathbf{P}_6$.  

\begin{lemma}[$\mathbf{P}_4$,$\mathbf{P}_5$, and $\mathbf{P}_6$] \label{lastlemma}
 For any $a \in A$, $b \in B$ and $c \in C$ and $\alpha \in [-\frac{1}{2},\frac{1}{2}]$ the following properties hold.
 \begin{enumerate}
  \item For any $y_1,y_2 \in \left\{s\right\} \cup \tilde{B} \cup \tilde{C}$, we have $\norm{\tilde{a}-P_{y_1y_2}(\alpha)}{p} > \delta$ for all $\tilde{a} \in \tilde{A}$.
  \item For any $y_1,y_2 \in \left\{s\right\} \cup \tilde{A} \cup \tilde{C}$, we have $\norm{\tilde{b}-P_{y_1y_2}(\alpha)}{p} > \delta$ for all $\tilde{b} \in \tilde{B}$.   
  \item For any $y \in \tilde{A} \cup \tilde{B}$, we have $\norm{\tilde{c}-P_{sy}(\alpha)}{p} > \delta$ for all $\tilde{c} \in \tilde{C}$.
 \end{enumerate}
\end{lemma}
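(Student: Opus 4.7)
My plan is to prove each of the three properties by inspecting a single (or at most two) of the three ``gadget'' coordinates $a''$, $b''$, $c''$, $s''$, which were designed precisely to force large $L_p$-distances in these directions. Concretely, I will use the fact that $\|\cdot\|_p$ is lower-bounded by the absolute value of any single coordinate, so that controlling one or two coordinates of $\tilde{a}-P_{y_1y_2}(\alpha)$ (respectively $\tilde{b}-P_{y_1y_2}(\alpha)$, $\tilde{c}-P_{sy}(\alpha)$) is enough to beat $\delta$.

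For property~$\mathbf{P}_4$, I would focus on the second gadget coordinate. Every vertex of $\{s\}\cup\tilde{B}\cup\tilde{C}$ has second gadget coordinate at least $\gamma_2/2$ (namely $\gamma_2/2$ for $s$ and every $\tilde{c}$, and $\gamma_2$ for every $\tilde{b}$), while $\tilde{a}$ has $0$ in this position. Since $P_{y_1y_2}(\alpha)$ is a convex combination of $y_1$ and $y_2$ for $\alpha\in[-\tfrac12,\tfrac12]$, its second gadget coordinate is still $\ge \gamma_2/2$. Hence $\|\tilde{a}-P_{y_1y_2}(\alpha)\|_p \ge \gamma_2/2 \ge 2\delta > \delta$, using $\gamma_2 \ge 4\delta$. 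Property~$\mathbf{P}_5$ is completely symmetric: every vertex of $\{s\}\cup\tilde{A}\cup\tilde{C}$ has second gadget coordinate at most $\gamma_2/2$, while $\tilde{b}$ has $\gamma_2$, so the same single-coordinate comparison gives distance $\ge \gamma_2/2 > \delta$.

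The main obstacle is property~$\mathbf{P}_6$, where neither the second nor the third gadget coordinate alone suffices: as $\alpha$ ranges over $[-\tfrac12,\tfrac12]$, the third coordinate of $P_{sy}(\alpha)$ equals $(\tfrac12-\alpha)\gamma_2$ (which shrinks to $0$ as $\alpha\to\tfrac12$), while the discrepancy in the second coordinate from $\tilde{c}$'s $\gamma_2/2$ equals $(\tfrac12+\alpha)\gamma_2/2$ (which shrinks as $\alpha\to-\tfrac12$). A direct check for both cases $y\in\tilde{A}$ and $y\in\tilde{B}$ shows that both formulas hold uniformly (the $b''$ case just shifts through $\gamma_2/2$ and gives the same magnitude). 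Using $\|(x_1,x_2)\|_p\ge \max(|x_1|,|x_2|)$, I get
\[
\|\tilde{c}-P_{sy}(\alpha)\|_p \;\ge\; \max\!\bigl((\tfrac12-\alpha)\gamma_2,\;(\tfrac12+\alpha)\tfrac{\gamma_2}{2}\bigr).
\]
The two linear functions on the right cross at $\alpha=\tfrac16$ with common value $\gamma_2/3$, and their maximum is minimized exactly there; thus the bound is uniformly $\ge \gamma_2/3 \ge \tfrac{4\delta}{3} > \delta$.

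Putting the three parts together completes the proof. The only genuinely non-trivial computation is the max-of-two-linear-functions argument in~$\mathbf{P}_6$; all other steps reduce to reading off the explicit values of $s''$, $a''$, $b''$, $c''$ and invoking $\gamma_2\ge 4\delta$. No properties of the ``primed'' vectors $a'$, $b'$, $c'$ need to be used at all, since single gadget coordinates already provide the required slack.
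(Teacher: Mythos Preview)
Your proposal is correct and follows essentially the same approach as the paper: for $\mathbf{P}_4$ and $\mathbf{P}_5$ you use the second gadget coordinate exactly as the paper does, and for $\mathbf{P}_6$ you bound $\|\tilde{c}-P_{sy}(\alpha)\|_p$ below by the maximum of the second and third gadget-coordinate discrepancies and minimize this max-of-two-linear-functions to obtain $\gamma_2/3$, which is precisely the paper's argument (the paper parametrizes as $P_{y s}$ rather than $P_{s y}$, a harmless sign flip in $\alpha$).
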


\begin{proof}
  Since we set $\gamma_2 $ to at least $4\delta$, we have $\frac{\gamma_2}{2} > \delta$.  We first prove (1). For any $y_1,y_2 \in \left\{s\right\} \cup \tilde{B} \cup \tilde{C}$ we have $y_1[9d+2] \geq \frac{\gamma_2}{2}$ and $y_2[9d+2] \geq \frac{\gamma_2}{2}$. Therefore for any $\alpha \in [-\frac{1}{2},\frac{1}{2}]$ we have $P_{y_1y_2}(\alpha)[9d+2] \geq \frac{\gamma_2}{2}$.  For any $\tilde{a} \in \tilde{A}$ we have $\tilde{a}[9d+2] = 0$. Hence we obtain $\norm{\tilde{a}-P_{\tilde{y_1}\tilde{y_2}}(\alpha)}{p} \geq \abs{\tilde{a}[9d+2] - P_{y_1y_2}(\alpha)[9d+2]} \geq \frac{\gamma_2}{2} > \delta$.
  
  We now make a symmetric argument for (2). For any $y_1,y_2 \in \left\{s\right\} \cup \tilde{A} \cup \tilde{C}$ we have $y_1[9d+2] \leq \frac{\gamma_2}{2}$ and $y_2[9d+2] \leq \frac{\gamma_2}{2}$.  Therefore for any $\alpha \in [-\frac{1}{2},\frac{1}{2}]$ we have $P_{y_1y_2}(\alpha)[9d+2] \leq \frac{\gamma_2}{2}$. For any $\tilde{b} \in \tilde{B}$ we have $\tilde{b}[9d+2] = \gamma$. Like earlier we obtain $\norm{\tilde{b}-P_{\tilde{y_1}\tilde{y_2}}(\alpha)}{p} \geq \abs{\tilde{b}[9d+2] - P_{y_1y_2}(\alpha)[9d+2]} \geq \frac{\gamma_2}{2} > \delta$.
  
  We now show (3). For this we state a simple observation.
 
  \begin{observation}
   \label{largedistanceC}
    For any $\alpha \in [-\frac{1}{2},\frac{1}{2}]$ we have,
    \begin{itemize}
        \item $\norm{c''-P_{b''s''}(\alpha)}{p} \geq \frac{\gamma_2}{3} >\delta$.
        \item $\norm{c''-P_{a''s''}(\alpha)}{p} \geq \frac{\gamma_2}{3} >\delta$.
    \end{itemize}
  \end{observation}
  
   \begin{proof}
     Observe that
      \begin{align*}
          c'' -P_{b''s''}(\alpha) &= [-(\tfrac{1}{2} - \alpha)\gamma_1, (\tfrac{\alpha}{2} - \tfrac{1}{4})\gamma_2, -(\tfrac{1}{2}+\alpha) \gamma_2]\\
          c'' -P_{a''s''}(\alpha) &= [-(\tfrac{1}{2} - \alpha)\gamma_1, -(\tfrac{\alpha}{2} - \tfrac{1}{4})\gamma_2, -(\tfrac{1}{2}+\alpha) \gamma_2]
      \end{align*}
       
      It follows that for $\alpha \in [-\frac{1}{2},\frac{1}{2}]$ we have
       \begin{align*}
          \norm{c'' -P_{b''s''}(\alpha)}{p} &\geq  \max\big(\abs{(\tfrac{\alpha}{2} - \tfrac{1}{4}\big)\gamma_2}, \abs{(\tfrac{1}{2}+\alpha) \gamma_2}) &= \gamma_2 \cdot  \max(\abs{\tfrac{\alpha}{2} - \tfrac{1}{4}}, \abs{\tfrac{1}{2}+\alpha}) &= \tfrac{\gamma_2}{3}\\
          \norm{c'' -P_{a''s''}(\alpha)}{p} &\geq  \max\big(\abs{(\tfrac{\alpha}{2} - \tfrac{1}{4})\gamma_2}, \abs{(\tfrac{1}{2}+\alpha) \gamma_2}\big) &= \gamma_2 \cdot  \max(\abs{\tfrac{\alpha}{2} - \tfrac{1}{4}}, \abs{\tfrac{1}{2}+\alpha}) &= \tfrac{\gamma_2}{3}
      \end{align*}
       Again since we set $\gamma_2 $ to at least $4\delta$, we have $\frac{\gamma_2}{3} > \delta$.  
   \end{proof}  
  
  For any $y \in \tilde{A} \cup \tilde{B}$, we define $y'' = a''$ if $y = \tilde{a} \in \tilde{A}$ and $y'' = b''$ if $y = \tilde{b} \in \tilde{B}$. Then by Observation \ref{largedistanceC} we have $\norm{\tilde{c} - P_{sy}(\alpha)}{p} \geq \norm{c'' - P_{y''s''}(\alpha)}{p} > \delta$. This finishes the proof of Lemma~\ref{lastlemma}, and thus of Theorem~\ref{mainLower}.
\end{proof}

\section{Discussion of the \boldmath$\forall \forall \exists$-OV Hypothesis} \label{sec:discussion}

The \FOV hypothesis, that we introduced in this paper, is a special case of the following more general hypothesis (by setting $k=3$ and $Q_1 = Q_2 = \forall$).

\medskip \noindent
\textbf{Quantified-\boldmath$k$-OV Hypothesis:} \emph{Problem:} Fix quantifiers $Q_1, \ldots, Q_{k-1} \in \{\forall,\exists\}$. Given sets $A_1,\ldots,A_k \subseteq \{0,1\}^d$ of size $n$, determine whether $Q_1 a_1 \in A_1\colon \ldots Q_{k-1} a_{k-1} \in A_{k-1}\colon \exists a_k \in A_k$ such that $a_1,\ldots,a_k$ are orthogonal. \\ \emph{Hypothesis:} For any $k \ge 1$, any $Q_1,\ldots,Q_{k-1}$, and any $\varepsilon > 0$ the problem is not in time $\dO(n^{k-\varepsilon})$.
\medskip

These problems were studied by Gao et al.~\cite{GaoImpagliazzo17}, who showed that (even for every fixed $k$ and $Q_1,\ldots,Q_{k-1}$) the Quantified-$k$-OV hypothesis implies the 2-OV hypothesis. Unfortunately, there is no reduction known in the opposite direction. In fact, Carmosino et al.~\cite{CarmosinoPaturi16} established barriers for a reduction in the other direction, see also the discussion of the Hitting Set\footnote{The Hitting Set problem considered in~\cite{Amirhittingset16} is equivalent to $\forall \exists$-OV.} hypothesis in~\cite{Amirhittingset16}. 
Hence, we cannot base the hardness of Quantified-$k$-OV on the more standard $k$-OV hypothesis. 

It is well-known that the following Strong Exponential Time Hypothesis implies the $k$-OV hypothesis~\cite{Wil04}.

\medskip \noindent
\textbf{Strong Exponential Time Hypothesis (SETH)~\cite{ImpagliazzoPZ01}:} \emph{Problem:} Given a $q$-CNF formula $\phi$ over variables $x_1,\ldots,x_n$, determine whether there exist $x_1,\ldots,x_n$ such that $\phi$ evaluates to true. \\ \emph{Hypothesis:} For any $\varepsilon > 0$ there exists $q \ge 3$ such that the problem is not in time $O(2^{(1-\varepsilon)n})$.
\medskip

Similarly, we can pose a hypothesis for Quantified Satisfiability, that implies the Quantified-$k$-OV hypothesis (by essentially the same proof as in~\cite{Wil04}). 


\medskip \noindent
\textbf{Quantified-SETH:} \emph{Problem:} Given a $q$-CNF formula $\phi$ over variables $x_1,\ldots,x_n$, determine whether for all $x_1,\ldots,x_{\alpha(1) n}$ there exist $x_{\alpha(1) n + 1},\ldots,x_{\alpha(2) n}$ such that ... such that for all $x_{\alpha(2s) n+1},\ldots,x_{\alpha(2s+1) n}$ there exist $x_{\alpha(2s+1) n+1},\ldots,x_{n}$ such that $\phi$ evaluates to true. \\
\emph{Hypothesis:} For any $s \ge 0$, any $0 \le \alpha(1) < \ldots < \alpha(2s+1) < 1$, and any $\varepsilon > 0$ there exists $q \ge 3$ such that the problem is not in time $O(2^{(1-\varepsilon)n})$.
\medskip

Although Quantified Satisfiability is one of the fundamental problems studied in complexity theory (known to be PSPACE-complete), no algorithm violating Quantified-SETH is known. 

Hence, Quantified-SETH and the Quantified-$k$-OV hypothesis are two hypotheses that are even stronger than the \FOV hypothesis that we used in this paper to prove a conditional lower bound. The fact that even these stronger hypotheses have not been falsified in decades of studying these problems, we view as evidence that the \FOV hypothesis is a plausible conjecture.

\medskip


\end{document}